\documentclass[11pt]{article}
\usepackage{amsfonts}
\usepackage{indentfirst}
\usepackage{amsmath}
\usepackage{subcaption}
\usepackage{latexsym,amsmath,amssymb,paralist,graphicx}
\usepackage{enumerate}
\usepackage{multirow}
\usepackage{comment}
\usepackage[compress]{cite}
\usepackage{algorithm}
\usepackage{algorithmic}
\usepackage[title]{appendix}
\usepackage{xcolor}
\usepackage{graphicx}
\usepackage{epstopdf}
\usepackage{bm}
\usepackage{hyperref}
\usepackage[normalem]{ulem}
\usepackage{bbm}
\usepackage{mathrsfs}
\usepackage{booktabs}
\usepackage{slashed}
\usepackage{authblk}
\numberwithin{equation}{section}
\newtheorem{theorem}{Theorem}[section]
\newtheorem{lemma}{Lemma}[section]
\newtheorem{assumption}{Assumption}[section]
\newtheorem{proposition}{Proposition}[section]

\newtheorem{remark}{Remark}[section]
\newtheorem{definition}{Definition}[section]
\newenvironment{proof}{{\noindent\it Proof:\quad}}{\hfill$\square$\par}
\newcommand{\qed}{\nobreak \ifvmode \relax \else
      \ifdim\lastskip<1.5em \hskip-\lastskip
      \hskip1.5em plus0em minus0.5em \fi \nobreak
      \vrule height0.75em width0.5em depth0.25em\fi}

\def\XXint#1#2#3{{\setbox0=\hbox{$#1{#2#3}{\int}$}
\vcenter{\hbox{$#2#3$}}\kern-.51\wd0}}

\newenvironment{proofsketch}
{\noindent\textbf{Proof sketch.}}
{\hfill$\square$\par}
\begin{document}

\title{Responsive Distribution of G-normal Random Variables}

\author[1]{Ziting Pei}
\author[2]{Shige Peng}
\author[3,4]{Xingye Yue}
\author[3]{Xiaotao Zheng\thanks{Corresponding author. Email: xiaotaozheng99@gmail.com.}}

\affil[1]{School of Business, Suzhou University of Science and Technology, Suzhou 215006, Jiangsu, China}
\affil[2]{School of Mathematics, Shandong University, Jinan 250100, Shandong, China.}
\affil[3]{Center for Financial Engineering, Soochow University, Soochow University, Suzhou 215006, Jiangsu, China}
\affil[4]{School of Mathematical Sciences, Soochow University, Suzhou 215006, China}
\date{}

\maketitle

\begin{abstract}

A $G$-normal random variable $X\sim \mathcal{N}(0,[\underline{\sigma}^2,\overline{\sigma}^2])$ does not admit a unique probability law due to volatility uncertainty. For a given test function $\phi$, the $G$-expectation admits the stochastic control representation$$\mathbb{E}[\phi(X)]
=
\sup_{\sigma\in[\underline{\sigma},\overline{\sigma}]}
{E}\!\left[\phi(X_T^\sigma)\mid X_0^\sigma=0\right]
={E}\!\left[\phi(X_T^\ast)\mid X_0^\ast=0\right].$$
This formulation interprets the nonlinear expectation as a linear expectation under the law induced by the optimally controlled diffusion $X^\ast$, namely, the terminal law of $X_T^\ast$. This observation motivates the notion of a \emph{responsive distribution}, a measurement-dependent probability density $f_\phi$ such that, for a given test function $\phi$, 
$$\mathbb{E}[\phi(X)] = \int_{\mathbb{R}} \phi(x)\,f_\phi(x)\,dx.$$
Based on this viewpoint, we propose a coupled backward--forward trinomial tree framework for computing the $G$-expectation and constructing the corresponding responsive distribution. The backward trinomial tree discretizes the associated stochastic optimal control problem and yields approximations of the value function (i.e., the 
$G$-expectation) and the optimal feedback control, while the forward trinomial tree propagates the induced transition probabilities and produces a discrete approximation of the responsive distribution.

We establish rigorous convergence results for both components of the method. Using the equivalence between the backward trinomial tree and a monotone explicit finite difference scheme for the associated backward $G$-heat equation, we prove the convergence of the discrete value function to the value function of the stochastic control problem, and consequently, to the $G$-expectation.
More importantly, we show that the discrete second-order difference quotients generated by the backward trinomial tree converge locally uniformly to the second spatial derivative of the solution to the $G$-heat equation, $u_{xx}$. This yields the strong convergence of the discrete optimal control. Building on this result, we prove that the discrete probability measures generated by the forward recursion converge weakly$^\ast$ to the responsive distribution of the optimally controlled diffusion, equivalently, to a weak solution of the corresponding nonlinear Fokker--Planck equation.
Numerical results not only validate the theoretical convergence of the coupled schemes but also provide a powerful, practical sampling tool to visualize the complex responsive distributions under various measurements.

\emph{Keywords:} {G-expectation; G-normal; responsive distribution; trinomial tree; sampling; convergence}
\end{abstract}

\section{Introduction}

Since the seminal work of Peng \cite{p06}, the theory of sublinear expectations, now known as $G$-expectation theory, has provided a systematic framework for probability theory under model uncertainty. 
In stark contrast to the classical Kolmogorov axiomatic framework, which relies on a single probability measure, the $G$-expectation (denoted by $\mathbb{E}[\cdot]$ throughout this paper) 
is a nonlinear functional that intrinsically incorporates ambiguity in the underlying probabilistic model, most notably in the form of uncertain volatility. 
From a risk-theoretic perspective, $G$-expectation induces a coherent risk measure in the sense of Artzner et al.~\cite{AD}.
Consequently, it has catalyzed significant advancements across various disciplines, including mathematical finance \cite{pei, p11, grigorova2023stochastic, pun2021g, faidi2025optimal}, economic theory \cite{fei2025influence, peng2022distributional}, and stochastic optimal control \cite{p10, li2025dynamic, sun2016maximum, li2025relationship, ren2018stabilization}.

A central object in $G$-expectation theory is the \emph{$G$-normal distribution}, which plays the role of the Gaussian distribution in the nonlinear framework. 
A random variable $X$ is defined to be $G$-normally distributed if it satisfies the property
\[
aX + bX' \stackrel{d}{=} \sqrt{a^2+b^2}\, X,
\quad a,b \ge 0,
\]
where $X'$ is an independent copy of $X$ under the $G$-expectation \cite{p06}. Unlike the classical normal distribution, a $G$-normal random variable does not admit a single
probability law. Instead, it is characterized by a nonlinearity
\[
\frac12\mathbb{E}[\left\langle {aX, X} \right\rangle ]=  G(a) = \frac12 \sup_{\sigma^2\in[\underline{\sigma}^2,\overline{\sigma}^2]} \sigma^2 a,
\]
reflecting uncertainty in the variance. The $G$-normal distribution arises naturally as the limit law in Peng’s nonlinear central limit theorem \cite{p10, peng2019law}, revealing a fundamental structural difference between linear probability spaces and sublinear expectation spaces. Understanding the expectation, distributional structure, and sampling mechanisms of $G$-normal random variables is therefore of both theoretical and practical importance.

In classical probability theory, linear expectations and probability distributions are typically approximated simultaneously via Monte Carlo sampling. Such methods are attractive because they simultaneously provide approximations of expectations and the underlying distribution. However, this paradigm collapses under $G$-expectation. A $G$-normal random variable encapsulates infinitely many linear distributions corresponding to different volatility scenarios, and naive sampling across these models does not yield a meaningful notion of distribution. As a consequence, even the basic task of computing $\mathbb{E}[\phi(X)]$ or identifying the distributional features of $X$ becomes highly nontrivial. A powerful analytical tool for addressing this problem is the nonlinear Feynman--Kac formula
under $G$-expectation \cite{p10}, which connects the $G$-expectation of a test function $\phi$ to a fully nonlinear parabolic equation.
More precisely, for a one-dimensional $G$-normal random variable $X$ and a bounded continuous function $\phi$, the function
\[
u(t,x) := \mathbb{E}[\phi(x+\sqrt{t}\,X)]
\]
is the viscosity solution of the $G$-heat equation
\begin{equation}\label{1dimen}
\partial_t u - \sup_{\sigma^2\in[\underline{\sigma}^2,\overline{\sigma}^2]}
\frac12 \sigma^2 \partial_{xx} u = 0,
\quad u(0,x)=\phi(x).
\end{equation}
When $\phi$ is convex or concave, the $G$-heat flow preserves this convexity or concavity, and the equation reduces to a linear heat equation with constant coefficients. 
In this case, the $G$-expectation collapses to a linear expectation, and the induced distribution is Gaussian with extremal variance \cite{p10}. These special
cases provide valuable intuition but cover only a limited class of measurements.

For general test functions that are neither convex nor concave, no analytic solution is available, and one must rely on numerical methods for the G-heat equations. In the one-dimensional setting, finite difference schemes and their convergence properties have been extensively studied by Pooley et al.~\cite{pooley} and Zhao et al.~\cite{yang2016numerical}. In the two-dimensional case, numerical treatment becomes more challenging due to cross-derivative terms and the need to preserve monotonicity. When the covariance structure is deterministic, wide-stencil explicit schemes were studied by Bonnans and Zidani \cite{Bonnans2003} and Debrabant and Jakobsen \cite{Debrabant2013}, while fully implicit methods were proposed by Ma and Forsyth \cite{MF}. When the sign of the correlation is uncertain, implicit schemes and convergence results were developed by Pei et al \cite{pei2025numerical}. 
Despite these advancements, existing literature predominantly treats the $G$-heat equation purely as a deterministic PDE to be solved. This perspective completely obscures the underlying probabilistic intuition, leaving the computation and interpretation of the \emph{distribution} of a $G$-normal random variable as an open problem. Specifically, there exists no unified numerical framework capable of simultaneously evaluating the $G$-expectation and providing a meaningful sampling strategy for $G$-normal random variables.

To bridge this gap, we adopt a stochastic control perspective.
Consider a one-dimensional $G$-normal random variable $X$ with volatility uncertainty $[\underline\sigma,\overline\sigma]$ and a bounded continuous test function $\phi$.
By reversing the time variable, the $G$-heat equation associated with $\mathbb{E}[\phi(X)]$ can be formulated in the backward form
\begin{equation}
	\left\{
	\begin{array}{l}
		\partial_{t} u+\frac{\sigma ^{2}\left( \partial_{xx} u\right) }{2}\partial_{xx} u=0,(t,x)\in
		(0,1)\times \mathbb{R}, \\
		u(t,x)|_{t=1}=\phi (x),%
	\end{array}%
	\right.  \label{bp}
\end{equation}%
where the nonlinearity is governed by
\begin{equation}
	\sigma ^{2}\left(\partial_{xx} u\right) =\left\{
	\begin{array}{l}
		\overline{\sigma }^{2},\text{ if }\partial_{xx} u\geq 0, \\
		\underline{\sigma }^{2},\text{ if }\partial_{xx} u<0.%
	\end{array}%
	\right.
\end{equation}
Crucially, this equation admits a stochastic control interpretation.
The solution \( u(t,x) \), with \( u(0,0) = \mathbb{E}[\phi(X)] \), coincides with the value function of a stochastic optimal control problem (SOCM) wherein the volatility \( \sigma \) is dynamically selected from \( [\underline{\sigma}, \overline{\sigma}] \) to maximize the expected terminal payoff.
More precisely, let $X_t^\sigma$ satisfy
\begin{equation}\label{primary_state_eq}
	dX_t^\sigma =\sigma(t, X_t^\sigma) dW_t, \quad X_0^\sigma=x_0.
\end{equation}
where $\sigma(\cdot,\cdot)$ is an admissible control taking values in $[\underline\sigma,\overline\sigma]$.
The associated value function is given by
\begin{equation}\label{primary_value_f}
	u(0, x_0=0)=\sup _{\sigma\in \left[\underline {\sigma},  \overline  {\sigma}\right]} E\left[\phi \left(X_T^\sigma\right) \mid X_0^\sigma=x_0 \right].
\end{equation}
The \( G \)-heat equation \eqref{bp} serves as the Hamilton--Jacobi--Bellman (HJB) equation for this SOCM; consequently, the (optimal) value function of the SOCM coincides with the viscosity solution of the G-heat equation. By  the verification theorem, the optimal feedback control $\sigma$ exhibits a bang–bang structure:
\begin{equation}\label{optimal_control_1}
	 \sigma^*(t,x)  = \left\{ \begin{array}{l}
		\overline  {\sigma}, \partial_{{x}{x}}u(t,x) \ge 0\\
		\underline {\sigma}, \partial_{{x}{x}}u(t,x) < 0
	\end{array} \right. ,
\end{equation}
where $\partial_{xx} u(t, x)$ denotes the second partial derivative of the value function $u(t, x)$ with respect to the state variable $x$.
Moreover, the corresponding optimally controlled process $\left({X}_t^*, t \geq 0\right)$ follows the SDE,
\begin{equation}\label{primary_optimal_state}
	d {X}_t^*=\sigma^*\left(t, {X}_t^*\right) d W_t, \quad {X}_0^*=0,
\end{equation}
provided that it is well-posed (i.e., it has a unique weak solution). 
The associated (optimal) value function $u(t, x)$ is given by:
\begin{equation}\label{primary_value_f_1}
	\begin{aligned}
		\mathbb{E}[\phi(X)] =	u(0, 0)&=\sup _{\sigma\in \left[\underline {\sigma},  \overline  {\sigma}\right]} E\left[\phi \left(X_T^\sigma\right) \mid X_0^\sigma=0 \right]\\
		&= E\left[\phi \left({X}_T^*\right) \mid {X}_0^*=0 \right] = \int_{\mathbb{R}} \phi(x)\,f_\phi(dx).
	\end{aligned}
\end{equation}
For a comprehensive treatment of classical stochastic control theory, the reader is referred to Yong and Zhou (1999) \cite{Yong} and Fleming (2006) \cite{Fleming06}. 
This control-theoretic representation provides a crucial insight: once the test function $\phi$ is fixed, the nonlinear $G$-expectation can be realized as a linear expectation under the terminal law of the optimally controlled process. 
Motivated by this observation, this paper introduces the concept of a \emph{responsive distribution} \( f_{\phi}(x) \) associated with a \( G \)-normal random variable \( X \). 
\begin{definition}[Responsive distribution]
Let \(X\) be a one-dimensional \(G\)-normal random variable and \(\phi:\mathbb{R}\to\mathbb{R}\) be a bounded continuous function. 
Denote by \(X_t^*\) the optimally controlled process associated with \(\phi\) as characterized by \eqref{primary_value_f_1}, i.e.,
\[
\mathbb{E}[\phi(X)] 
= \sup_{\sigma\in[\underline{\sigma},\overline{\sigma}]} 
E\!\left[\phi\!\left(X_T^\sigma\right)\mid X_0^\sigma=0\right]
= E\!\left[\phi\!\left(X_T^*\right)\mid X_0^*=0\right].
\]
The \emph{responsive distribution} associated with $\phi$, denoted by $\mu_\phi$, is defined as the probability law of the terminal state \(X_T^*\), such that
\[
\mathbb{E}[\phi(X)] = \int_{\mathbb{R}} \phi(x)\, \mu_\phi(dx).
\]
If \(\mu_\phi\) is absolutely continuous with respect to the Lebesgue measure, we denote its density by \(f_\phi(x)\), and write
\[
\mathbb{E}[\phi(X)] = \int_{\mathbb{R}} \phi(x)\, f_\phi(x)\,dx.
\]
\end{definition}
In this sense, although a $G$-normal random variable does not admit a universal probability law, it can be associated with a unique measurement-dependent probability distribution once the test function $\phi$ is specified. The responsive distribution thus provides a linear expectation representation of the nonlinear $G$-expectation, which is intrinsically linked to the underlying stochastic control mechanism.

Motivated by this viewpoint and by the Markov chain approximation framework of Kushner and Dupuis~\cite{Kushner, KushnerDupuis2013}, we develop two \emph{coupled trinomial tree methods} for both the numerical evaluation of the value function and the construction of the responsive distribution. 
The backward trinomial tree provides a monotone and stable numerical approximation of the value function $u(0,0)$, corresponding to the viscosity solution of the backward $G$-heat equation.
Based on the optimal control and transition structure identified by this backward recursion, a forward trinomial tree is constructed to propagate state-dependent transition probabilities, thereby yielding a discrete approximation of the responsive distribution \( f_{\phi}(x) \).
The two procedures are intrinsically linked: the backward recursion determines the optimal control and transition structure, while the forward recursion translates this structure into a discrete probability distribution.
Rigorous convergence of the proposed schemes are established in two stages. 
First, by interpreting the trinomial tree recursion as an explicit finite difference scheme, we prove that the associated discrete value function converges uniformly to the unique viscosity solution of the $G$-heat equation, within the Barles–Souganidis framework.
To overcome this, we derive from the backward trinomial tree a monotone finite difference scheme for an auxiliary variable associated with the discrete second-order difference quotients. This auxiliary variable serves as the discrete counterpart to $w := \sigma^2(u_{xx})u_{xx}$ , which satisfies a fully nonlinear partial differential equation at the continuous level. By proving the convergence of this auxiliary scheme, we recover the local uniform convergence of the discrete second-order difference quotients to $u_{xx}$.
This result yields the strong convergence of the discrete optimal control. Building on this convergence, we prove that the discrete probability measures generated by the forward trinomial tree converge weakly$^\ast$ to a weak solution of a nonlinear Fokker--Planck equation. 
This limiting measure is the responsive distribution of the optimally controlled diffusion $X_t^\ast$.
Finally, we perform numerical experiments to illustrate the effectiveness of the proposed methods. These results not only visualize the complex, measurement-dependent densities $f_\phi$ of $G$-normal random variables under various test functions but also empirically confirm the convergence of the coupled schemes.


The remainder of this paper is organized as follows. 
Section \ref{sec:3} presents the coupled trinomial tree schemes for quantifying G-expectation and sampling G-normal random variables. 
Section \ref{sec:4} establishes the stability and convergence of the backward scheme. 
Section \ref{sec:5} analyzes the convergence of the responsive distribution, showing that the discrete measures generated by the forward trinomial recursion converge to a weak solution of the associated Fokker--Planck equation. 
Section \ref{sec:6} presents numerical experiments illustrating the performance and convergence of the proposed methods. 
Section \ref{sec:7} concludes the paper.

\section{Two Coupled Trinomial Tree Schemes}
{\label{sec:3}} 
In this section, we develop a discrete numerical methodology to quantify 
$G$-expectations and construct responsive distributions. The approach consists of two coupled trinomial tree schemes. Section \ref{sec:3.2} introduces a backward trinomial tree that discretizes the underlying stochastic control problem, yielding the optimal volatility strategy and the value function. Based on the transition structure determined by this backward recursion, Section \ref{sec:3.3} constructs a forward trinomial tree to propagate probability measures. This coupled structure naturally leads to a measurement-dependent representation of $G$-normal random variables.

\subsection{Backward trinomial tree method} \label{sec:3.2}

The controlled diffusion \(\{{X}_t^\sigma\}_{0 \leq t \leq T}\) is a continuous-time Markov process. 
To construct a trinomial tree approximation, we discretize the state space as $x_i = ih$, $i = 0, \pm1, \dots, \pm N$, and approximate \eqref{primary_state_eq} by a controlled Markov chain following \cite{Kushner}. 
Consequently, the controlled Markov chain \({X}_t^{h,\sigma}\) is expressed as:
\begin{equation}\label{DTMK}
	\begin{aligned}
		X_{t_{n+1}}^{h,\sigma} =  
		\begin{cases}
			x_i - h, & p = P_{i,i-1}^n(\sigma); \\
			x_i, & p = P_{i,i}^n(\sigma); \\
			x_i + h, & p = P_{i,i+1}^n(\sigma),
		\end{cases}
	\end{aligned}
\end{equation} 
conditional on the current state \(X_{t_{n}}^{h,\sigma} = x_i\). 
The transition probabilities for this controlled Markov chain are defined as:
\begin{equation}\label{transition_probabilities_DP}
	P_{i,i-1}^n(\sigma) = P_{i,i+1}^n(\sigma) = \frac{\sigma^2}{2} \frac{\Delta t}{h^2}, \quad P_{i,i}^n(\sigma) = 1 - P_{i,i-1}^n(\sigma) - P_{i,i+1}^n(\sigma).
\end{equation}
Here, \(P_{i,i-1}^n(\sigma)\), \(P_{i,i}^n(\sigma)\), and \(P_{i,i+1}^n(\sigma)\) represent the probabilities of transition from the current state \(X_{t_{n}}^{h,\sigma}\) to the states \(x_i - h\), \(x_i\), and \(x_i + h\), respectively. 
To guarantee that ransition probabilities are non-negative, we impose the standard stability condition:
\begin{equation}\label{eq:Cond_CFL}
\overline{\sigma}^{2} \frac{\Delta t}{h^{2}} \leq 1.
\end{equation}
Let $U\left( t_{n},x_{i}\right)$ or $U_{i}^n$ be the approximate of the value function for equation \eqref{bp} at $\left( t_{n},x_{i}\right)$.
By the dynamic programming principle, the discrete value function satisfies the following backward recursion:
\begin{equation} \label{TTM_1}
	\begin{aligned}
		U_{i}^{n} &= \sup _{\sigma\in \left[\underline {\sigma},  \overline  {\sigma}\right]} E\left[U \left(t_{n+1}, X_{t_{n+1}}^{h,\sigma}\right) \mid X_{t_n}^{h,\sigma} =x_i \right]\\
        & = \sup _{\sigma\in \left[\underline {\sigma},  \overline  {\sigma}\right]} \left(P_{i,i-1}^n(\sigma) U(t_{n+1},x_{i-1}) + P_{i,i}^n(\sigma) U(t_{n+1},x_i) + P_{i,i+1}^n(\sigma) U(t_{n+1},x_{i+1})\right)  \\
        & = \sup _{\sigma\in \left[\underline {\sigma},  \overline  {\sigma}\right]} 
        \frac{\sigma^2}{2} \Delta t \left( \frac{U_{i+1}^{n+1}-2U_{i}^{n+1}+U_{i-1}^{n+1}}{h^{2}} \right) + U_{i}^{n+1}\\
		& = P_{i,i-1}^n(\sigma_i^n) U_{i-1}^{n+1} + P_{i,i}^n(\sigma_i^n) U_{i}^{n+1} + P_{i,i+1}^n(\sigma_i^n) U_{i+1}^{n+1},
	\end{aligned} 
\end{equation}
where the discrete optimal feedback control $\sigma_i^n$ exhibits a bang-bang structure:
\begin{equation}\label{control_DP}
	\sigma_i^n = 
	\begin{cases}
		\overline{\sigma} & \text{if } \delta _{h}^{2}U_{i}^{n+1} > 0, \\
		\underline{\sigma} & \text{if } \delta _{h}^{2}U_{i}^{n+1}< 0,
	\end{cases}
\end{equation}
and $\delta_{h}^{2}$ denotes the standard central second-order difference operator:
\begin{equation}\label{delta_h^2}
	\delta _{h}^{2}U_{i}^{n+1} =\frac{U_{i+1}^{n+1}-2U_{i}^{n+1}+U_{i-1}^{n+1}}{h^{2}}.
\end{equation}
This backward recursion is initialized with the terminal condition $U_i^N = \phi(x_i)$ and proceeds backward in time for $n = N-1, \dots, 0$ over the spatial nodes $i = -n, \dots, n$.
Consequently, the backward trinomial tree \eqref{TTM_1}
provides a discrete realization of the stochastic control problem and yields a numerical approximation of the G-expectation: $U(0,0)\approx u(0,0)= \mathbb{E}[\phi (X)]$.

\begin{figure}[h]
	\centering
	\includegraphics[width=0.65\textwidth]{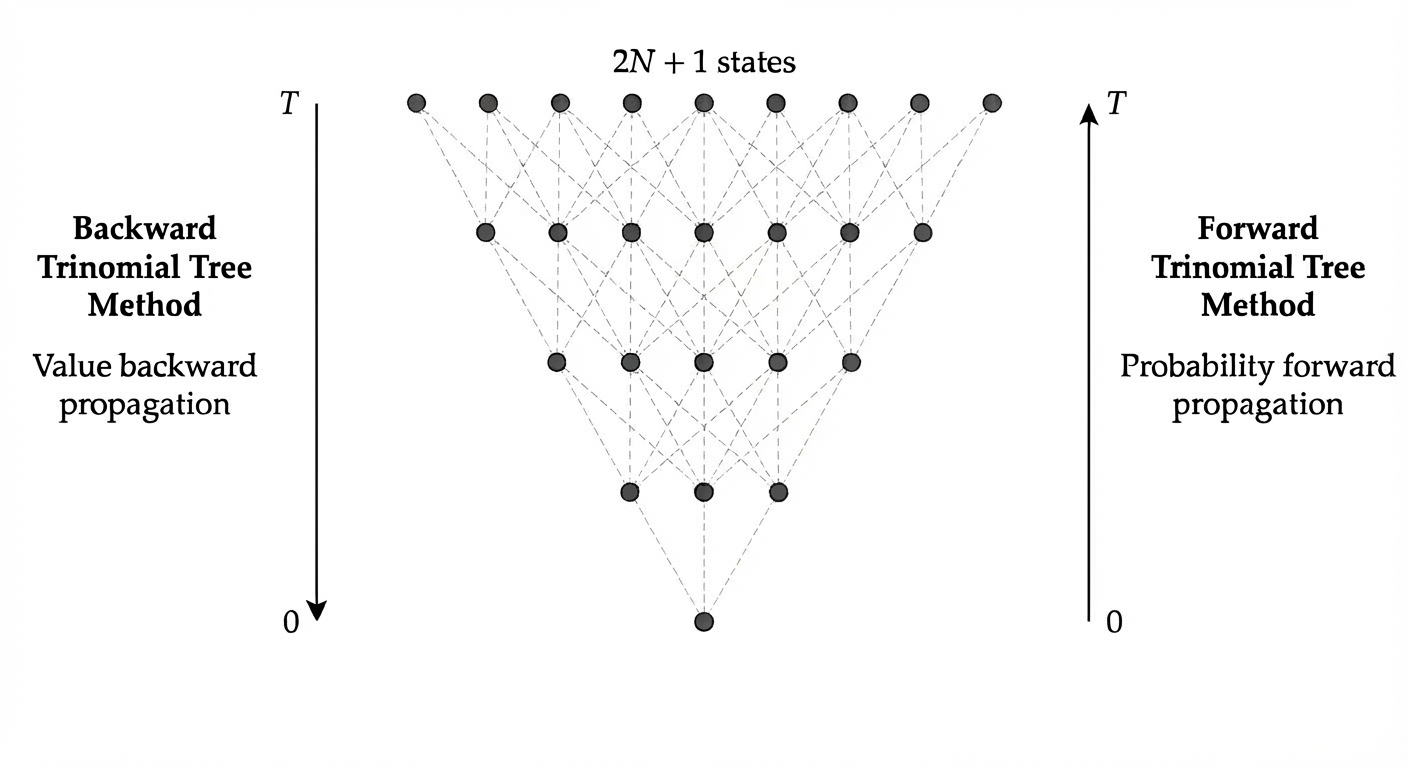}
	\caption{Trinomial tree with $2N+1$ spatial nodes. The backward recursion \eqref{TTM_1}
propagates the value function from $t_{n+1}$ to $t_{n}$, while the forward recursion \eqref{eq:sampling2_p} propagates the induced probability mass.}
	\label{fig:trinomial_tree_diagram}
\end{figure}

\begin{figure}[ht]
    \centering
    \begin{subfigure}[t]{0.45\textwidth}
        \centering
\includegraphics[width=\textwidth]{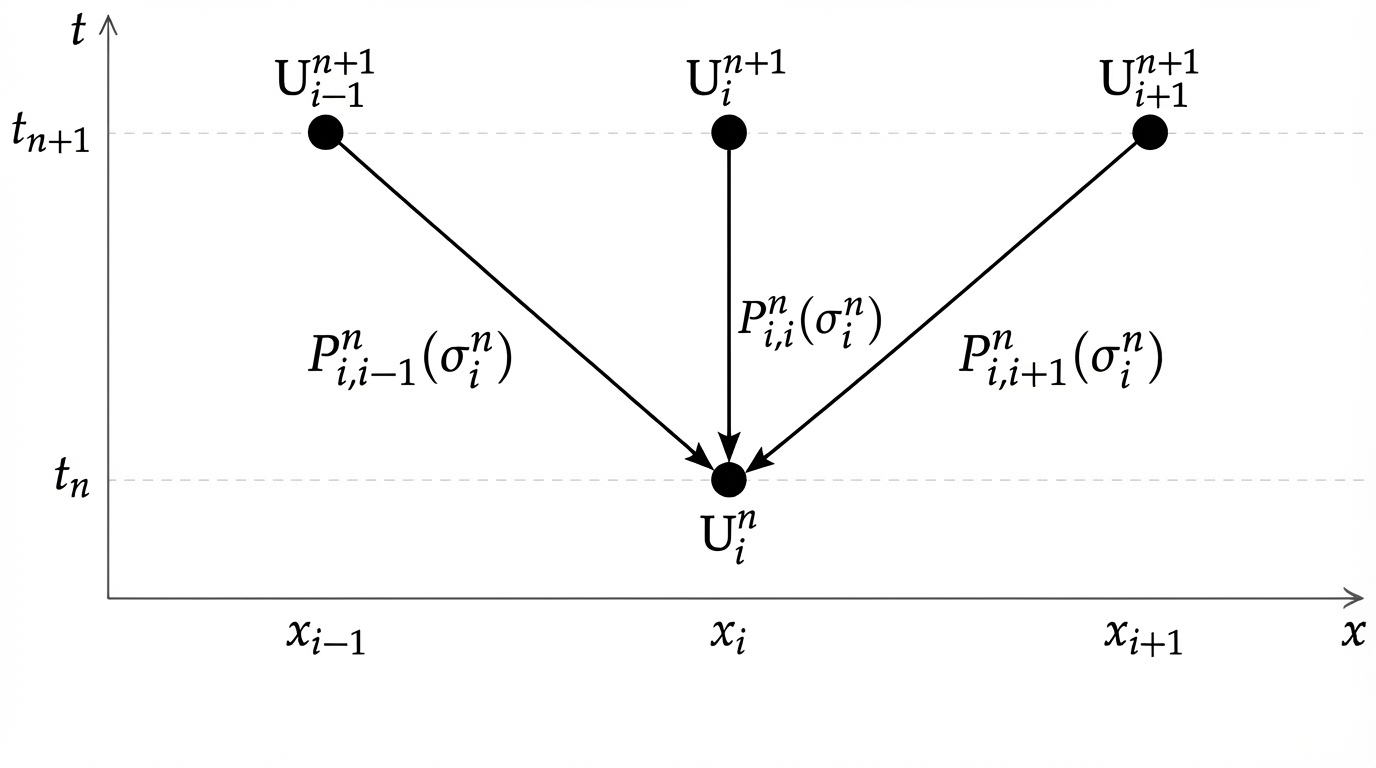}
        \caption{Backward trinomial tree method.}
        \label{fig:EFDM}
    \end{subfigure}
    \hfill
    \begin{subfigure}[t]{0.45\textwidth}
        \centering
\includegraphics[width=\textwidth]{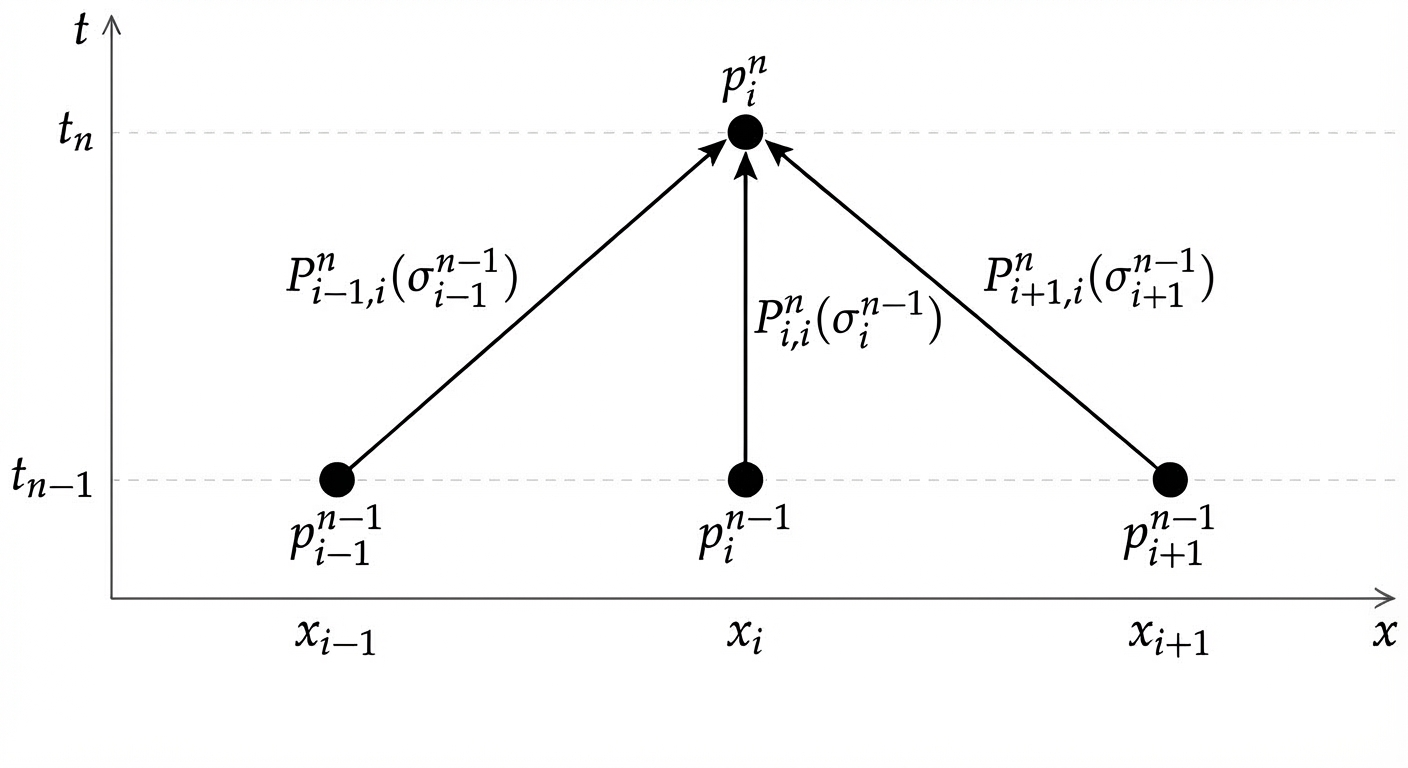}
        \caption{Forward trinomial tree method.}
        \label{treediagram1}
    \end{subfigure}
    \caption{Interpretation of trinomial tree methods.}
    \label{figure:Interpretation}
\end{figure}

\subsection{Sampling approach derived from forward trinomial tree method}\label{sec:3.3}


The backward recursion in Section~\ref{sec:3.2} determines not only the $G$-expectation but also an optimal feedback control policy $\{\sigma_i^n\}$. This policy dictates a specific, state-dependent transition structure. By utilizing this structure, we can propagate probability mass forward in time, thereby constructing a discrete approximation of the responsive distribution $f_\phi$.

Let $p_i^n$ denote the probability that the optimally controlled Markov chain
$X^{h,\sigma^\ast}_{t_n}$ occupies state $x_i$ at time $t_n$.
We take the discrete initial law
\[
p_0^0=1,\quad p_i^0=0\quad \text{for }i\neq0,
\]
corresponding to the initial condition $X_0=0$.
The probability mass then propagates forward according to
\begin{equation}
p_i^n
=
P^n_{i-1,i}(\sigma^{n-1}_{i-1})\,p^{n-1}_{i-1}
+
P^n_{i,i}(\sigma^{n-1}_{i})\,p^{n-1}_{i}
+
P^n_{i+1,i}(\sigma^{n-1}_{i+1})\,p^{n-1}_{i+1},
\qquad i=-n,\ldots,n,
\label{eq:sampling2_p}
\end{equation}
where the transition probability $P_{i,j}^n$ has been settled in the backward procedure \eqref{TTM_1}-\eqref{control_DP}.
The corresponding diagram of the forward sampling process is shown in Figure \ref{treediagram1}.
After $N$ time steps, the terminal distribution $\{p_i^N\}_{i=-N}^N$
defines a discrete probability measure that approximates the
responsive distribution $f_\phi$ introduced in Definition~3.1.
In particular,
\begin{equation}
\mathbb{E}[\phi(X)]
= \int_{\mathbb{R}} \phi(x)\,f_\phi(dx)
\;\approx\;
\sum_{i=-N}^N p_i^N\,\phi(x_i).
\label{eq:sampling_representation}
\end{equation}


It is important to emphasize that a G-normal random variable does not admit a universal probability law independent of the test function. Instead, each choice of measurement $\phi$ activates a specific optimal control, which in turn induces a unique probability measure $f_\phi$. As depicted in Figures \ref{fig:trinomial_tree_diagram} and \ref{fig:EFDM}, the forward trinomial tree provides a sampling procedure that is intrinsically measurement-dependent, reflecting the nonlinear nature of $G$-expectation.


\begin{remark}
Direct sampling of a G-normal random variable $X \sim \mathcal{N}(0, \left[\underline{\sigma }^{2}, \overline{\sigma}^2\right])$ is infeasible and not meaningful. For example, sampling across all possible linear normal random variables $X_\sigma \sim \mathcal{N}(0, \sigma^2)$ with $\sigma \in [\underline{\sigma}, \overline{\sigma}]$ does not provide useful information. 
However, once a specific measurement $\phi$ is introduced, sampling to evaluate $\mathbb{E}[\phi(X)]$ becomes both mathematically rigorous and practically feasible. 
This situation is comparable to quantum mechanics, where the state is uncertain and unknown until a measurement is made, at which point the system becomes well-defined. Therefore, one should only sample for purpose and never sample for the sake of sampling.
This paradigm dictates that under Knightian uncertainty, one must sample with respect to a specific objective function, rather than attempting to sample the underlying ambiguous state space in isolation.
\end{remark}


\section{Numerical analysis for Backward Trinomial Tree Method}
{\label{sec:4}}


In this section, we establish the convergence of the backward trinomial tree method for approximating the $G$-expectation. 
More precisely, we show that the numerical solution produced by the backward trinomial tree converges to the value function $u(0,0)$ of the associated stochastic control problem, which coincides with the viscosity solution of the $G$-heat equation and thus with $\mathbb{E}[\phi(X)]$.
The analysis proceeds in two steps. First, we interpret the trinomial tree recursion as a fully explicit finite difference scheme for the $G$-heat equation and establish its consistency, stability, and monotonicity. By the Barles--Souganidis framework, this implies convergence to the unique viscosity solution of the $G$-heat equation. 
Second, exploiting the equivalence between the trinomial tree method and the explicit scheme, we deduce the convergence and stability of the backward trinomial tree method.

\subsection{Equivalence with an Explicit Finite Difference Scheme}

The trinomial tree method admits an equivalent interpretation as an explicit finite difference discretization of the G-heat equation \eqref{bp}. This equivalence plays a crucial role in the convergence analysis. Indeed, the backward recursion \eqref{TTM_1} can be rewritten as
\begin{equation}
	\left\{
	\begin{array}{l}
		U_{i}^{n}=\frac{\sigma _{i}^{2}}{2}\frac{\Delta t}{h^{2}}U_{i-1}^{n+1}+%
		\left( 1-\sigma _{i}^{2}\frac{\Delta t}{h^{2}}\right) U_{i}^{n+1}+\frac{%
			\sigma _{i}^{2}}{2}\frac{\Delta t}{h^{2}}U_{i+1}^{n+1},\text{ }n=N-1,\cdots
		,1,0,\text{ }i=-n,\cdots ,n \\
		U_{i}^{N}=\phi (x_{i}),\text{ }%
	\end{array}%
	\right.  \label{dbp2}
\end{equation}%
where the optimal discrete volatility is given by
\begin{equation} \label{eq:sigma_optimal}
	\sigma _{i}^{2}=\sigma ^{2}(\delta_h^2 U_i^{n+1}) =\left\{
	\begin{array}{ll}
		\overline{\sigma }^{2}, & \text{if } \delta_h^2 U_i^{n+1} > 0, \\
		\underline{\sigma }^{2}, & \text{if } \delta_h^2 U_i^{n+1} \le 0.%
	\end{array}%
	\right.
\end{equation}%
Since the transition probabilities \(P_{i,i-1}(\sigma)\), \(P_{i,i}(\sigma)\), and \(P_{i,i+1}(\sigma)\) are nonnegative and sum to one, the scheme \eqref{dbp2} is a monotone probabilistic scheme. Rearranging \eqref{dbp2} yields the equivalent explicit finite difference form
\begin{equation}
	\left\{
	\begin{array}{l}
		\delta _{t}U_{i}^{n+1}+\frac{1}{2}\sigma ^{2}\left( \delta
		_{h}^{2}U_{i}^{n+1}\right) \delta _{h}^{2}U_{i}^{n+1}=0,\text{ }n=N-1,\cdots
		,1,0,\text{ }i\in\mathbb Z, \\
		U_{i}^{N}=\phi (x_{i}),\text{ }i=-N,\cdots ,N,%
	\end{array}%
	\right.  \label{dbp}
\end{equation}%
where the forward time difference is defined as
\begin{eqnarray*}
	\delta _{t}U_{i}^{n+1} &=&\frac{U_{i}^{n+1}-U_{i}^{n}}{\Delta t},
\end{eqnarray*}%
and the central second-order difference $\delta_h^2 U_i^{n+1}$ is given by \eqref{delta_h^2}.
It should be noted that the explicit scheme \eqref{dbp} is formally defined on the entire rectangular grid $(t_n,x_i)$ with $i\in\mathbb{Z}$, whereas the trinomial tree recursion \eqref{dbp2} only involves the triangular computational domain $i=-n,\dots,n$. 
However, the values outside this triangular region do not affect the domain of dependence for the root node $U_0^0$. In particular, the value depends only on the nodes contained in the triangular region of the trinomial tree. Therefore, the value $U_0^0$ obtained from the explicit scheme \eqref{dbp} is identical to that produced by the trinomial tree recursion.

\begin{figure}[h]
	\centering
	\includegraphics[width=0.6\textwidth]{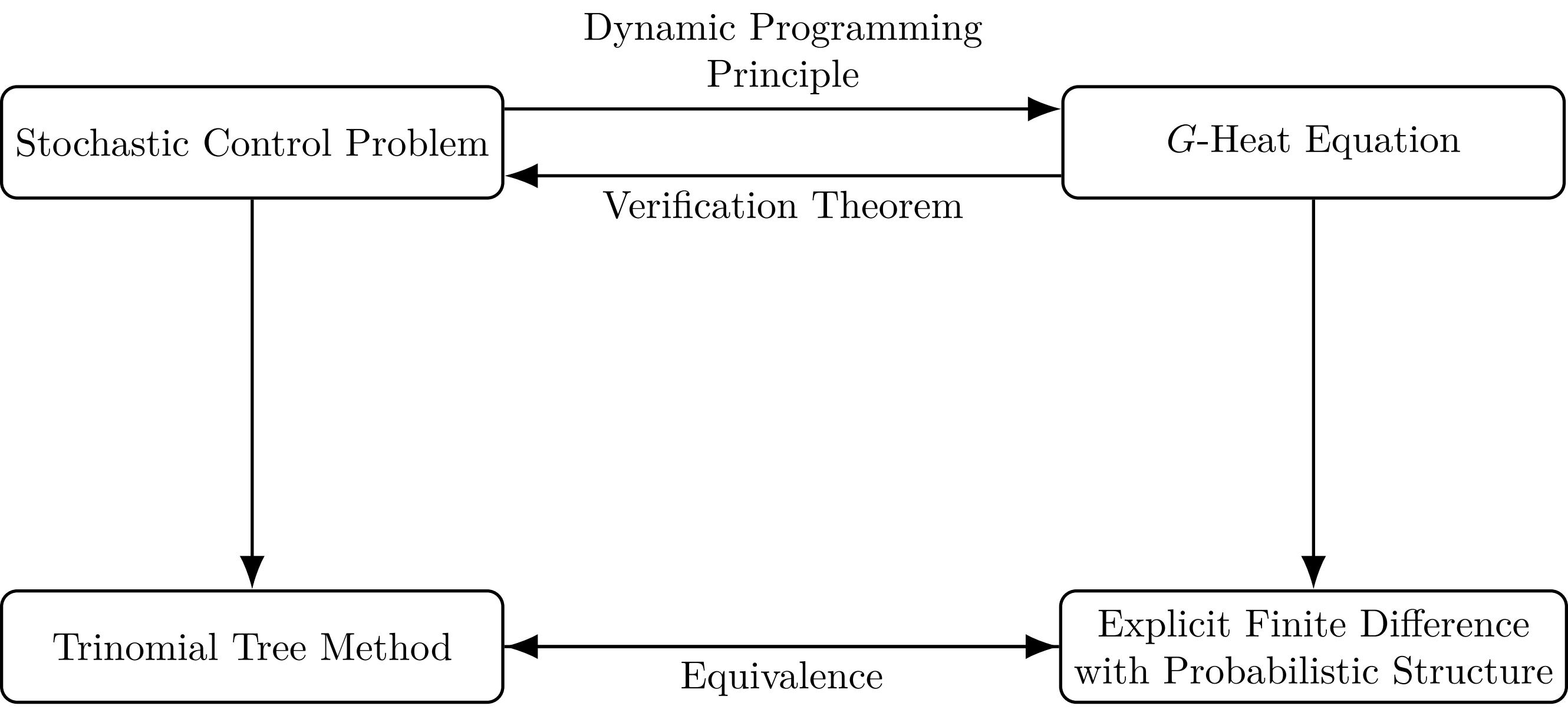}
	\caption{Swap Diagram between the stochastic control problem and G-heat equation.}
	\label{fig:swap_plot}
\end{figure}

\begin{remark}
The numerical equivalence between the trinomial tree method \eqref{TTM_1} and the explicit finite difference scheme \eqref{dbp2}, as depicted in Figure \ref{figure:Interpretation}, reflects a profound theoretical duality. As illustrated by the structural diagram in Figure \ref{fig:swap_plot}, this discrete-level equivalence stems naturally from the continuous-time relationship between the stochastic optimal control problem and its corresponding Hamilton--Jacobi--Bellman (HJB) equation, i.e., the $G$-heat equation.
\end{remark}

\subsection{Convergence of the Backward Trinomial Tree to the G-expectation}
Following the seminal framework of Barles and Souganidis \cite{G1991Convergence}, the numerical scheme \eqref{dbp} converges to the unique viscosity solution of the fully nonlinear PDE \eqref{bp} provided that the scheme is consistent, $l_{\infty }$-stable, and monotone.

We first recall the notion of monotonicity. Denote by $$g_{i} = g\left(U^{n}_{i}, U^{n+1}_{i}, \{ U^{n+1}_{k} \}_{k\in N_{i}}\right)$$ the left-hand side of the difference equation \eqref{dbp}. Here, $N_{i}=\{k\not = i :\ |k-i|\leq 1\}$ presents the set of all nearest-neighbor indexes of $i$.
\begin{definition} [Monotonicity] \label{def-mono}
    The scheme \eqref{dbp} is monotone if for all  $i$, 
    \begin{equation}
    \label{g-diag}
    g_{i}\left(U^{n}_{i}+\epsilon^{n}_{i}, U^{n+1}_{i}, \{U^{n+1}_{k}\}_{k\in N_{i}}\right)\leq  g_{i}\left(U^{n}_{i}, U^{n+1}_{i}, \{U^{n+1}_{k}\}_{k\in N_{i}}\right), \ \ \forall \epsilon^{n}_{i}\geq 0,
    \end{equation}
    and
    \begin{equation} 
    \label{g-off-diag}
      g_{i}\left(U^{n}_{i}, U^{n+1}_{i}+\epsilon^{n+1}_{i}, \{U^{n+1}_{k}+\epsilon^{n+1}_{k}\}_{k\in N_{i}}\right)\geq  g_{i}\left(U^{n}_{i}, U^{n+1}_{i}, \{U^{n+1}_{k}\}_{k\in N_{i}}\right), \  \forall \epsilon^{n+1}_{i}, \epsilon^{n+1}_{k}\geq 0.
    \end{equation}
 \end{definition}

\begin{assumption}[Stability condition]
\label{ass:CFL_w}
The time step $\Delta t$ and spatial step $h$ satisfy the mesh ratio constraint:
\begin{equation}\label{eq:CFL_w}
\overline{\sigma}^{2}\,\frac{\Delta t}{h^{2}}\le \frac{1}{2}.
\end{equation}
\end{assumption}

\begin{lemma}
\label{lemma:MSC}
Under Assumption \ref{ass:CFL_w}, 
the explicit scheme \eqref{dbp} satisfies the following properties:

\begin{enumerate}
\item[(i)] \textbf{Consistency:} the scheme is consistent with the G-heat equation \eqref{bp};
\item[(ii)] \textbf{Stability:} it is $l_\infty$-stable, i.e.,
\[
\max_n \|U^n\|_\infty \le \|\phi\|_\infty;
\]
\item[(iii)] \textbf{Monotonicity:} it is monotone in the sense of Definition \ref{def-mono}.
\end{enumerate}
\end{lemma}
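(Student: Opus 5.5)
The three properties are checked directly from the two equivalent forms of the scheme: the finite-difference form \eqref{dbp} for consistency and monotonicity, and the convex-combination form \eqref{dbp2} for stability. Throughout, write $\lambda=\Delta t/h^2$. The key observation is that the nonlinearity can be written as
\[
\tfrac12\sigma^2(a)a=\sup_{\sigma\in[\underline\sigma,\overline\sigma]}\tfrac12\sigma^2 a=G(a).
\]
This $G$ is nondecreasing in $a$ and Lipschitz continuous with constant $\overline\sigma^2/2$. The choice of $\sigma^2$ at $a=0$ is irrelevant, since both branches give $0$ there.

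\textbf{Consistency.} I would follow the Barles--Souganidis notion.
\begin{itemize}
\item Take $\varphi\in C^{\infty}_b$ (or $C^{1,2}$ with bounded derivatives near the point).
\item Taylor expansion gives $\delta_t\varphi(t_{n+1},x_i)=\partial_t\varphi(t_{n+1},x_i)+O(\Delta t)$ and $\delta_h^2\varphi(t_{n+1},x_i)=\partial_{xx}\varphi(t_{n+1},x_i)+O(h^2)$.
\item The Lipschitz continuity of $G$ then yields
\[
\delta_t\varphi+G(\delta_h^2\varphi)-\bigl(\partial_t\varphi+G(\partial_{xx}\varphi)\bigr)=O(\Delta t+h^2).
\]
\end{itemize}
To get the limsup/liminf version over $(t_n,x_i,\xi)\to(t,x)$, including the shift $\varphi+\xi$, I would note that constants are annihilated by both difference operators. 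The smoothness of $\varphi$ and continuity of $G$ then give the limit $\partial_t\varphi+G(\partial_{xx}\varphi)$ at $(t,x)$. Since $G$ is continuous, no semicontinuous envelopes are needed.

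\textbf{Stability.} Use \eqref{dbp2}. The coefficients are $\tfrac12\sigma_i^2\lambda$, $1-\sigma_i^2\lambda$ and $\tfrac12\sigma_i^2\lambda$. Under Assumption~\ref{ass:CFL_w} they are all nonnegative, since $\sigma_i^2\lambda\le\overline\sigma^2\lambda\le 1/2<1$, and they sum to one. Hence
\[
|U_i^n|\le\max_{k\in\{i-1,i,i+1\}}|U_k^{n+1}|\le\|U^{n+1}\|_\infty .
\]
Backward induction from $U^N=\phi$ gives $\max_n\|U^n\|_\infty\le\|\phi\|_\infty$.

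\textbf{Monotonicity.} Write out
\[
g_i=\frac{U_i^{n+1}-U_i^n}{\Delta t}+G\!\left(\frac{U_{i+1}^{n+1}-2U_i^{n+1}+U_{i-1}^{n+1}}{h^2}\right).
\]
For \eqref{g-diag}, $g_i$ is strictly decreasing in $U_i^n$, which is immediate.

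For \eqref{g-off-diag}, monotonicity in the neighbours $U_{i\pm1}^{n+1}$ follows because $G$ is nondecreasing and the neighbours enter $\delta_h^2$ with positive sign.

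The delicate variable is $U_i^{n+1}$, where the time term is increasing but the $G$ term is decreasing. This is the main obstacle, and I would handle it by an explicit difference quotient rather than derivatives, since $G$ is not differentiable at $0$. Increasing $U_i^{n+1}$ by $\epsilon\ge0$ with the neighbours fixed changes $g_i$ by
\[
\frac{\epsilon}{\Delta t}+G(a-2\epsilon/h^2)-G(a)\;\ge\;\frac{\epsilon}{\Delta t}-\frac{\overline\sigma^2}{2}\cdot\frac{2\epsilon}{h^2}=\frac{\epsilon}{\Delta t}\bigl(1-\overline\sigma^2\lambda\bigr)\ge0 .
\]
This holds under \eqref{eq:CFL_w}, and in fact already under \eqref{eq:Cond_CFL}.

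Finally, combine the cases for simultaneous nonnegative perturbations $\epsilon_i^{n+1},\epsilon_k^{n+1}$. Perturb first the neighbours and then the centre. Each step does not decrease $g_i$, which gives \eqref{g-off-diag}.
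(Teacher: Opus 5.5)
Your proposal is correct and follows essentially the same route as the paper: a Taylor expansion for consistency, the convex-combination form \eqref{dbp2} with backward induction for $l_\infty$-stability, and for monotonicity a perturbation estimate of the form $(1/\Delta t-\sigma^2/h^2)\,\epsilon_i^{n+1}\ge 0$, which holds under the CFL condition. The differences are minor. For monotonicity you perturb the neighbours first and then the centre, using that $G(a)=\tfrac12\max\{\overline\sigma^2a,\underline\sigma^2a\}$ is nondecreasing and $\overline\sigma^2/2$-Lipschitz, whereas the paper perturbs all values at once and uses the optimality of $\sigma^2(\delta_h^2\widetilde U_i^{n+1})$ at the perturbed point. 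For consistency, your observation that $G$ is continuous, so no semicontinuous envelopes are needed, makes the step more precise than the paper's brief version.
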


The proof consists of direct verifications based on the 
probabilistic structure of the scheme and the stability condition 
\eqref{eq:CFL_w}. 
For completeness, all details are provided in Appendix~\ref{app_prf_lemma1}.

\begin{lemma}
[Convergence to the viscosity solution]
\label{thm:viscosity_convergence}
Let Assumption \ref{ass:CFL_w}
hold, then the fully explicit discretization %
\eqref{dbp} converges to the viscosity solution of the
non-linear PDE \eqref{bp}.
\end{lemma}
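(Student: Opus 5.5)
The plan is to invoke the Barles--Souganidis convergence theorem \cite{G1991Convergence}, using Lemma \ref{lemma:MSC} for its hypotheses, together with a strong comparison principle for \eqref{bp}. Concretely, extend the grid function to a piecewise-constant (or interpolated) function $U^{h}(t,x)$ on $[0,1]\times\mathbb{R}$, with $\Delta t$ tied to $h$ through \eqref{eq:CFL_w}. By the $l_\infty$-stability in Lemma \ref{lemma:MSC}(ii), the family $\{U^h\}$ is uniformly bounded by $\|\phi\|_\infty$. We may therefore define the relaxed half-limits
\[
\overline{U}(t,x)=\limsup_{h\to0,\,(s,y)\to(t,x)}U^h(s,y),\qquad \underline{U}(t,x)=\liminf_{h\to0,\,(s,y)\to(t,x)}U^h(s,y),
\]
which are bounded, upper and lower semicontinuous respectively, and satisfy $\underline U\le\overline U$.

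Next we show that $\overline U$ is a viscosity subsolution and $\underline U$ a viscosity supersolution of \eqref{bp}. The equation is written with the Hamiltonian $-\partial_t u-G(\partial_{xx}u)$, where $G(a)=\frac12\sup_{\sigma^2\in[\underline\sigma^2,\overline\sigma^2]}\sigma^2 a$ is continuous. Take a smooth test function $\varphi$ such that $\overline U-\varphi$ has a strict local maximum at an interior point $(t_0,x_0)$. Choose grid points $(t_{n_h},x_{i_h})\to(t_0,x_0)$ at which $U^h-\varphi$ is maximized up to $o(1)$ error, and let $\xi_h\to0$ denote the corresponding offset. Monotonicity, Lemma \ref{lemma:MSC}(iii), lets us replace $U^h$ by $\varphi+\xi_h$ in the scheme with the correct inequality, because the scheme commutes with constants. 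Consistency, Lemma \ref{lemma:MSC}(i), then yields $-\partial_t\varphi-G(\partial_{xx}\varphi)\le0$ at $(t_0,x_0)$; this uses the Taylor estimate $\delta_h^2\varphi=\varphi_{xx}+O(h^2)$ and the Lipschitz continuity of $a\mapsto\sigma^2(a)a=2G(a)$. The supersolution property of $\underline U$ follows symmetrically.

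The terminal condition has to be checked with some care, since Barles--Souganidis requires that $\overline U(1,x)\le\phi(x)\le\underline U(1,x)$ hold in the strong sense. The plan is to use barriers. Suppose first that $\phi$ is Lipschitz, or more generally uniformly continuous; for a general bounded continuous $\phi$ one argues locally by approximation. Given $\varepsilon>0$, pick a smooth $\psi_\varepsilon\ge\phi$ with $\psi_\varepsilon(x_0)\le\phi(x_0)+\varepsilon$ and bounded second derivative $C_\varepsilon$. By monotonicity and commutation with constants, the grid function $\psi_\varepsilon(x_i)+\tfrac12\overline\sigma^2C_\varepsilon(1-t_n)$ is a discrete supersolution, so $U^h$ stays below it. Letting $h\to0$ and then $(t,x)\to(1,x_0)$ gives $\overline U(1,x_0)\le\phi(x_0)+\varepsilon$, and the lower bound is obtained in the same way.

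Finally, a strong comparison principle holds for the $G$-heat equation among bounded semicontinuous sub- and supersolutions; this is standard for uniformly parabolic, concave, translation-invariant equations (see \cite{p10}). It gives $\overline U\le\underline U$, so the two coincide with a continuous function $u$ that is the unique viscosity solution of \eqref{bp}. Local uniform convergence $U^h\to u$ then follows from the standard half-relaxed-limit argument. I expect the main obstacle to be the terminal-boundary step together with invoking comparison on the unbounded domain. Both rely on the boundedness of $\phi$ and of $U^h$: comparison on $\mathbb{R}$ requires only bounded (or polynomially growing) sub- and supersolutions, and Lemma \ref{lemma:MSC}(ii) supplies exactly this.
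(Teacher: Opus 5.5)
Your proposal is correct and follows the same route as the paper. The paper's proof is a one-line appeal to Barles--Souganidis using the consistency, $l_\infty$-stability and monotonicity of Lemma~\ref{lemma:MSC}. You additionally make explicit two ingredients the paper leaves tacit: the strong comparison principle for \eqref{bp}, and the strong terminal condition $\overline U(1,\cdot)\le\phi\le\underline U(1,\cdot)$ obtained from barriers, which in fact need only boundedness of $\phi$ and continuity at $x_0$.

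One small tightening: take exact maxima of $U_i^n-\varphi(t_n,x_i)$ over the finitely many grid points near $(t_0,x_0)$, or near-maximizers with error $o(\Delta t)$ rather than merely $o(1)$, because the consistency step divides by $\Delta t$.
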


\begin{proof}
Since the  scheme \eqref{dbp} is consistent, $%
l_{\infty }-$stable, and monotone, the convergence follows from the results of Barles and Souganidis {\cite{G1991Convergence}} directly.
\end{proof}

\begin{lemma}[Rate of convergence, {\cite[Theorem 4.11]{pei2025numerical}}]
Let $u$ be the viscosity solution of
equation \eqref{bp}, and let $U$ be the numerical solution of equation
\eqref{dbp}. If Assumption \ref{ass:CFL_w} holds, $\exists $ $\beta \in
\left( 0,1\right) $, such that
\begin{equation}
\left\vert \left\vert u_{i,j}^{n}-U_{i,j}^{n}\right\vert \right\vert
_{\infty }\leq O\left( \Delta t^{\frac{\beta }{2}}+h^{\beta }\right) ,\text{
}\beta \in \left( 0,1\right) .
\end{equation}
\end{lemma}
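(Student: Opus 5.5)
This rate follows from the Krylov--Barles--Jakobsen regularization ("shaking the coefficients") argument for monotone schemes applied to concave/convex HJB equations. The $G$-heat equation is the HJB equation
\[
\partial_t u+\sup_{\sigma\in[\underline\sigma,\overline\sigma]}\tfrac12\sigma^2u_{xx}=0,
\]
which is convex in $u_{xx}$, and Lemma~\ref{lemma:MSC} provides the monotonicity, stability and consistency needed. The plan is to bound the error from above and from below separately, with different regularizations in each direction.

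\emph{Step 0: regularity of $u$.} Since $\phi$ is bounded and Lipschitz, we assume $u$ is Lipschitz in $x$ and $1/2$-Hölder in $t$. This holds because $u(t,x)=\mathbb E[\phi(x+\sqrt{1-t}X)]$ and the sublinear expectation is subadditive. The same bounds hold for $U$ uniformly in $h,\Delta t$. The scheme is translation invariant in $x$ and monotone, so $\|U^n_{\cdot+k}-U^n_\cdot\|_\infty\le\|\phi(\cdot+kh)-\phi\|_\infty$. In time, we compare $U^n$ with the scheme started from $U^N$ after a few steps and use consistency on smooth mollified data.

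\emph{Step 1: upper bound $U-u\le C(\Delta t^{\beta/2}+h^\beta)$.} Consider the perturbed equation
\[
\partial_t u^\varepsilon+\sup_{\sigma,\,|e|\le\varepsilon,\,|s|\le\varepsilon^2}\tfrac12\sigma^2u^\varepsilon_{xx}(t+s,x+e)=0.
\]
Its solution stays within $C\varepsilon$ of $u$ by continuous dependence. Each shift $u^\varepsilon(\cdot-s,\cdot-e)$ is a subsolution of the original equation. By convexity, the mollification $u_\varepsilon=u^\varepsilon*\rho_\varepsilon$ is then a smooth subsolution, with $|\partial_t^k\partial_x^m u_\varepsilon|\le C\varepsilon^{1-2k-m}$. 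Plugging $u_\varepsilon$ into the scheme leaves a truncation error of order $\Delta t\,\varepsilon^{-3}+h^2\varepsilon^{-3}$. A discrete comparison principle, following from monotonicity (iii) and the probabilistic form \eqref{dbp2}, then gives
\[
U-u_\varepsilon\le C(\Delta t+h^2)\varepsilon^{-3}.
\]
Balancing this against $C\varepsilon$ yields a fractional rate $\beta\in(0,1)$.

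\emph{Step 2: lower bound.} This is the main obstacle. The symmetric trick fails because the mollified solution of the scheme is not a supersolution: a $\sup$-type operator does not commute with averaging. Two routes are available.
\begin{itemize}
\item Barles--Jakobsen switching: regularize the scheme itself by shaking its coefficients, mollify the discrete solution, and use the continuous comparison principle. This needs discrete Hölder bounds from Step 0 and gives a worse exponent.
\item Exploit the one-dimensional two-control structure: since $\sup_\sigma\frac12\sigma^2 p=\frac12\overline\sigma^2p^+-\frac12\underline\sigma^2p^-$, approximate by a switching system or by the constant-control linear schemes, for which explicit error bounds are available.
\end{itemize}
I would try the first route, mirroring \cite{pei2025numerical}.

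\emph{Step 3: conclusion.} Restricting to the tree nodes is harmless by the domain-of-dependence remark. Combining the upper and lower bounds gives the stated $O(\Delta t^{\beta/2}+h^\beta)$ estimate under \eqref{eq:CFL_w}.
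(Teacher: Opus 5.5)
The paper gives no argument for this lemma; it only imports \cite[Theorem 4.11]{pei2025numerical}. A self-contained Krylov-type proof is therefore a reasonable thing to attempt, and your Steps 0--1 are essentially sound. The genuine gap is Step 2. You list two possible strategies for the lower bound $u-U\le C(\Delta t^{\beta/2}+h^\beta)$ but carry out neither. So what you have actually proved is only the one-sided estimate $U-u\le C(\Delta t^{1/4}+h^{1/2})$, and Step 3 has nothing to combine.

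Moreover, the obstruction you cite is not present here. Because the coefficients are constant, the scheme can be posed for all $x\in\mathbb{R}$ and all $t\le 1$ by $U(t,\cdot)=S\,U(t+\Delta t,\cdot)$, with $U(t,\cdot)=\phi$ for $t\in(1-\Delta t,1]$. Here $S=\sup_\sigma L_\sigma$ and $L_\sigma V(x)=V(x)+\frac{\sigma^2\Delta t}{2h^2}\bigl(V(x+h)-2V(x)+V(x-h)\bigr)$; on the grid this reproduces \eqref{dbp2}. Each $L_\sigma$ is linear, monotone and translation invariant, hence commutes with convolution by a nonnegative kernel. So from $U(t)\ge L_\sigma U(t+\Delta t)$ for every $\sigma$, the space--time mollification satisfies $U_\varepsilon(t)\ge S\,U_\varepsilon(t+\Delta t)$. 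The mollified discrete solution is thus a discrete supersolution. The failure of $\sup$ to commute with averaging goes in the favourable direction, exactly as in Step 1.

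With your Step 0 bounds ($x$-Lipschitz, and $|U(t)-U(s)|\le C(\sqrt{|t-s|}+\sqrt{\Delta t})$ for the time extension), Taylor expansion gives $\partial_tU_\varepsilon+\frac12\sup_\sigma\sigma^2\partial_{xx}U_\varepsilon\le C(\Delta t+h^2)\varepsilon^{-3}$ whenever $\Delta t\le\varepsilon^2$. This is automatic for $\varepsilon=(\Delta t+h^2)^{1/4}$. Hence $U_\varepsilon+C(\Delta t+h^2)\varepsilon^{-3}(1-t)+C\varepsilon$ is a smooth supersolution of \eqref{bp} lying above $\phi$ at $t=1$, after the usual treatment of the terminal layer. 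Comparison then gives $u-U\le C\varepsilon+C(\Delta t+h^2)\varepsilon^{-3}$; for instance, apply It\^o's formula along every controlled path. This is the same exponent $\beta=\frac12$ as Step 1. The asymmetry in the Barles--Jakobsen theory, with switching systems and a worse exponent, comes from variable coefficients: there, translates of the discrete solution are no longer discrete solutions and the scheme would have to be shaken. That issue does not arise here.

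Two smaller points concern Step 1.

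\emph{The sub/supersolution labels are reversed.} Each linear inequality $\partial_tu+\frac12\sigma^2u_{xx}\le0$ survives convolution, so $\partial_tu_\varepsilon+\frac12\sup_\sigma\sigma^2\partial_{xx}u_\varepsilon\le0$. For this terminal-value problem that makes $u_\varepsilon$ a \emph{super}solution, one that dominates $u$. That is exactly what the inequality $U-u_\varepsilon\le C(\Delta t+h^2)\varepsilon^{-3}$ requires; a genuine subsolution would give the opposite inequality.

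\emph{No shaking is needed.} Translates of $u$, extended slightly past $t=1$, already solve the same equation. Your perturbed equation shifts the argument of $u^\varepsilon_{xx}$ rather than the coefficients, which is not Krylov's construction, and its continuous dependence is not a standard fact.

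Finally, any rate of this type needs Lipschitz (or H\"older) terminal data. You assume this, but the lemma as stated does not.
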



The EFDM is equivalent to the Trinomial tree approach. we now state the main convergence result for the backward trinomial tree method (see Theorem \ref{Th_TTM_stable}).

\begin{theorem}\label{Th_TTM_stable}
Let $\Delta t = T/N$ and choose the spatial step $h$ such that the stability condition $\frac{\bar{\sigma}^2 \Delta t}{h^2} \leq 1$ is satisfied. Let $U_0^0$ be the value at the root node generated by the backward trinomial tree recursion \eqref{TTM_1}. Then, as $N \to \infty$, $U_0^0$ converges to the G-expectation with the terminal payoff $\phi(X)$:
\begin{equation}
    \lim_{N \to \infty} U_0^0 = u(0,0) = \mathbb{E}[\phi(X)],
\end{equation}
where $u(0,0)$ is the value function of the associated stochastic optimal control problem evaluated at the initial state.
\end{theorem}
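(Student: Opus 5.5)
The proof reduces the root value $U_0^0$ of the tree to the explicit scheme \eqref{dbp} and then applies Barles--Souganidis. One point needs care: the theorem is stated under the weaker mesh condition $\overline\sigma^2\Delta t/h^2\le 1$, not Assumption \ref{ass:CFL_w}. So I will not quote Lemma \ref{lemma:MSC} verbatim. Instead I will recheck the three properties under this weaker condition.

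\textbf{Step 1 (reduction to the explicit scheme).} Extend the recursion \eqref{dbp2} to the full grid $i\in\mathbb Z$ by setting $U_i^N=\phi(x_i)$ for all $i$. The node $U_i^n$ depends only on $U_{i-1}^{n+1},U_i^{n+1},U_{i+1}^{n+1}$. By backward induction, $U_0^0$ depends only on the triangle $\{(n,i):|i|\le n\}$. Hence the tree value and the value of the explicit scheme \eqref{dbp} at $(0,0)$ coincide, as already remarked after \eqref{dbp}.

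\textbf{Step 2 (monotonicity and stability under the weaker condition).} Write \eqref{dbp2} as
\[
U_i^n=\sup_{\sigma\in[\underline\sigma,\overline\sigma]}\Big(\tfrac{\sigma^2\Delta t}{2h^2}U_{i-1}^{n+1}+\big(1-\tfrac{\sigma^2\Delta t}{h^2}\big)U_i^{n+1}+\tfrac{\sigma^2\Delta t}{2h^2}U_{i+1}^{n+1}\Big).
\]
\begin{itemize}
\item \emph{Nonnegative weights.} If $\overline\sigma^2\Delta t/h^2\le1$, then for every admissible $\sigma$ all three weights are nonnegative and they sum to one.
\item \emph{Monotonicity.} A supremum of nondecreasing affine maps is nondecreasing in $(U_{i-1}^{n+1},U_i^{n+1},U_{i+1}^{n+1})$. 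This gives \eqref{g-off-diag}. Property \eqref{g-diag} is immediate, since $g_i$ depends on $U_i^n$ only through $-U_i^n/\Delta t$.
\item \emph{Stability.} A supremum of convex combinations gives $\|U^n\|_\infty\le\|U^{n+1}\|_\infty\le\|\phi\|_\infty$.
\item \emph{Consistency.} This does not involve the mesh ratio. For smooth $\varphi$, Taylor expansion gives $\delta_t\varphi=\partial_t\varphi+O(\Delta t)$ and $\delta_h^2\varphi=\partial_{xx}\varphi+O(h^2)$. Moreover $a\mapsto\sup_\sigma\frac{\sigma^2}{2}a$ is Lipschitz, so its composition with these quotients passes to the limit.
\end{itemize}
Thus the proof of Lemma \ref{lemma:MSC} goes through under $\overline\sigma^2\Delta t/h^2\le1$; the sharper constant $1/2$ is not needed here.

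\textbf{Step 3 (passage to the limit).} Let $N\to\infty$ with $\Delta t=T/N$ and $h$ chosen so that the mesh ratio stays bounded, for instance with $\Delta t/h^2=\lambda$ fixed. Then $\Delta t\to0$ and $h\to0$. I will then apply the Barles--Souganidis argument as in Lemma \ref{thm:viscosity_convergence}:
\begin{itemize}
\item form the half-relaxed limits $\overline U=\limsup^*U$ and $\underline U=\liminf_*U$, which are finite by Step 2;
\item show, using monotonicity and consistency, that they are respectively a viscosity sub- and supersolution of \eqref{bp};
\item check that they satisfy the terminal condition $\phi$ at $t=1$;
\item invoke the comparison principle for the $G$-heat equation with bounded continuous data to get $\overline U\le\underline U$.
\end{itemize}
Hence $\overline U=\underline U=u$, and $U\to u$ locally uniformly. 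In particular, since the grid point $(0,0)$ is fixed, $U_0^0\to u(0,0)$.

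\textbf{Step 4 (identification).} The time reversal $t\mapsto 1-t$ turns \eqref{bp} into \eqref{1dimen}. The nonlinear Feynman--Kac formula then gives $u(0,0)=\mathbb{E}[\phi(X)]$. The HJB/verification correspondence recalled in the introduction identifies $u$ with the control value \eqref{primary_value_f}.

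\textbf{Main obstacle.} The delicate point is the terminal behaviour in Step 3, that is, showing $\overline U(1,x)\le\phi(x)\le\underline U(1,x)$. For merely bounded continuous $\phi$ this needs barrier functions. I would first try $\phi(x_0)\pm\big(\epsilon+C_\epsilon|x-x_0|^2+C'_\epsilon(1-t)\big)$ for suitable constants, and check by hand that they are discrete super- and subsolutions. If $\phi$ is only continuous and not uniformly continuous, I would first regularize $\phi$, then control the error using the $l_\infty$-contraction (monotonicity plus constant preservation) of the scheme.
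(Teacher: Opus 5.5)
Your proposal is correct and follows the paper's route: identify $U_0^0$ with the root value of the explicit scheme \eqref{dbp}, invoke Barles--Souganidis, and identify $u(0,0)$ with $\mathbb{E}[\phi(X)]$ through the nonlinear Feynman--Kac formula. It is tighter than the paper's proof in three places. The paper cites Lemma \ref{thm:viscosity_convergence}, which holds under Assumption \ref{ass:CFL_w} (constant $\tfrac12$), although the theorem assumes only $\overline{\sigma}^2\Delta t/h^2\le 1$; your recheck of the weights shows that $\le 1$ suffices. The paper also leaves implicit both the requirement $h\to 0$ and the attainment of the terminal data. Your quadratic barriers handle the terminal data for any bounded continuous $\phi$: since $\delta_h^2 (x-x_0)^2=2$ exactly, each barrier needs only continuity of $\phi$ at $x_0$, so your regularization fallback is unnecessary.
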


\begin{proof}
By virtue of the algebraic equivalence established in Section 3.1, the root node value $U_0^0$ computed via the trinomial tree \eqref{TTM_1} is identical to the solution $U_0^0$ of the explicit finite difference scheme \eqref{dbp}. Under the specified stability condition \ref{ass:CFL_w}, Theorem \ref{thm:viscosity_convergence} guarantees that $U_0^0 \to u(0,0)$ as $\Delta t, h \to 0$. By the nonlinear Feynman-Kac formula under G-expectation, the viscosity solution of the G-heat equation at $(t,x)=(0,0)$ coincides exactly with the value function of the stochastic control problem, which is the definition of the G-expectation $\mathbb{E}[\phi(X)]$. This completes the proof.
\end{proof}


\section{Convergence of the Responsive Distribution} \label{sec:5}

The trinomial tree method proposed in Section \ref{sec:3} not only yields a numerical procedure for computing the nonlinear expectation $\mathbb{E}[\phi(X)]$, but also induces a family of discrete probability distributions $\{p^n_i\}$ associated with the controlled Markov chain representation of the optimal process. 
In this section, we prove that these discrete distributions converge weakly$^\ast$ to the responsive distribution of the optimally controlled diffusion process $X_t^\ast$. This limit is characterized as a weak solution of the corresponding nonlinear Fokker--Planck equation.

\subsection{G-normal Distribution and Fokker-Planck Equation}

We begin by recalling that, for a fixed test function $\phi$, the $G$-expectation admits the stochastic control representation
$$\mathbb{E}[\phi(X)] = \sup_{\sigma\in[\underline\sigma,\overline\sigma]} E\bigl[ \phi(X_T^\sigma) \mid X_0^\sigma = 0 \bigr]=
E\bigl[\phi(X_T^\ast) \mid X_0^\ast = 0  \bigr] =  \int_{\mathbb{R}} \phi(x)\,f_\phi(X)(dx),$$
where $X^\ast_t$ is the optimally controlled diffusion and $f_\phi$ denotes its terminal law, namely the responsive distribution associated with $\phi$. Therefore, once the measurement $\phi$ is fixed, the nonlinear expectation $\mathbb{E}[\phi(X)]$ can be represented as a linear expectation with respect to a measurement-dependent probability measure.


Let $p(t,x)$ denote the probability density of $ X_t^*$ at time $t$
and position $x$. Standard results from stochastic control and
Fokker--Planck theory (see, e.g., \cite{risken1989fokker,barbu2020nonlinear}) imply that $p$ satisfies
the forward Kolmogorov equation,
\begin{equation}\label{eq:p-Fokker-plank}
	\left\{
	\begin{array}{l}
   \partial_{t} p - \partial_{xx}\left( \frac{\sigma ^{2}(v)}{2} p \right)=0, \quad (t,x)\in
		(0,1]\times \mathbb{R}, \\
		p(t,x)|_{t=0}=\delta (x-0).%
	\end{array}%
	\right.  
\end{equation}
where $v = \partial_{xx} u$ is the second derivative of the solution to the G-heat equation \eqref{bp}, and the volatility coefficient
$\sigma^2(v)=\sigma^2(\partial_{xx} u)$ is determined by the bang--bang optimal control
rule \eqref{optimal_control_1}.

Although the second derivative $\partial_{xx} u$ exists, the composite coefficient
$\sigma^2(\partial_{xx} u)$ may fail to be regular due to the discontinuity of
$\sigma^2(\cdot)$. As a consequence, equation~\eqref{eq:p-Fokker-plank} is considered in the weak (formulation) sense.

\begin{definition}[Weak Solution]
A probability measure family ${p(t,\cdot)}_{t \geq 0}$ is called a weak solution of the Fokker-Planck equation \eqref{eq:p-Fokker-plank} if, for any $T > 0$ and any test function ${\varphi} \in C_c^{{\infty}}([0,T] \times \mathbb{R})$ with compact support and $\varphi(\cdot,T) = 0$, the following holds:
\begin{equation}\label{eq:weak_form}
\int_0^T \int_{{\mathbb{R}}} p\left( \partial_{t} \varphi+\frac{\sigma^2(\partial_{xx} u)}{2} \partial_{xx} \varphi\right) dx dt - \int_\mathbb{R} p(T,x)\varphi(T,x) dx + \int_\mathbb{R} p(0,x)\varphi(0,x) dx   = 0.
\end{equation}
\end{definition}

A natural first step is to investigate the evolution equation satisfied by $v = \partial_{xx} u$. Formally differentiating the $G$-heat equation \eqref{bp}
twice in space yields the nonlinear equation
\begin{equation}\label{eq:v-G}
	\begin{cases}
		\partial_t v + \dfrac{1}{2}\,\partial_{xx}\!\bigl(\sigma^2(v)\,v\bigr) = 0,
		& (t,x)\in(0,1)\times\mathbb{R},\\[0.4em]
		v(1,x) = \phi''(x),
	\end{cases}
\end{equation}
which must be understood in a weak sense.
However, due to the discontinuity of the coefficient $\sigma^2(\cdot)$,
there is no suitable viscosity solution framework available for
equation~\eqref{eq:v-G}, which makes it unsuitable for direct analytical
or numerical treatment.
To overcome this difficulty, we introduce the auxiliary variable,
\begin{equation}\label{eq:def-w}
w(t,x):=\sigma^2\!\bigl(\partial_{xx} u\bigr)\,\partial_{xx} u(t,x).
\end{equation}
Since the coefficient \(\sigma^2(\cdot)\) is piecewise constant, it is constant on the regions \(\{v>0\}\) and \(\{v<0\}\). Consequently, on each of these domains, we have
\[
    \partial_t\!\bigl(\sigma^2(v)v\bigr) = \sigma^2(v)\,\partial_t v.
\]
At points where $v=0$, one may formally write
$$\partial_t \bigl(\sigma^2(v)v\bigr) = \sigma^2(v)\partial_t v + (\overline{\sigma}^2 - \underline{\sigma}^2) v \delta(v) \partial_t v.$$
Since $v \delta(v) = 0$, the singular contribution to the derivative vanishes, ensuring that the identity remains valid in the distributional sense.
Therefore, the transformation \(w=\sigma^2(v)v\) is compatible with the time derivative. Multiplying equation~\eqref{eq:v-G} by \(\sigma^2(v)\) and using \(w=\sigma^2(v)v\) together with \(\sigma^2(v)=\sigma^2(w)\), we conclude that \(w\) satisfies the fully nonlinear parabolic equation
\begin{equation}\label{eq:w-equation}
\begin{cases}
\partial_t w + \dfrac12\,\sigma^2(w)\,\partial_{xx} w = 0,
& (t,x)\in(0,1)\times\mathbb{R},\\[0.4em]
w(1,x) = \sigma^2\!\big(\phi''(x)\big)\,\phi''(x),
& x\in\mathbb{R}.
\end{cases}
\end{equation}
Because the coefficient $\sigma^2(\cdot)$ is piecewise constant and exhibits a discontinuity at $w=0$, classical solutions to \eqref{eq:w-equation} generally do not exist. Consequently, the equation must be understood in the viscosity sense.
The auxiliary variable $w$ thus serves as a crucial analytical bridge connecting the value function $u$ and the probability density $p$.

To place \eqref{eq:w-equation} within the viscosity-solution framework, we introduce the nonlinear elliptic operator
\begin{equation}\label{eq:def_F_w}
F(t,x,D^2 w,w):=\frac12\,\sigma^2(w)\,w_{xx},
\end{equation}
so that \eqref{eq:w-equation} can be rewritten in the compact form
\begin{equation}\label{eq:w_as_F}
w_t+F(t,x,D^2 w,w)=0.
\end{equation}
The operator $F$ is degenerate elliptic with respect to the Hessian
variable, in the sense that 
\[
F(t,x,M,r)\le F(t,x,N,r)\qquad\text{if } \,\, M\le N,
\]
where $\le$ denotes the usual partial ordering on symmetric matrices.
Since the coefficient $\sigma^2(\cdot)$ is piecewise constant, the operator
$F$ is discontinuous in its scalar argument $w$. Following the standard
approach for fully nonlinear equations with discontinuous nonlinearities
(cf.\ Barles--Souganidis~\cite{G1991Convergence}), we introduce the
upper and lower semicontinuous envelopes required to define viscosity solutions.

For any locally bounded function $Z$ on a closed set $C$, define
\[
Z^{*}(x)=\limsup_{y\to x,\;y\in C}Z(y),
\qquad
Z_{*}(x)=\liminf_{y\to x,\;y\in C}Z(y).
\]
Applying this to $F$ with respect to the scalar variable $w$ yields, for \eqref{eq:def_F_w},
\begin{equation}\label{eq:F_star}
F^{*}(t,x, D^2\psi,\psi)=
\begin{cases}
\frac12\,\overline\sigma^{\,2}\,\partial_{xx} \psi, & \psi>0,\\[4pt]
\frac12\,\max\{\overline\sigma^{\,2} \partial_{xx} \psi,\;\underline\sigma^{\,2} \partial_{xx}\psi\}, & \psi=0,\\[4pt]
\frac12\,\underline\sigma^{\,2}\,\partial_{xx} \psi, & \psi<0,
\end{cases}
\end{equation}
and
\begin{equation}\label{eq:F_sub}
F_{*}(t,x, D^2\psi,\psi)=
\begin{cases}
\frac12\,\overline\sigma^{\,2}\,\partial_{xx} \psi, & \psi>0,\\[4pt]
\frac12\,\min\{\overline\sigma^{\,2}\partial_{xx} \psi,\;\underline\sigma^{\,2}\partial_{xx} \psi\}, & \psi=0,\\[4pt]
\frac12\,\underline\sigma^{\,2}\,\partial_{xx} \psi, & \psi<0.
\end{cases}
\end{equation}
In particular, $F^{*}=F_{*}=F$ whenever $\psi\neq 0$.

\begin{definition}[Viscosity sub-/supersolutions.]
A locally bounded function $w^{*}:[0,1]\times\mathbb{R}\to\mathbb{R}$ is called a viscosity
\emph{subsolution} of \eqref{eq:w_as_F} if for every test function
$\psi\in C^{\infty }((0,1)\times\mathbb{R})$ and every point $(t_0,x_0)$ at which
$w^{*}-\psi$ attains a local maximum, one has
\begin{equation}\label{eq:visc_sub_w}
\partial_{t} \psi(t_0,x_0)+F_{*}\!\big(t_0,x_0, D^2\psi(t_0,x_0),\, w^{*}(t_0,x_0)\,\big)\le 0.
\end{equation}
Similarly, $w_{*}$ is a viscosity \emph{supersolution} if for every $\psi$ and every point
$(t_0,x_0)$ at which $w_{*}-\psi$ attains a local minimum, one has
\begin{equation}\label{eq:visc_super_w}
\partial_{t}\psi(t_0,x_0)+F^{*}\!\big(t_0,x_0, D^2\psi(t_0,x_0),\, w_{*}(t_0,x_0)\big)\ge 0.
\end{equation}
\end{definition}
The function $w$ is said to be a (viscosity) solution of \eqref{eq:w-equation}, if it is both sub- and supersolution of 
\eqref{eq:w-equation}.

In the next subsection~\ref{sec:5.2}, we derive a monotone finite difference scheme for \eqref{eq:w-equation} from the discrete scheme \eqref{dbp} satisfied by $U$. We prove that, provided the viscosity solution of \eqref{eq:w-equation} is unique, the numerical solution produced by this scheme converges to it. Moreover, the convergence of the numerical approximation also provides a constructive argument for the existence of the viscosity solution. This result will serve as a key ingredient in establishing the weak convergence of the responsive probability measures to the solution
of the Fokker--Planck equation \eqref{eq:p-Fokker-plank}.

\subsection{A monotone finite difference scheme for the $w$ equation} \label{sec:5.2}


This subsection addresses the key difficulty in the convergence analysis of the responsive distributions, namely, the dependence of the optimal feedback control on the second spatial derivative $u_{xx}$ of the value function. We show that the discrete second-order difference quotients generated by the backward trinomial tree converge locally uniformly to $u_{xx}$. Such a strong convergence result is highly nontrivial for fully nonlinear problems and does not follow from the convergence of the value function alone. To obtain it, we derive an auxiliary monotone scheme for \eqref{eq:w-equation} from the backward trinomial tree. We then deduce the convergence of the discrete second-order difference quotients from the convergence of the auxiliary variable $w=\sigma^2(u_{xx})u_{xx}$.

\paragraph{Derivation of the scheme.}
We begin by recalling the explicit finite difference formulation of the backward trinomial tree derived in Section~\ref{sec:3}:
\[
\delta_t U_i^{n+1}
+\frac12\,\sigma^2\!\big(\delta_h^2 U_i^{n+1}\big)
\,\delta_h^2 U_i^{n+1}=0 .
\]
Let us introduce the discrete curvature and the associated nonlinear flux:
\[
V_i^n := \delta_h^2 U_i^n,
\qquad
W_i^n := \sigma^2(V_i^n)\,V_i^n .
\]
The quantity $V_i^n$ is the discrete analogue of $u_{xx}$, while $W_i^n$ is the discrete counterpart of the auxiliary variable $w=\sigma^2(u_{xx})u_{xx}$. 
Applying the discrete Laplacian $\delta_h^2$ to the scheme for $U$, and using the commutativity of $\delta_h^2$ and $\delta_t$ on the infinite lattice $\mathbb Z$, we obtain
\[
\delta_t V_i^{n+1}
+\delta_h^2
\!\left(
\frac12\,\sigma^2(V_i^{n+1})V_i^{n+1}
\right)=0 .
\]
Since $\sigma^2(\cdot)$ depends only on the sign,
$\sigma^2(V_i^n)=\sigma^2(W_i^n)$ and therefore
\[
V_i^n=\frac{W_i^n}{\sigma^2(W_i^n)} .
\]
Multiplying the above equation by $\sigma^2(W_i^{n+1})$ yields
\begin{equation}\label{eq:sig(W)}
\sigma^2(W_i^{n+1})
\,\delta_t
\!\left(
\frac{W_i^{n+1}}{\sigma^2(W_i^{n+1})}
\right)
+
\frac12\,\sigma^2(W_i^{n+1})\,\delta_h^2 W_i^{n+1}=0 .
\end{equation}
Noting that $\sigma^2(W_i^{n+1})$ remains constant in time at the fixed node
$(n+1,i)$, the equation \eqref{eq:sig(W)} simplifies to
\begin{equation}\label{eq:scheme_w_derived}
\frac{
W_i^{n+1}
-
\dfrac{\sigma^2(W_i^{n+1})}{\sigma^2(W_i^{n})}
\,W_i^{n}
}{\Delta t}
+
\frac{\sigma^2(W_i^{n+1})}{2}
\frac{
W_{i+1}^{n+1}-2W_i^{n+1}+W_{i-1}^{n+1}
}{h^2}
=0 .
\end{equation}
This yields the following explicit finite difference scheme
for the $w$ equation.

\paragraph{Discrete operator form and convergence analysis.}
To analyze this scheme within the viscosity solution framework, we define the piecewise constant interpolation
\[
\begin{aligned}
v^{h,\Delta t}(t,x)=V_i^n,
\qquad
(t,x)\in((n-1)\Delta t,n\Delta t]\times
\Bigl(x_i-\tfrac{h}{2},x_i+\tfrac{h}{2}\Bigr],\\
w^{h,\Delta t}(t,x)=W_i^n,
\qquad
(t,x)\in((n-1)\Delta t,n\Delta t]\times
\Bigl(x_i-\tfrac{h}{2},x_i+\tfrac{h}{2}\Bigr].
\end{aligned}
\]
Motivated by the probabilistic structure of the trinomial tree, we can reformulate the derived scheme \eqref{eq:scheme_w_derived} into a explicit operator form:
\begin{equation}\label{eq:scheme_w_operator}
L^{h,\Delta t}\!\left(t_n,x_i,\,w^{h,\Delta t}(t_n,x_i),\,w^{h,\Delta t}\right)=0,
\qquad n=0,\dots,N-1,\ i\in\mathbb{Z},
\end{equation}
where the discrete operator \(L^{h,\Delta t}\) is defined as
\begin{equation}\label{eq:scheme_w_def}
\begin{aligned}
L^{h,\Delta t}(t,x,\psi(t,x),\psi)
:=
&\Biggl[
\psi(t+\Delta t,x)
-
\frac{\sigma^{2}(\psi(t+\Delta t,x))}{\sigma^{2}(\psi(t,x))}\,\psi(t,x)
\Biggr]\Big/\Delta t \\
&\,\,
+\frac{\sigma^{2}(\psi(t+\Delta t,x))}{2}\,
\frac{\psi(t+\Delta t,x+h)-2\psi(t+\Delta t,x)+\psi(t+\Delta t,x-h)}{h^{2}} .
\end{aligned}
\end{equation}
Finally, matching the terminal condition $u(1,x)=\phi(x)$ implies $u_{xx}(1,x)=\phi''(x)$. Thus, the terminal condition for the $w$-scheme is consistently initialized as:
\begin{equation}\label{eq:scheme_w_terminal}
W_i^{N} = w(1,x_i)=\sigma^{2}(\phi''(x_i))\,\phi''(x_i).
\end{equation}

\begin{assumption}[Comparison principle.]
\label{ass:comparison_w}
The fully nonlinear equation \eqref{eq:w-equation} satisfies a strong
comparison (uniqueness) principle in the class of bounded viscosity
solutions. More precisely, if
$w_1$ is an upper semicontinuous subsolution and $w_2$ is a lower
semicontinuous supersolution of \eqref{eq:w-equation} on
$[0,1]\times\overline\Omega$, then
\[
w_1\le w_2 \qquad \text{on } [0,1]\times\overline\Omega .
\]
\end{assumption}

\begin{lemma}[{Monotonicity.}]
\label{lem:mono_w}
Under stability condition \eqref{eq:CFL_w},
for each \(n\in\{0,\dots,N-1\}\) and \(i\in \mathbb{Z}\), let \(W_i^n\) denote the numerical approximation at grid point \((t_n, x_i)\) defined by the scheme
\begin{equation}\label{eq:scheme_mono}
W_i^n = L^{h,\Delta t}(t_n,x_i,w^{h,\Delta t}(t_n,x_i),w^{h,\Delta t}).
\end{equation}
If \(\widetilde{W}_i^{n+1} \leq W_i^{n+1}\) for all \(i\), then \(\widetilde{W}_i^n \leq W_i^n\) for all \(i\).
\end{lemma}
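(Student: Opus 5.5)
The plan is to avoid treating \eqref{eq:scheme_w_derived} as a relation for $W_i^n$ directly. The factor $\sigma^2(W_i^{n+1})/\sigma^2(W_i^n)$ makes that relation implicit and apparently non-monotone. Instead I would go back to the discrete curvature $V_i^n=W_i^n/\sigma^2(W_i^n)$, for which the update is genuinely explicit. The map $W\mapsto W_i^n$ then factors as the composition of two nondecreasing maps.

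\emph{Step 1: explicit update for $V$.} Set $\lambda:=\Delta t/h^2$. Dividing \eqref{eq:sig(W)} by $\sigma^2(W_i^{n+1})>0$, or equivalently reading off the equation $\delta_t V_i^{n+1}+\frac12\delta_h^2 W_i^{n+1}=0$ derived before it, gives
\[
V_i^n=\frac{W_i^{n+1}}{\sigma^2(W_i^{n+1})}-\lambda W_i^{n+1}+\frac{\lambda}{2}\bigl(W_{i+1}^{n+1}+W_{i-1}^{n+1}\bigr)
=: g\bigl(W_i^{n+1}\bigr)+\frac{\lambda}{2}\bigl(W_{i+1}^{n+1}+W_{i-1}^{n+1}\bigr),
\]
where $g(r):=r/\sigma^2(r)-\lambda r$.

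\emph{Step 2: monotonicity of the $V$-update in $W^{n+1}$.} The neighbor coefficients $\lambda/2$ are nonnegative, so it remains to show that $g$ is nondecreasing. The map $r\mapsto r/\sigma^2(r)$ is continuous and piecewise linear. It has slope $1/\overline\sigma^2$ on $r>0$ and slope $1/\underline\sigma^2$ on $r<0$, and it vanishes at $r=0$ whatever convention is used for $\sigma^2(0)$. Hence $g$ is continuous and piecewise linear, with slopes $1/\overline\sigma^2-\lambda$ and $1/\underline\sigma^2-\lambda$. Both slopes are $\ge 1/\overline\sigma^2-\lambda\ge 0$ by \eqref{eq:CFL_w}, since $\overline\sigma^2\lambda\le\frac12\le1$. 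A continuous piecewise-linear function with nonnegative slopes is nondecreasing. Therefore $\widetilde W^{n+1}\le W^{n+1}$ componentwise implies $\widetilde V_i^n\le V_i^n$ for every $i$.

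\emph{Step 3: from $V^n$ back to $W^n$.} By definition $W_i^n=\sigma^2(V_i^n)V_i^n=:\kappa(V_i^n)$. The function $\kappa$ is continuous and piecewise linear, with slopes $\overline\sigma^2$ and $\underline\sigma^2$, both positive, and $\kappa(0)=0$. So $\kappa$ is increasing, and $\widetilde V_i^n\le V_i^n$ gives $\widetilde W_i^n\le W_i^n$, which is the claim.

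The main obstacle is the sign ambiguity at zero. I must check that the convention for $\sigma^2$ at $0$ (strict versus non-strict inequality in \eqref{eq:sigma_optimal}) never matters. It does not, because both $g$ and $\kappa$ are continuous at $0$. I also need the identity $\sigma^2(V)=\sigma^2(W)$, so that the reconstruction in Step 3 is consistent with the scheme \eqref{eq:scheme_mono}. That identity holds because $V$ and $W$ always have the same sign.
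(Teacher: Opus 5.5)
Your proof is correct, and it takes a different route from the paper's. The paper sets $S=W-\widetilde W$ and asserts that $S$ satisfies a recurrence of the same shape as the scheme. That recurrence has weights $1-\lambda\sigma^2(S_i^{n+1})$ and $\tfrac{\lambda}{2}\sigma^2(S_i^{n+1})$ and prefactor $\sigma^2(S_i^n)/\sigma^2(S_i^{n+1})$. The paper then reads off $S_i^n\ge0$ from the nonnegativity of these coefficients under \eqref{eq:CFL_w}.

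Because the scheme is nonlinear, that identity is not literally true with $\sigma^2$ evaluated at the differences. Take $W_{i\pm1}^{n+1}=\widetilde W_{i\pm1}^{n+1}=0$ and $\widetilde W_i^{n+1}<W_i^{n+1}<0$. Then in fact $S_i^n=(1-\lambda\underline\sigma^2)S_i^{n+1}$, whereas the paper's formula gives $(1-\lambda\overline\sigma^2)S_i^{n+1}$. The identity becomes correct only after two replacements, both by quantities in $[\underline\sigma^2,\overline\sigma^2]$, after which the paper's sign argument goes through:
\begin{itemize}
\item $\sigma^2(S_i^{n+1})$ by the reciprocal of the secant slope of $r\mapsto r/\sigma^2(r)$ between $\widetilde W_i^{n+1}$ and $W_i^{n+1}$;
\item $\sigma^2(S_i^n)$ by the secant slope of $\kappa$ between $\widetilde V_i^n$ and $V_i^n$.
\end{itemize}

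Your factorization $W^{n+1}\mapsto V^n\mapsto W^n$ needs no such linearization. It uses only that $g$ is nondecreasing, that the neighbour weights $\lambda/2$ are nonnegative, and that $\kappa$ is increasing. It also makes explicit that the apparently implicit relation \eqref{eq:scheme_w_derived} determines $W_i^n$ uniquely, because $\kappa$ is the inverse of $r\mapsto r/\sigma^2(r)$. And it shows that the convention for $\sigma^2(0)$ plays no role.

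What the paper's template offers, once its coefficients are corrected, is an explicit linear recurrence with nonnegative coefficients for the difference of two solutions. That is the usual form for discrete maximum-principle arguments. Your composition argument reaches the same monotonicity conclusion more directly and needs no correction.
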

The proof is given in Appendix~\ref{app:proof_mono_w}.

\begin{lemma}[Stability.]
Under the stability condition \eqref{eq:CFL_w}, the scheme is $\ell^\infty$-stable:
\begin{equation}\label{eq:stab_w}
\sup_{n,i}|W_i^n|\ \le\ \sup_i |W_i^{N}|.
\end{equation}
\end{lemma}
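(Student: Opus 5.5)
The plan is to solve the $w$-scheme \eqref{eq:scheme_w_derived} explicitly for $W_i^n$ and to read the result as a (sign-weighted) convex combination of neighbouring values at level $n+1$. Write $\lambda:=\Delta t/h^2$, and introduce
\[
\rho_i^n:=\frac{\sigma^2(W_i^n)}{\sigma^2(W_i^{n+1})}\in\Bigl[\tfrac{\underline\sigma^2}{\overline\sigma^2},\tfrac{\overline\sigma^2}{\underline\sigma^2}\Bigr],
\qquad
a_i:=\tfrac12\sigma^2(W_i^{n+1})\lambda .
\]
Multiplying \eqref{eq:scheme_w_derived} by $\Delta t$ and solving for $W_i^n$ gives
\[
W_i^n=\rho_i^n\Bigl[a_iW_{i-1}^{n+1}+(1-2a_i)W_i^{n+1}+a_iW_{i+1}^{n+1}\Bigr]=:\rho_i^n\,\widehat W_i^{n+1}.
\]
The cleaner route avoids $\rho$ altogether. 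By construction $W_i^n=\sigma^2(V_i^n)V_i^n$, and $V^n=\delta_h^2U^n$ satisfies
\[
V_i^n=\frac{1}{\sigma^2(W_i^{n+1})}\,\widehat W_i^{n+1}.
\]
Hence $W_i^n=\sigma^2(\widehat W_i^{n+1})\,\widehat W_i^{n+1}/\sigma^2(W_i^{n+1})$, because the sign of $V_i^n$ equals the sign of $\widehat W_i^{n+1}$.

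The estimate then proceeds in three steps. First, under \eqref{eq:CFL_w} we have $a_i\le \tfrac14$, so $1-2a_i\ge \tfrac12\ge0$. The weights are therefore nonnegative and sum to one, which gives $|\widehat W_i^{n+1}|\le \max_k|W_k^{n+1}|$. Second, we control the factor $\rho_i^n$. If $\operatorname{sign}\widehat W_i^{n+1}=\operatorname{sign}W_i^{n+1}$, then $\rho_i^n=1$ and $|W_i^n|\le\|W^{n+1}\|_\infty$ immediately. The dangerous cases are those where the sign flips. If $W_i^{n+1}<0$ but $\widehat W_i^{n+1}>0$, then $\rho_i^n=\overline\sigma^2/\underline\sigma^2>1$. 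In this case the convex combination is positive while the centre value is negative, so the centre is strictly dominated. I would bound
\[
\widehat W_i^{n+1}\le a_i\bigl(W_{i-1}^{n+1}+W_{i+1}^{n+1}\bigr)\le 2a_i\,M^+,\qquad M^+:=\max_k (W_k^{n+1})^+ .
\]
Since $a_i=\tfrac12\underline\sigma^2\lambda$ here, this yields $W_i^n\le \tfrac{\overline\sigma^2}{\underline\sigma^2}\,\underline\sigma^2\lambda M^+=\overline\sigma^2\lambda M^+\le \tfrac12M^+$ by \eqref{eq:CFL_w}. The opposite flip, $W_i^{n+1}>0$ and $\widehat W_i^{n+1}<0$, has $\rho<1$ and is harmless. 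Third, we combine the cases to get $\|W^n\|_\infty\le\|W^{n+1}\|_\infty$ and iterate backward from $n=N$, which gives \eqref{eq:stab_w}.

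The main obstacle is the sign-flip case: the amplification factor $\overline\sigma^2/\underline\sigma^2$ must be absorbed, and it can only be absorbed because the CFL constant $\tfrac12$ in Assumption~\ref{ass:CFL_w} is stricter than the positivity condition \eqref{eq:CFL_w_basic}-type bound $\overline\sigma^2\lambda\le1$ used for $U$. The key is to discard the negative centre term rather than use it. I would also separately check the degenerate case $W_i^{n+1}=0$, where $\sigma^2(0)=\underline\sigma^2$ by the convention \eqref{eq:sigma_optimal}. It falls under the first flip bound with $M^+$ replaced by the neighbour maxima, since the centre term then vanishes.
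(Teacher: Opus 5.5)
\paragraph{Verdict.} Your proof is correct. The paper itself gives no proof of this lemma: the appendix covers monotonicity, consistency and convergence of the $W$-scheme, but nothing for \eqref{eq:stab_w}.

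\paragraph{What your argument adds.} The closest material in the paper is the representation in Appendix~\ref{app:proof_mono_w}, which is exactly your $W_i^n=\rho_i^n\widehat W_i^{n+1}$, together with the maximum-principle argument of Lemma~\ref{lemma:MSC}(ii). Combining these two does not give the lemma directly. A convex combination multiplied by $\rho_i^n=\overline\sigma^2/\underline\sigma^2>1$ is no longer bounded by $\|W^{n+1}\|_\infty$. Your case split supplies the missing idea. In the only dangerous case, $W_i^{n+1}<0<\widehat W_i^{n+1}$, you drop the nonpositive centre term. The factor $\underline\sigma^2$ in $a_i$ then cancels the $1/\underline\sigma^2$ in $\rho_i^n$, leaving $0<W_i^n\le\overline\sigma^2\lambda M^+$. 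The opposite flip has $\rho_i^n<1$. The non-flip and degenerate cases are plain convex combinations.

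\paragraph{The role of the CFL constant.} Your closing claim that the amplification can \emph{only} be absorbed because of the constant $\tfrac12$ in \eqref{eq:CFL_w} is not accurate. The bound $W_i^n\le\overline\sigma^2\lambda M^+$ is already $\le M^+$ under the basic condition \eqref{eq:Cond_CFL}, $\overline\sigma^2\lambda\le 1$. That condition also keeps all the weights $a_i$ and $1-2a_i$ nonnegative. So the lemma holds under \eqref{eq:Cond_CFL}, and the $\tfrac12$ only buys an extra contraction factor in the flip case. The label you cite for the basic condition does not exist; it is \eqref{eq:Cond_CFL}.

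\paragraph{The terminal datum.} Your ``cleaner route'' identifies $W_i^n$ with $\sigma^2(\delta_h^2U_i^n)\,\delta_h^2U_i^n$. This is literally true only if the terminal datum is $\sigma^2(\delta_h^2\phi)\,\delta_h^2\phi$, whereas \eqref{eq:scheme_w_terminal} uses $\phi''(x_i)$. Nothing is lost, however. The scheme \eqref{eq:scheme_w_derived} is equivalent to $W_i^n/\sigma^2(W_i^n)=\widehat W_i^{n+1}/\sigma^2(W_i^{n+1})$. Since $w\mapsto w/\sigma^2(w)$ is a strictly increasing bijection of $\mathbb R$, $W_i^n$ is uniquely determined by the scheme itself and has the sign of $\widehat W_i^{n+1}$. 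That is all your estimate uses. Finally, on the infinite lattice, read $\max_k$ as $\sup_k$.
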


\begin{definition}
The numerical scheme \eqref{eq:scheme_w_def} is said to be \emph{consistent} with the continuous equation \eqref{eq:w-equation} if and only if, for any $(t_0, x_0) \in (0,T) \times \Omega$ and any test function $\psi \in C_c^{\infty}([0,T) \times \Omega )$, the following holds:
\small
\begin{equation}\label{eq:scheme_consist}
\begin{aligned}
\limsup_{\substack{h,\Delta t \to 0,\, \xi \to 0,\\ (\tau,y) \to (t_0,x_0)}}
\mathcal{L}^{h,\Delta t}\big(\tau,y,\psi^{h,\Delta t}(\tau,y) + \xi, \psi + \xi\big)
&\ge
\limsup_{\substack{(\tau,y) \to (t_0,x_0) \\ (\tau,y)\in[0,T)\times\Omega}}
\Big(
\partial_t \psi(\tau,y)
+ \tfrac{1}{2}\sigma^2(\psi)\,\partial_{xx}\psi(\tau,y)
\Big), \\[4pt]
\liminf_{\substack{h,\Delta t \to 0,\, \xi \to 0,\\ (\tau,y) \to (t_0,x_0)}}
\mathcal{L}^{h,\Delta t}\big(\tau,y,\psi^{h,\Delta t}(\tau,y) + \xi, \psi + \xi\big)
&\le
\liminf_{\substack{(\tau,y) \to (t_0,x_0) \\ (\tau,y)\in[0,T)\times\Omega}}
\Big(
\partial_t \psi(\tau,y)
+ \tfrac{1}{2}\sigma^2(\psi)\,\partial_{xx}\psi(\tau,y)
\Big).
\end{aligned}
\end{equation}
\end{definition}

\begin{lemma}[Consistency.]
Let $\psi\in C_c^{\infty}([0,1) \times \mathbb{R})$ be a smooth test function. Then the scheme
\eqref{eq:scheme_w_def} is consistent with \eqref{eq:w-equation} in the viscosity sense:
\begin{equation}\label{eq:cons_w}
\lim_{h,\Delta t\to0}\,
L^{h,\Delta t}(t,x,\psi(t,x),\psi)
=
\partial_t\psi(t,x)+\frac12\sigma^2(\psi(t,x))\,\partial_{xx}\psi(t,x),
\end{equation}
with the upper/lower semicontinuous envelopes of the nonlinear term at $\psi(t,x)=0$.
\end{lemma}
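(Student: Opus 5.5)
The plan is a direct Taylor expansion of the operator $L^{h,\Delta t}$ in \eqref{eq:scheme_w_def} evaluated at a smooth $\psi$, followed by a case analysis according to the sign of $\psi(t_0,x_0)$. Fix $(t_0,x_0)$ and sequences $h,\Delta t\to0$, $\xi\to0$, $(\tau,y)\to(t_0,x_0)$. Write $\psi_\xi:=\psi+\xi$. The operator splits as
\[
L^{h,\Delta t}(\tau,y,\psi_\xi(\tau,y),\psi_\xi)=\frac{\psi_\xi(\tau+\Delta t,y)-\psi_\xi(\tau,y)}{\Delta t}+\Bigl(1-\tfrac{\sigma^2(\psi_\xi(\tau+\Delta t,y))}{\sigma^2(\psi_\xi(\tau,y))}\Bigr)\frac{\psi_\xi(\tau,y)}{\Delta t}+\frac{\sigma^2(\psi_\xi(\tau+\Delta t,y))}{2}\,\delta_h^2\psi(\tau+\Delta t,y).
\]
By smoothness and compact support, the first term equals $\partial_t\psi(\tau,y)+O(\Delta t)$. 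The last factor $\delta_h^2\psi$ equals $\partial_{xx}\psi(\tau+\Delta t,y)+O(h^2)$, and this tends to $\partial_{xx}\psi(t_0,x_0)$. Note that the shift $\xi$ cancels in both difference quotients.

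\textbf{Case $\psi(t_0,x_0)\neq0$.} By continuity, for all sufficiently small parameters both $\psi_\xi(\tau,y)$ and $\psi_\xi(\tau+\Delta t,y)$ have the same sign as $\psi(t_0,x_0)$. The ratio $\sigma^2(\cdot)/\sigma^2(\cdot)$ is therefore identically $1$, so the middle term vanishes. The $\sigma^2$ factor is then the constant $\sigma^2(\psi(t_0,x_0))$, and the limit is exactly $\partial_t\psi+\tfrac12\sigma^2(\psi)\partial_{xx}\psi$ at $(t_0,x_0)$. Hence both the $\limsup$ and $\liminf$ inequalities of \eqref{eq:scheme_consist} hold with equality, and $F^*=F_*=F$ there.

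\textbf{Case $\psi(t_0,x_0)=0$.} This is the main obstacle. The sign of $\psi_\xi$ at the two time levels may differ, and then the middle term is $\bigl(1-\rho\bigr)\psi_\xi(\tau,y)/\Delta t$ with $\rho\in\{\overline\sigma^2/\underline\sigma^2,\underline\sigma^2/\overline\sigma^2\}$.

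1. Show that this term has a sign-controlled limit. A sign change between $\tau$ and $\tau+\Delta t$ forces
\[
|\psi_\xi(\tau,y)|\le|\psi_\xi(\tau+\Delta t,y)-\psi_\xi(\tau,y)|\le\|\partial_t\psi\|_\infty\Delta t,
\]
so the middle term stays bounded, by $C\|\partial_t\psi\|_\infty$.
2. Identify its value. It equals a convex-type combination that rescales the time-derivative contribution by a factor lying between $\underline\sigma^2/\overline\sigma^2$ and $\overline\sigma^2/\underline\sigma^2$.
3. Take the extremes over sign configurations. The $\sigma^2$ factor in front of $\partial_{xx}\psi$ ranges over $\{\underline\sigma^2,\overline\sigma^2\}$, so the $\limsup$ of the spatial part is $\tfrac12\max\{\overline\sigma^2\partial_{xx}\psi,\underline\sigma^2\partial_{xx}\psi\}$ and the $\liminf$ is the corresponding $\min$. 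These are precisely the envelopes $F^*$ and $F_*$ of \eqref{eq:F_star}--\eqref{eq:F_sub}.

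The delicate point is the middle (ratio) term in step 2. My first attempt is to rewrite the whole temporal part as
\[
\sigma^2(\psi_\xi(\tau+\Delta t))\,\delta_t\!\bigl(\psi_\xi/\sigma^2(\psi_\xi)\bigr),
\]
mirroring \eqref{eq:sig(W)}. The function $s\mapsto s/\sigma^2(s)$ is monotone and Lipschitz with constants between $1/\overline\sigma^2$ and $1/\underline\sigma^2$. Hence the difference quotient converges to $\partial_t\psi$ times a factor in $[\underline\sigma^2/\overline\sigma^2,\overline\sigma^2/\underline\sigma^2]$ in the crossing configuration, and to exactly $\partial_t\psi$ otherwise. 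Combined with the envelope bounds on the diffusion term, this yields the two inequalities in \eqref{eq:scheme_consist} with the semicontinuous envelopes, as claimed in the statement.

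If the time-derivative factor cannot be absorbed exactly, I would fall back on the observation that it is only needed at touching points of the viscosity test. There $\psi(t_0,x_0)=0$, and one may restrict to configurations where no sign crossing occurs in the limit, either by perturbing $\xi$ or by choosing subsequences; the factor is then $1$ and the result is exactly $\partial_t\psi+F^*$, respectively $\partial_t\psi+F_*$.
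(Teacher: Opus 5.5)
Your decomposition of $L^{h,\Delta t}$ into a time quotient, a ratio term and a diffusion term matches the paper's Steps 1--2, and so does your treatment of the case $\psi(t_0,x_0)\neq0$. You are also right that the ratio term survives when $\psi(t_0,x_0)=0$. The paper only shows that it vanishes for $\psi(t_0,x_0)\neq0$, and then writes the $\psi=0$ line of the combined limit as if the term were absent, so the paper's argument has the same hole. Your step~1 bound shows the term is harmless when $\partial_t\psi(t_0,x_0)=0$; the trouble occurs exactly when $\psi=0\neq\partial_t\psi$.

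The gap is your step~3. $F^*$ and $F_*$ only relax the coefficient of $\partial_{xx}\psi$, and nothing absorbs a factor $c\neq1$ in front of $\partial_t\psi$. Concretely, let $\alpha=\partial_t\psi(t_0,x_0)>0$ and $\beta=\partial_{xx}\psi(t_0,x_0)\ge0$, fix $\theta\in(0,1)$, let $(\tau,y)\to(t_0,x_0)$, and put $\xi=-\psi(\tau,y)-(1-\theta)\alpha\Delta t\to0$. Then $\psi(\tau,y)+\xi<0<\psi(\tau+\Delta t,y)+\xi$ for small $\Delta t$, and
\[
L^{h,\Delta t}(\tau,y,\psi(\tau,y)+\xi,\psi+\xi)\longrightarrow\Bigl(\theta+(1-\theta)\frac{\overline\sigma^{2}}{\underline\sigma^{2}}\Bigr)\alpha+\frac12\overline\sigma^{2}\beta .
\]
This exceeds $\partial_t\psi+F^*=\alpha+\tfrac12\overline\sigma^{2}\beta$ by $(1-\theta)(\overline\sigma^{2}/\underline\sigma^{2}-1)\alpha>0$. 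So the limit points of this $L$ are not confined between $\partial_t\psi+F_*$ and $\partial_t\psi+F^*$. The upper bound $\limsup L\le\partial_t\psi+F^*$, which a Barles--Souganidis argument consumes, is false for this operator, and no refinement of step~3 can give it.

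Your fallback does not repair this. The bounds must hold along every sequence, and in the proof of Theorem~\ref{thm:conv_w} the sequence $(t_n,y_n,\xi_n)$ is dictated by the maxima of $w^{\rho_n}-\psi$, so crossing configurations cannot be discarded. Exhibiting favourable sequences only yields inequalities of the form $\limsup L\ge\cdots$ and $\liminf L\le\cdots$. Those are the directions literally written in the paper's definition, but they say nothing along the touching sequence.

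What does work is to notice that the diffusion coefficient is also $\sigma^2(b)$, where $b=\psi(\tau+\Delta t,y)+\xi$. With $a=\psi(\tau,y)+\xi$ and $g(s)=s/\sigma^2(s)$,
\[
L^{h,\Delta t}=\sigma^2(b)\,K,\qquad K:=\frac{g(b)-g(a)}{\Delta t}+\frac12\,\delta_h^2\psi(\tau+\Delta t,y).
\]
Under the CFL condition, $K$ is monotone: nondecreasing in the values at $\tau+\Delta t$ and nonincreasing in $a$. Its limit points are $\lambda\alpha+\tfrac12\beta$ with $\lambda\in[1/\overline\sigma^{2},1/\underline\sigma^{2}]$. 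Comparing with the endpoints $\lambda=1/\overline\sigma^{2}$ and $\lambda=1/\underline\sigma^{2}$, then multiplying by $\overline\sigma^{2}$ or $\underline\sigma^{2}$, shows two things:
\begin{itemize}
\item a nonnegative limit point forces $\partial_t\psi+F^*\ge0$;
\item a nonpositive limit point forces $\partial_t\psi+F_*\le0$.
\end{itemize}
This sign-level consistency is what the sub/supersolution step actually uses. The quantitative two-sided statement you set out to prove is not available for $L$ and has to be replaced by it.
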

The detailed verification is given in Appendix~\ref{app:proof_cons_w}. 

\begin{theorem}[Convergence of the auxiliary monotone scheme]
\label{thm:conv_w}
Under the Assumptions \ref{ass:CFL_w} and \ref{ass:comparison_w}, let $W_i^n$ be the solution of the discrete scheme
\eqref{eq:scheme_w_operator}--\eqref{eq:scheme_w_terminal}, and let $w^{h,\Delta t}$ be its
piecewise constant interpolation. Then, as $h,\Delta t\to0$,
\[
w^{h,\Delta t}\ \longrightarrow\ w
\quad\text{locally uniformly on }(0,1)\times\mathbb{R},
\]
where $w$ is the unique continuous viscosity solution of \eqref{eq:w-equation}.
\end{theorem}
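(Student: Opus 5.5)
The plan is to run the Barles--Souganidis half-relaxed limits argument on the auxiliary scheme \eqref{eq:scheme_w_operator}--\eqref{eq:scheme_w_terminal}. The three structural ingredients are already available: monotonicity (Lemma~\ref{lem:mono_w}), $\ell^\infty$-stability \eqref{eq:stab_w}, and consistency \eqref{eq:cons_w} with the semicontinuous envelopes $F^*,F_*$ of \eqref{eq:F_star}--\eqref{eq:F_sub}. By stability, $\sup|w^{h,\Delta t}|\le \sup_i|\sigma^2(\phi''(x_i))\phi''(x_i)|\le \overline\sigma^2\|\phi''\|_\infty$, which is uniform in the mesh. (This presumes $\phi\in C^2$ with bounded $\phi''$, which is implicit in the terminal condition \eqref{eq:scheme_w_terminal}.) Hence the relaxed limits
\[
\overline w(t,x)=\limsup_{h,\Delta t\to0,\ (s,y)\to(t,x)} w^{h,\Delta t}(s,y),\qquad
\underline w(t,x)=\liminf_{h,\Delta t\to0,\ (s,y)\to(t,x)} w^{h,\Delta t}(s,y)
\]
are finite, with $\overline w$ upper semicontinuous, $\underline w$ lower semicontinuous, and $\underline w\le\overline w$.

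\textbf{Step 1: $\overline w$ is a subsolution.} Take $\psi$ smooth with $\overline w-\psi$ having a strict local maximum at $(t_0,x_0)\in(0,1)\times\mathbb R$, and normalize so that $\overline w(t_0,x_0)=\psi(t_0,x_0)$. Choose grid points $(\tau_k,y_k)$ at which $w^{h_k,\Delta t_k}-\psi$ is maximal on a neighbourhood. Then $(\tau_k,y_k)\to(t_0,x_0)$ and $\xi_k:=w^{h_k,\Delta t_k}(\tau_k,y_k)-\psi(\tau_k,y_k)\to0$, and $w^{h_k,\Delta t_k}\le\psi+\xi_k$ near that point. Monotonicity lets us replace $w^{h,\Delta t}$ by $\psi+\xi_k$ in the scheme at $(\tau_k,y_k)$; here we use the form of monotonicity in Lemma~\ref{lem:mono_w}, i.e.\ the value at time $t_n$ is nondecreasing in the data at $t_{n+1}$. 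This gives $L^{h_k,\Delta t_k}(\tau_k,y_k,\psi(\tau_k,y_k)+\xi_k,\psi+\xi_k)\le0$, up to the sign convention of the backward-in-time scheme. Passing to the limit with the consistency lemma, in its envelope form \eqref{eq:scheme_consist}, yields
\[
\partial_t\psi(t_0,x_0)+F_*\big(t_0,x_0,D^2\psi,\overline w(t_0,x_0)\big)\le0.
\]

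\textbf{Step 2: $\underline w$ is a supersolution.} This is symmetric to Step 1, working at minima and using $F^*$. One point needs care: when $\psi(t_0,x_0)=0$, the ratio $\sigma^2(\psi(t+\Delta t,x))/\sigma^2(\psi(t,x))$ in \eqref{eq:scheme_w_def} may be $\overline\sigma^2/\underline\sigma^2$ or its reciprocal. Multiplied by $\psi(t,x)+\xi\to0$, this term contributes $\psi(t,x)\big(1-\text{ratio}\big)/\Delta t$. It is the only way in which the scheme's jump in $\sigma^2$ could fail to be absorbed by the envelopes.

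\textbf{Step 3: terminal condition.} We need $\overline w(1,\cdot)\le w(1,\cdot)\le\underline w(1,\cdot)$, i.e.\ no boundary layer at $t=1$. Following Barles--Souganidis, one could either verify the relaxed terminal condition in the viscosity sense, or build discrete barriers $W^\pm$. For the barriers, take $\sigma^2(\phi'')\phi''\pm C(1-t)$ when $\phi''$ is Lipschitz, combined with smoothing otherwise, and use monotonicity to compare. This gives $|W_i^n-W_i^N|\le\omega(1-t_n)$ uniformly. Given this, Assumption~\ref{ass:comparison_w} yields $\overline w\le\underline w$, so $\overline w=\underline w=:w$ is continuous and is the unique viscosity solution. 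The standard argument then upgrades the equality of the relaxed limits to locally uniform convergence of $w^{h,\Delta t}$ on $(0,1)\times\mathbb R$.

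\textbf{Main obstacle.} The hard part is consistency near $\{w=0\}$. The factor $\frac{\sigma^2(\psi(t+\Delta t,x))}{\sigma^2(\psi(t,x))}\psi(t,x)/\Delta t$ is $O(|\psi|/\Delta t)$ when the sign of $\psi$ changes within one time step. That is only $O(|\partial_t\psi|)$, so it stays bounded, but it perturbs the time derivative by a factor in $[\underline\sigma^2/\overline\sigma^2,\overline\sigma^2/\underline\sigma^2]$. My first attempt would be to show that this perturbation is dominated by the envelopes: at a zero of $\psi$ the PDE inequality only involves $\partial_t\psi$ together with the $\max$ or $\min$ in $F^*,F_*$. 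I would split into the cases $\partial_t\psi(t_0,x_0)=0$, where the term vanishes in the limit, and $\partial_t\psi(t_0,x_0)\neq0$. In the second case the sign change of $\psi$ along $t$ is transversal, and the scaled ratio acts like multiplying the whole equation by a positive constant, which does not change the sign of the sub- or supersolution inequality. If this fails, the fallback is to weaken the claimed consistency to the relaxed form \eqref{eq:scheme_consist}, which is exactly what the half-relaxed limits argument needs.
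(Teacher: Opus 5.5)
Your route is the paper's: half-relaxed limits, then monotonicity, stability and consistency, then Assumption~\ref{ass:comparison_w}. You are more careful than the paper's appendix proof, which never checks the terminal condition at $t=1$, never writes out the comparison step, and disposes of the ratio term only when $\psi(t_0,x_0)\neq0$. But the zero-crossing step is where you must commit, and only your first idea works.

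To make it precise, divide $L^{h,\Delta t}$ by $\sigma^2(\psi(t+\Delta t,x))>0$ and set $\beta(r)=r/\sigma^2(r)$:
\[
\widetilde L^{h,\Delta t}=\frac{\beta(\psi(t+\Delta t,x))-\beta(\psi(t,x))}{\Delta t}+\frac{\psi(t+\Delta t,x+h)-2\psi(t+\Delta t,x)+\psi(t+\Delta t,x-h)}{2h^2}.
\]
This operator has three useful properties.
\begin{itemize}
\item It has the same sign as $L^{h,\Delta t}$.
\item It is nonincreasing in the value at time $t$ and nondecreasing in the values at time $t+\Delta t$ once $\overline\sigma^2\Delta t\le h^2$. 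This also gives a correct proof of Lemma~\ref{lem:mono_w}, since $W_i^n=\beta^{-1}\big(\beta(W_i^{n+1})+\tfrac{\Delta t}{2}\delta_h^2W_i^{n+1}\big)$. The appendix proof applies the nonlinear scheme to a difference of solutions, which is not valid.
\item Since $\beta$ is increasing with slopes $1/\overline\sigma^2$ and $1/\underline\sigma^2$, its time quotient equals $c\,(\psi(t+\Delta t,x)-\psi(t,x))/\Delta t$ for some $c\in[1/\overline\sigma^2,1/\underline\sigma^2]$, whatever the shift $\xi$.
\end{itemize}
Hence, at a zero of $\psi$, every limit point of $\widetilde L^{h,\Delta t}$ lies between $A=\partial_t\psi/\overline\sigma^2+\tfrac12\partial_{xx}\psi$ and $B=\partial_t\psi/\underline\sigma^2+\tfrac12\partial_{xx}\psi$. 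Moreover $\max\{A,B\}\ge0\iff\partial_t\psi+F^*\ge0$ and $\min\{A,B\}\le0\iff\partial_t\psi+F_*\le0$, which is all the half-relaxed argument needs. Away from zeros, $c\to1/\sigma^2(\psi(t_0,x_0))$ and this is ordinary consistency.

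Your fallback fails. For the unnormalized operator, the envelope consistency $\limsup L^{h,\Delta t}\le\partial_t\psi+F^*$ is false. Take $\psi=a(t-t_0)$ with $a>0$, and $\psi(\tau)+\xi=-\theta a\Delta t$. Then $L^{h,\Delta t}=a\big[1+\theta(\overline\sigma^2/\underline\sigma^2-1)\big]\to a\,\overline\sigma^2/\underline\sigma^2>a$ as $\theta\to1$. Only sign-consistency survives, so the normalization is necessary.

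Second, Step 1 has the wrong orientation, inherited from \eqref{eq:visc_sub_w}. At a maximum, $w^{\rho_k}\le\psi+\xi_k$ at time $\tau_k+\Delta t_k$, with equality at $(\tau_k,y_k)$. Monotonicity of the explicit map then gives $\widetilde L^{\rho_k}(\tau_k,y_k,\psi+\xi_k)\ge0$, not $\le0$. The limit is $\partial_t\psi+F^*(\overline w(t_0,x_0),\partial_{xx}\psi)\ge0$, which is the correct subsolution inequality for a terminal-value problem with $F$ nondecreasing in $\partial_{xx}\psi$. Symmetrically, you get $\partial_t\psi+F_*\le0$ at minima of $\underline w-\psi$. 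The inequality you wrote cannot hold at any point where $\overline w$ is touched from above. Indeed, $\psi+K(x-x_0)^2$ is again a test function for every $K\ge0$, and $F_*\to+\infty$ as $K\to\infty$. So Assumption~\ref{ass:comparison_w} must be read with these corrected definitions.

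Finally, in Step 3 the barrier $g\pm C(1-t)$ with $g=\sigma^2(\phi'')\phi''$ fails even for smooth $\phi$. The function $g$ has a corner wherever $\phi''$ changes sign with $\phi'''\neq0$, so $\delta_h^2g$ is of order $1/h$ there. Always mollify $g$: if $g_\epsilon$ is smooth with $\|g_\epsilon-g\|_\infty\le\omega(\epsilon)$, then $g_\epsilon\pm\omega(\epsilon)\pm\tfrac{\overline\sigma^2}{2}\|g_\epsilon''\|_\infty(1-t)$ are discrete super- and subsolutions because $\beta'\ge1/\overline\sigma^2$. This check is needed because the scheme does not commute with constants. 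Optimizing in $\epsilon$ then gives your modulus at $t=1$, provided $\phi''$ is bounded and uniformly continuous.
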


\begin{proofsketch}
By \eqref{eq:scheme_mono}, \eqref{eq:stab_w}, and \eqref{eq:cons_w}, the scheme is monotone, $\ell^\infty$-stable,
and consistent.  Therefore, the convergence extends from the general framework of
Barles--Souganidis {\cite{G1991Convergence}}, adapted to the possibly discontinuous nonlinearity
$\sigma^2(\cdot)$ via upper/lower semicontinuous envelopes.  The full proof (including the limsup/liminf
argument and the identification of the limiting sub-/supersolutions) is given in
Appendix~\ref{app:proof_conv_w}.
\end{proofsketch}

\begin{theorem} \label{thm:conv_v}
Since $v^{h,\Delta t} := w^{h,\Delta t} / \sigma^2(w^{h,\Delta t})$, the discrete second-order difference $v^{h,\Delta t}$ converges to the exact second derivative $u_{xx}$ locally uniformly. Specifically, for every compact set $K \subset (0,1)\times\mathbb{R}$, we have:
\[
\sup_{(t,x)\in K} \bigl| v^{h,\Delta t} - u_{xx} \bigr| \longrightarrow 0, \quad \text{as } h, \Delta t \to 0.
\]
The full proof is given in
Appendix~\ref{app:proof_conv_v}.
\end{theorem}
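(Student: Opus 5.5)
The plan is to transfer the locally uniform convergence of Theorem~\ref{thm:conv_w} through the map $\Phi(s):=s/\sigma^2(s)$. This map equals $s/\overline\sigma^2$ for $s>0$ and $s/\underline\sigma^2$ for $s\le 0$. It is piecewise linear and continuous at $s=0$, where both branches vanish. It is therefore globally Lipschitz with constant $1/\underline\sigma^2$:
\[
|\Phi(s)-\Phi(r)|\le \frac{1}{\underline\sigma^{2}}\,|s-r|\qquad\text{for all } s,r\in\mathbb R .
\]
I will check this case by case. If $s$ and $r$ have the same sign, the bound is immediate. If $s>0\ge r$, then $|\Phi(s)-\Phi(r)| = s/\overline\sigma^2 - r/\underline\sigma^2 \le (s-r)/\underline\sigma^2$. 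This Lipschitz estimate is the heart of the argument: it shows that the discontinuity of $\sigma^2(\cdot)$ is harmless here, because the multiplier $\sigma^2(v)$ switches exactly where $v$ vanishes.

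Next I identify both sides of the claim through $\Phi$. On the discrete side, $W_i^n=\sigma^2(V_i^n)V_i^n$ and $\sigma^2(V_i^n)=\sigma^2(W_i^n)$ because the sign is preserved. Hence $V_i^n=\Phi(W_i^n)$, that is, $v^{h,\Delta t}=\Phi(w^{h,\Delta t})$ pointwise on the grid cells. On the continuous side, the definition \eqref{eq:def-w} gives $w=\sigma^2(u_{xx})u_{xx}$, so $u_{xx}=\Phi(w)$, where $w$ is the unique continuous viscosity solution of \eqref{eq:w-equation} supplied by Theorem~\ref{thm:conv_w}.

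The identification $u_{xx}=\Phi(w)$ is the main obstacle. It rests on knowing that the function $\sigma^2(u_{xx})u_{xx}$ really is the viscosity solution of \eqref{eq:w-equation}. The derivation in Section~\ref{sec:5} of \eqref{eq:w-equation} is only formal, so I would argue it directly. My first attempt would be to pass to the limit in the discrete identity $V=\delta_h^2U$ in the sense of distributions. By Lemma~\ref{thm:viscosity_convergence}, $U\to u$ locally uniformly, so $\delta_h^2U\to u_{xx}$ in $\mathcal D'$. At the same time, $V=\Phi(w^{h,\Delta t})\to\Phi(w)$ locally uniformly, and hence also in $\mathcal D'$. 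Uniqueness of distributional limits then gives $u_{xx}=\Phi(w)$, and in particular $u_{xx}$ is continuous. One technical point must be handled here: the interpolations of $U$ and $V$ are piecewise constant, so the identity $\int \delta_h^2 U\,\varphi=\int U\,\delta_h^2\varphi$ holds only up to $O(h)$ errors. These errors vanish against smooth compactly supported $\varphi$, using the $\ell^\infty$ bound of Lemma~\ref{lemma:MSC}(ii).

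Finally, fix a compact set $K\subset(0,1)\times\mathbb R$. By the Lipschitz bound,
\[
\sup_K |v^{h,\Delta t}-u_{xx}| = \sup_K|\Phi(w^{h,\Delta t})-\Phi(w)|\le \frac{1}{\underline\sigma^2}\sup_K|w^{h,\Delta t}-w|,
\]
and the right-hand side tends to $0$ by Theorem~\ref{thm:conv_w}.
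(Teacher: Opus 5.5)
Your proof is correct and rests on the same key estimate as the paper. The appendix shows $|F(x)-F(y)|\ge\underline{\sigma}^{2}|x-y|$ for $F(v)=\sigma^{2}(v)v$, which is exactly your statement that $\Phi=F^{-1}$ is $1/\underline{\sigma}^{2}$-Lipschitz. It then bounds $|V_i^n-u_{xx}|$ by $\underline{\sigma}^{-2}|F(V_i^n)-F(u_{xx})|$.

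Your version improves on two points. First, the paper identifies $F(u_{xx})$ with the limit $w$ of Theorem~\ref{thm:conv_w} without argument, resting on the formal derivation of \eqref{eq:w-equation}. Your distributional argument actually proves $u_{xx}=\Phi(w)$, by combining $U\to u$ from Lemma~\ref{thm:viscosity_convergence} with $\Phi(w^{h,\Delta t})\to\Phi(w)$, and so also gives continuity of $u_{xx}$. Second, you bound the supremum over $K$ directly. The paper instead passes through $L^p$ norms before asserting uniform convergence, which only works because the pointwise inequality already yields the sup bound.
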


The strong convergence of the discrete second-order difference obtained in Theorem \ref{thm:conv_v}
implies strong convergence of the corresponding discrete diffusion coefficients.

\begin{lemma}\label{lem:sig_V_to_sig_uxx}
$\sigma^2$ has only finitely many discontinuities and is bounded,
the convergence $V_i^n\to \partial_{xx} u(t_n,x_i)$ implies
\begin{equation}\label{eq:sigma_strong}
\sigma^2(V_i^n)\ \longrightarrow\ \sigma^2(\partial_{xx} u(t_n,x_i))
\quad\text{strongly in }L^p([0,T]\times \overline{\Omega})\ \text{for any }1\le p<\infty,
\end{equation}
by Vitali's theorem \cite{folland1999real} (details are given in Appendix~\ref{app:vitali_sigma}).
\end{lemma}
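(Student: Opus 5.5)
Set $s_h(t,x):=\sigma^2(v^{h,\Delta t}(t,x))$ and $s(t,x):=\sigma^2(\partial_{xx}u(t,x))$. Both are bounded by $\overline\sigma^2$ on $[0,T]\times\overline\Omega$, with $\overline\Omega$ bounded, so the family $\{|s_h-s|^p\}$ is uniformly bounded on a finite measure space. It is therefore uniformly integrable, and Vitali's theorem reduces the $L^p$ convergence in \eqref{eq:sigma_strong} to convergence in measure, or a.e. convergence along subsequences, of $s_h$ to $s$. Since the limit does not depend on the subsequence, the whole family then converges. The plan is thus: (1) pointwise a.e. convergence, then (2) Vitali.

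\textbf{Step 1 (continuity points).} By Theorem \ref{thm:conv_v}, $v^{h,\Delta t}\to u_{xx}$ locally uniformly on $(0,1)\times\mathbb{R}$. Fix $(t,x)$ in the interior with $u_{xx}(t,x)\neq0$. Then for $h,\Delta t$ small, $v^{h,\Delta t}(t,x)$ has the same strict sign as $u_{xx}(t,x)$, because $\sigma^2(\cdot)$ is locally constant away from $0$. Hence $s_h(t,x)=s(t,x)$ eventually. The boundary layers $t\in\{0,1\}$ have measure zero, so compact exhaustion of $(0,1)\times\overline\Omega$ by sets $K$ suffices.

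\textbf{Step 2 (the set $Z=\{u_{xx}=0\}$).} This is the main obstacle. On $Z$ we have $\sigma^2(u_{xx})=\underline\sigma^2$ in the convention \eqref{eq:sigma_optimal}, but $v^{h,\Delta t}$ may approach $0$ from above. I would first try to show that $Z$ has Lebesgue measure zero in $(0,1)\times\mathbb{R}$.

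The route I would take runs through the $w$-equation. On any open set where $w=0$, we have $u_{xx}\equiv0$, so $u$ is affine in $x$ there and $u_t=0$. Such regions can then be excluded under a mild nondegeneracy hypothesis on $\phi$. Alternatively, one could invoke a known regularity fact: $u_{xx}\in W^{2,1}_{p,\mathrm{loc}}$ for the uniformly parabolic $G$-heat equation, and a Sobolev function satisfies $D^2u_{xx}=0$ a.e. on its zero set. This gives only partial information.

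If proving $|Z|=0$ fails, the fallback is to note that on $Z$ the drift-free dynamics is insensitive to the choice of $\sigma^2$ in the Fokker--Planck weak form only where $p$ charges it. In that case I would state the lemma with the convergence understood on $\{u_{xx}\neq0\}$ plus the measure-zero assumption.

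\textbf{Step 3 (conclusion).} Granting $|Z|=0$, Steps 1--2 give $s_h\to s$ a.e. on $[0,T]\times\overline\Omega$. Uniform boundedness by $\overline\sigma^2$ then gives uniform integrability of $|s_h-s|^p$, and Vitali's theorem, or simply dominated convergence, yields $\|s_h-s\|_{L^p}\to0$ for every $1\le p<\infty$.

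Finally, transfer from the piecewise constant interpolant to grid values $V_i^n$ at $(t_n,x_i)$. This is immediate, since $v^{h,\Delta t}$ is by definition constant equal to $V_i^n$ on each cell.
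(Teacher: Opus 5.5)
Your skeleton is the paper's skeleton. Step~1 gets convergence of $\sigma^2(V_i^n)$ at every point where $u_{xx}\neq 0$: $\sigma^2$ is locally constant away from $0$, and Theorem~\ref{thm:conv_v} gives $V\to u_{xx}$. Step~2 is uniform integrability from $0\le\sigma^2\le\overline\sigma^2$ on a finite-measure set. Step~3 is Vitali, though bounded convergence would do.

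The one real difference is that you isolate the nontrivial point, namely $|Z|=0$ for $Z=\{u_{xx}=0\}$. The paper's Step~1 disposes of it with the sentence ``since $\sigma^2$ has only finitely many discontinuities, $u_{xx}$ avoids them for a.e.\ $x$.'' That is a non sequitur. Finiteness of the discontinuity set of $\sigma^2$ says nothing about the Lebesgue measure of the level set $\{u_{xx}=0\}$.

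\textbf{The gap.} Your proposal does not close this step either, so as written it proves the lemma only under the extra hypothesis $|Z|=0$. The gap is genuine. On a positive-measure part of $Z$, $V_i^n$ may approach $0$ from either side, so $\sigma^2(V_i^n)$ can oscillate between $\underline\sigma^2$ and $\overline\sigma^2$. No choice of the value $\sigma^2(0)$ restores convergence; note that the paper itself uses $\ge 0$ in \eqref{optimal_control_1} but $\le 0$ in \eqref{eq:sigma_optimal}.

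Your suggested routes do not supply the missing fact:
\begin{itemize}
\item Excluding open sets on which $u$ is affine in $x$ does not give $|Z|=0$. $Z$ is closed, and a closed set can have positive measure and empty interior. The ``mild nondegeneracy hypothesis on $\phi$'' is also never specified.
\item The Sobolev fact that $D^2u_{xx}=0$ a.e.\ on $Z$ is perfectly consistent with $|Z|>0$.
\item The fallback changes the statement of the lemma.
\end{itemize}

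\textbf{A possible route.} The $G$-heat equation itself gives $w=\sigma^2(u_{xx})u_{xx}=-2u_t$, so $Z=\{u_t=0\}$. Formally, $z=u_t$ solves the linear equation $z_t+\tfrac12 a\,z_{xx}=0$ with a measurable coefficient $a\in[\underline\sigma^2,\overline\sigma^2]$. On $Z$ the choice of $a$ is irrelevant, since $z_t=z_{xx}=0$ a.e.\ there.

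What is actually needed is a nodal-set or unique-continuation result showing that a nontrivial solution of such an equation vanishes only on a Lebesgue-null set. This is a real ingredient, not a formality, and neither your proposal nor the paper supplies it. Absent such a result, $|Z|=0$ should be stated as an assumption of the lemma.
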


The establishment of this local uniform convergence for $V_i^n \to \partial_{xx} u$ constitutes a fundamental theoretical milestone. By guaranteeing that the discrete optimal control $\sigma^2(V_i^n)$ converges strongly to the exact continuous optimal control $\sigma^2(\partial_{xx} u)$, this result strictly bridges the gap between the backward value propagation and the forward density evolution. We now leverage this vital property to rigorously pass to the limit for the discrete probability measures generated by the forward trinomial tree.

\subsection{Weak$^*$ Convergence of the Responsive Distribution}
\label{subsec:conv_distribution}


We now establish the main convergence result for the forward trinomial tree.  
The discrete probability masses propagated by the forward recursion define a sequence of discrete measures that serve as approximate laws for the continuous optimally controlled diffusion $X_t^*$.
Building on the strong convergence
of the discrete optimal feedback control obtained in the previous subsection, we show that these
discrete laws converge, in the weak$^\ast$ sense, to the responsive distribution. Equivalently,
the limit is a weak solution of the Fokker--Planck equation \eqref{eq:p-Fokker-plank}.

\paragraph{From trinomial transitions to a finite difference scheme.}
Let $x_i=ih$ and $t_n=n\Delta t$. Denote by $p_i^n$ the discrete density at $(t_n,x_i)$ and
set $\sigma_i^n := \sigma\!\big(V_i^n\big)$, where $V_i^n$ is the discrete approximation of
$v=\partial_{xx} u$ obtained in Theorem~\ref{thm:conv_v}. Then the forward recursion of the trinomial
tree \eqref{eq:sampling2_p} can be written equivalently as the explicit scheme
\begin{equation}\label{eq:p_fdm}
-\,\frac{p_i^{\,n+1}-p_i^{\,n}}{\Delta t}
\;+\;
\frac{1}{2}\,
\frac{ \big(\sigma_{i+1}^{\,n}\big)^{2} p_{i+1}^{\,n}
-2\big(\sigma_{i}^{\,n}\big)^{2} p_{i}^{\,n}
      +\big(\sigma_{i-1}^{\,n}\big)^{2} p_{i-1}^{\,n} }{h^{2}}
\;=\;0,
\end{equation}
which is a finite difference approximation of
$\partial_{t} p+\partial_{xx}\!\big(\frac12 \sigma^2(\partial_{xx} u)\,p\big)=0$ in \eqref{eq:p-Fokker-plank}.
We prescribe the discrete initial condition corresponding to the Dirac measure $p(0,\cdot)=\delta_0$ by setting
$p_0^0=1$ and $p_i^0=0$ for $i\neq 0$. Consequently, $p_i^n=0$ whenever $|i|>n$, so that the nonzero entries remain confined to the inverted triangular region $i=-n,\dots,n$.

The discrete scheme preserves the fundamental properties of a probability measure throughout the evolution, as stated in Proposition~\ref{lem:probability_preserving}.

\begin{proposition}[Conservation of total probability]\label{lem:probability_preserving}
Let $\{p_i^n\}_{i\in\mathbb Z}$ be generated by the trinomial forward recursion
\eqref{eq:sampling2_p}. Under the stability condition \eqref{ass:CFL_w}
for every $n\ge 0$ the sequence $\{p_i^n\}$ defines a discrete probability measure:
$p_i^n\ge 0$ for all $i$ and
\begin{equation}\label{eq:probability}
\sum_{i=-\infty}^{\infty} p_i^n\,h \;=\; 1.
\end{equation}
\end{proposition}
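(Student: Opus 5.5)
The proof is by induction on $n$, using only the algebraic structure of the forward recursion \eqref{eq:sampling2_p}. Two facts are needed: every transition weight is nonnegative, and the weights leaving each node sum to one. One point about normalization has to be fixed first. With the initial data $p_0^0=1$, $p_i^0=0$ for $i\neq0$, the quantities $p_i^n$ are probability \emph{masses}, so the natural invariant is $\sum_i p_i^n=1$. The density normalization \eqref{eq:probability} then refers to the rescaled values $\tilde p_i^n:=p_i^n/h$, which represent the piecewise constant density on the cells $(x_i-h/2,x_i+h/2]$; equivalently, the initial Dirac mass is encoded as $p_0^0=1/h$. Because the recursion is linear, both readings lead to the same argument. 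I would prove mass conservation for $p_i^n$ and obtain \eqref{eq:probability} by dividing by $h$.

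\emph{Step 1 (nonnegativity).} Under the stability condition \eqref{eq:CFL_w} (and indeed already under \eqref{eq:Cond_CFL}), for every $\sigma\in\{\underline\sigma,\overline\sigma\}$ we have
\[
P_{i,i\pm1}(\sigma)=\tfrac{\sigma^2}{2}\tfrac{\Delta t}{h^2}\in[0,\tfrac14], \qquad P_{i,i}(\sigma)=1-\sigma^2\tfrac{\Delta t}{h^2}\ge \tfrac12 .
\]
So every coefficient in \eqref{eq:sampling2_p} is nonnegative, whichever control $\sigma_i^{n-1}$ the backward tree selected. If $p^{n-1}\ge0$, then $p^n$ is a nonnegative combination and hence $p^n\ge0$. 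Since $p^0\ge0$, induction gives $p_i^n\ge0$ for all $n$ and $i$.

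\emph{Step 2 (conservation).} Sum \eqref{eq:sampling2_p} over $i\in\mathbb Z$. Only finitely many terms are nonzero, because $p_i^{n-1}=0$ for $|i|>n-1$, so the order of summation may be changed freely. Reindexing by the source node $j$, the mass $p_j^{n-1}$ contributes to exactly the three targets $j-1$, $j$, $j+1$, and it does so with the same control $\sigma_j^{n-1}$ in all three. This gives
\[
\sum_i p_i^n=\sum_j p_j^{n-1}\bigl(P_{j,j-1}+P_{j,j}+P_{j,j+1}\bigr)(\sigma_j^{n-1})=\sum_j p_j^{n-1},
\]
because the outgoing probabilities of each node sum to one by \eqref{transition_probabilities_DP}. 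Induction from $\sum_i p_i^0=1$ yields $\sum_i p_i^n=1$ for all $n$, and after rescaling this is \eqref{eq:probability}. The same telescoping can be seen directly from the flux form \eqref{eq:p_fdm}: the discrete Laplacian of the sequence $(\sigma_i^n)^2p_i^n$ has zero sum over $\mathbb Z$ once the sequence has finite support.

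The only step needing real care is the reindexing in Step 2. The control attached to a transition must belong to the \emph{source} node $j$, not the target node $i$. Otherwise the three weights would carry different $\sigma$'s and would not sum to one, since $\sigma_{i-1},\sigma_i,\sigma_{i+1}$ differ in general. This is exactly why the forward recursion \eqref{eq:sampling2_p} uses $P_{i-1,i}(\sigma_{i-1}^{n-1})$ and $P_{i+1,i}(\sigma_{i+1}^{n-1})$; it is the discrete counterpart of the divergence form $\partial_{xx}(\tfrac12\sigma^2p)$. Finite support of $p^{n-1}$, which follows from the triangular structure $p_i^n=0$ for $|i|>n$, makes all sums finite and removes any boundary or convergence issue.
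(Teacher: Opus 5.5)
Your proof is correct and follows the same route as the paper's: nonnegativity by induction, then summing the recursion and using that each node's outgoing weights sum to one. You make the reindexing by source node explicit where the paper leaves it implicit, and your normalization remark is well taken, since the paper's own argument, starting from $p_0^0=1$, actually establishes $\sum_i p_i^n=1$, so \eqref{eq:probability} with the factor $h$ holds only for the rescaled density $p_i^n/h$, exactly as you note.
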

The full proof is given in
Appendix~\ref{app:proof_probability_preserving}.

\begin{lemma} \label{lemma:Radon}
Assume that the solution to equation \eqref{eq:weak_form} exists and is unique. Let $\{p_n\}$ be a nonnegative bounded sequence in $L^{1}(B)$. 
Then, $\{p_n\}$ is weakly* compact in $M(B)$, i.e., there exists a subsequence $u_{n_k}$ and $u\in M(B)$ such that
\begin{equation}\label{eq:2.16}
\lim_{k\to\infty}\int_{B} p_{n_k}\,\phi\,dx
  = \langle p,\phi\rangle_{M(B),\,C_c^0(B)}
  \quad \text{for any } \phi\in C_c^0(B),
\end{equation}
where $M(B)\subset \big(C_c^0(B)\big)'\,$ is the set of nonnegative linear functionals on $C_c^0(B)$, i.e., Radon measures.
\end{lemma}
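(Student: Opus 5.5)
The statement is the weak$^*$ sequential compactness of bounded nonnegative sequences in $L^1(B)$, viewed as Radon measures. The extra hypothesis on existence and uniqueness of the weak solution of \eqref{eq:weak_form} is not needed for compactness itself. It matters only later, to upgrade subsequential convergence to convergence of the whole sequence. The plan is to embed $L^1(B)$ isometrically into the dual of $C_c^0(B)$ (or of $C_0(B)$, its sup-norm closure) and then combine Riesz representation with a diagonal argument, in the style of Banach--Alaoglu.

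\textbf{Step 1 (embedding and uniform bound).} For each $n$ define the functional $\Lambda_n(\phi):=\int_B p_n\phi\,dx$ for $\phi\in C_c^0(B)$. Since $p_n\ge0$ and $\sup_n\|p_n\|_{L^1(B)}\le M$, each $\Lambda_n$ is positive, linear, and satisfies $|\Lambda_n(\phi)|\le M\|\phi\|_\infty$. Hence $\{\Lambda_n\}$ is bounded in $C_0(B)'$. In our application $p_n$ is the piecewise constant interpolation of $\{p_i^n\}$, and $M=1$ by Proposition~\ref{lem:probability_preserving}.

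\textbf{Step 2 (diagonal extraction).} $B\subset\mathbb R$ (or $[0,T]\times\mathbb R$) is $\sigma$-compact, so $C_0(B)$ is separable in the sup norm. Fix a countable dense set $\{\phi_j\}\subset C_c^0(B)$. The sequence $\Lambda_n(\phi_1)$ is bounded, so a subsequence converges. Extracting successively for $\phi_2,\phi_3,\dots$ and taking the diagonal subsequence $n_k$ gives convergence of $\Lambda_{n_k}(\phi_j)$ for every $j$.

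\textbf{Step 3 (extension to all test functions).} For arbitrary $\phi\in C_c^0(B)$ and any $\varepsilon>0$, choose $\phi_j$ with $\|\phi-\phi_j\|_\infty<\varepsilon$. Then $|\Lambda_{n_k}(\phi)-\Lambda_{n_l}(\phi)|\le 2M\varepsilon+|\Lambda_{n_k}(\phi_j)-\Lambda_{n_l}(\phi_j)|$, so $\Lambda_{n_k}(\phi)$ is Cauchy. Its limit $\Lambda(\phi)$ is linear, positive (as a limit of positive functionals), and bounded by $M\|\phi\|_\infty$. By the Riesz--Markov representation theorem on the locally compact Hausdorff space $B$, there is a unique nonnegative Radon measure $p\in M(B)$ with $\Lambda(\phi)=\langle p,\phi\rangle$, which is \eqref{eq:2.16}. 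Lower semicontinuity also gives $p(B)\le M$.

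\textbf{Main obstacle.} The functional-analytic part is standard; the delicate point is the topology. Weak$^*$ convergence against $C_c^0$ permits loss of mass at infinity, so $p$ need not be a probability measure. For the later identification with the Fokker--Planck solution I would add a tightness estimate: the second moment $\sum_i x_i^2 p_i^n h\le \overline\sigma^2 t_n$ follows from the martingale structure of the chain. This upgrades vague to narrow convergence. Uniqueness of the weak solution of \eqref{eq:weak_form} then forces every subsequential limit to coincide, which yields convergence of the full sequence.
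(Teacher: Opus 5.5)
Your proof is correct. It is the standard sequential Banach--Alaoglu/Riesz--Markov argument: separability of $C_0(B)$, diagonal extraction on a countable dense set, extension by density, and positivity of the limit functional. That is exactly the fact the paper relies on, since it states this lemma without proof; as you observe, the existence/uniqueness hypothesis plays no role in the compactness and matters only later, when passing from subsequences to the full sequence. Your extra tightness remark, via the martingale bound $\sum_i x_i^2 p_i^n h\le\overline{\sigma}^2 t_n$, goes beyond the paper, which never checks that the vague limit $p(T,\cdot)$ keeps unit mass, something needed to identify it with the law of $X_T^\ast$.
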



\begin{theorem}[Weak$^\ast$ convergence to the responsive distribution]
\label{thm:weakstar_FP}
Assume that the stability condition \eqref{ass:CFL_w} holds.
Let $\{p_i^n\}_{i\in\mathbb Z,\,0\le n\le N}$ be generated by the forward trinomial recursion
\eqref{eq:sampling2_p}, equivalently by the finite difference scheme \eqref{eq:p_fdm}. We define the piecewise constant interpolation $p^{h,\Delta t}$ on $(0,T)\times\mathbb R$ by
\[
p^{h,\Delta t}(t,x)=p_i^n,
\qquad
(t,x)\in((n-1)\Delta t,n\Delta t]\times
\Bigl(x_i-\tfrac{h}{2},x_i+\tfrac{h}{2}\Bigr].
\]
Then there exist a subsequence (not relabeled) and a nonnegative Radon measure
$p\in M([0,T)\times \mathbb R)$ such that
\begin{equation}
\begin{aligned}
&p^{h,\Delta t} \to p(t,x) \in M([0,T)\times \mathbb R) \quad weakly^* \; in \; M([0,T)\times \mathbb R),\\
&p^{h,T} \to p(T,x) \in M(\mathbb R) \quad weakly^* \; in \; M(\mathbb R),
\end{aligned}
\end{equation}
Moreover, the limit $p$ is a weak solution of the Fokker--Planck equation \eqref{eq:p-Fokker-plank}
on $(0,T)\times\mathbb R$ with initial datum $p(0,\cdot)=\delta(x-0)$, i.e., for every
${\varphi}\in C_c^{\infty}([0,T]\times \mathbb R )$,
\begin{equation}\label{eq:weak_FP}
\begin{aligned}
&\int_0^T \int_{\Omega}
   p(t,x)\Big(
      \partial_t \varphi(t,x)
      + \frac{\sigma^2(u_{xx}(t,x))}{2}
        \partial_{xx}\varphi(t,x)
   \Big)\,dx\,dt  \\
& \qquad\qquad\;\;\;\; + \langle \delta(x-x_0), \varphi(0,x) \rangle
   - \int_{\Omega} p(T,x)\varphi(T,x)\,dx =0.
\end{aligned}
\end{equation}

\end{theorem}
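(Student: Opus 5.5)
The plan is the standard compactness-plus-identification argument: a uniform mass bound, extraction of weak$^\ast$ limits, a discrete weak formulation obtained by summation by parts, and passage to the limit term by term.

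\emph{Step 1: compactness.} By Proposition~\ref{lem:probability_preserving}, $p_i^n\ge0$ and each time slice has unit mass. Hence $p^{h,\Delta t}$ is a nonnegative family bounded in $L^1((0,T)\times\mathbb R)$ with total mass at most $T$, and $p^{h,T}:=p^{h,\Delta t}(T,\cdot)$ is bounded in $L^1(\mathbb R)$ with mass one. Lemma~\ref{lemma:Radon}, applied on $B=[0,T)\times\mathbb R$ and on $B=\mathbb R$ with a diagonal subsequence, gives the two weak$^\ast$ limits $p$ and $p(T,\cdot)$. I will also record a uniform second-moment bound. Since $\sum_i x_i^2p_i^n h$ grows by at most $\overline\sigma^2\Delta t$ per step (the chain is a martingale with increments of variance at most $\overline\sigma^2\Delta t$), this yields tightness. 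Tightness makes the limits probability measures and lets me test against functions that are only bounded rather than compactly supported.

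\emph{Step 2: discrete weak formulation.} Fix $\varphi\in C_c^\infty([0,T]\times\mathbb R)$ and set $\varphi_i^n=\varphi(t_n,x_i)$. I multiply \eqref{eq:p_fdm} by $\varphi_i^{n+1}h\Delta t$ and sum over $i$ and $n$. Summation by parts in time (Abel summation) produces the boundary terms $\sum_i p_i^N\varphi_i^N h-\sum_i p_i^0\varphi_i^0 h$ and the bulk term $-\sum_{n,i} p_i^n(\varphi_i^{n+1}-\varphi_i^n)h$. Summation by parts in space, which is exact on $\mathbb Z$ because of the finite support, moves $\delta_h^2$ onto the test function, giving $\frac12\sum_{n,i}(\sigma_i^n)^2p_i^n\,\delta_h^2\varphi_i^{n+1}h\Delta t$. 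Taylor expansion then gives $\delta_t\varphi\to\partial_t\varphi$ and $\delta_h^2\varphi\to\partial_{xx}\varphi$ uniformly, with errors $O(\Delta t+h^2)$ times the bounded mass. The initial term is exactly $\varphi(0,0)=\langle\delta_0,\varphi(0,\cdot)\rangle$. By Step 1 and uniform convergence of the test function, the terminal term converges to $\int p(T,x)\varphi(T,x)\,dx$, and the $\partial_t\varphi$ term converges to $\int\!\!\int p\,\partial_t\varphi$.

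\emph{Step 3: the nonlinear term (main obstacle).} It remains to pass to the limit in $\int\!\!\int p^{h,\Delta t}\,(\sigma^{h,\Delta t})^2\,\partial_{xx}\varphi$, which pairs a weakly$^\ast$ converging measure with a discontinuous coefficient. Lemma~\ref{lem:sig_V_to_sig_uxx} gives only $L^p$ convergence, and weak$^\ast\times L^p$ products need not converge, so that lemma alone does not suffice.

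I will instead use the local uniform convergence of Theorem~\ref{thm:conv_v}. On a compact set $K$, let $Z_\varepsilon=\{|u_{xx}|<\varepsilon\}\cap K$. Off $Z_\varepsilon$, once $\sup_K|v^{h,\Delta t}-u_{xx}|<\varepsilon$, the signs of $v^{h,\Delta t}$ and $u_{xx}$ agree. There $\sigma^2(v^{h,\Delta t})=\sigma^2(u_{xx})$, so on $K\setminus Z_\varepsilon$ the coefficient is eventually constant in $h$ and piecewise continuous. On $Z_\varepsilon$ the error is bounded by $(\overline\sigma^2-\underline\sigma^2)\|\partial_{xx}\varphi\|_\infty\,p^{h,\Delta t}(Z_\varepsilon)$. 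The crux is therefore a uniform smallness estimate for $p^{h,\Delta t}(Z_\varepsilon)$ as $\varepsilon\to0$.

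To obtain it, I will try to show that the limit $p$ does not charge $\{u_{xx}=0\}$. The first idea is parabolic nondegeneracy ($\sigma\ge\underline\sigma>0$), which suggests a Krylov-type estimate bounding $\int\!\!\int p\,g$ by $C\|g\|_{L^2}$ uniformly in $h$. Combined with the assumption that the nodal set $\{u_{xx}=0\}$ has Lebesgue measure zero, this would close the argument. I will also use that $\{u_{xx}=0\}$ is closed and $u_{xx}$ is continuous, so a portmanteau argument on open neighborhoods handles the discontinuity set. This estimate is where I expect most of the work, and possibly an extra hypothesis.

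Once Step 3 is in hand, all four terms converge and \eqref{eq:weak_FP} follows. If the weak solution is unique, the whole sequence converges and $p(T,\cdot)$ is the law of $X_T^\ast$, i.e.\ $\mu_\phi$.
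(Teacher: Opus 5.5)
Your skeleton (uniform mass bound, discrete weak formulation by summation by parts, term-by-term passage to the limit) is the paper's. Your diagnosis of the crux is correct, and it exposes a flaw in the paper's own argument. The paper passes to the limit in the diffusion term using only weak$^\ast$ convergence of $p^{h,\Delta t}$ together with the $L^p$ convergence of Lemma~\ref{lem:sig_V_to_sig_uxx}. Its remainder bound \eqref{eq:lemma.1} records the coefficient error as $[(\sigma_i^n)^2-\sigma^2(u_{xx})]\,O(h^2)$. In fact this difference multiplies $\partial_{xx}\varphi$ and is of order one at nodes near the switching set. The paper also integrates the discontinuous $\sigma^2(u_{xx})\partial_{xx}\varphi$ against the weak$^\ast$ limit via \eqref{2.13}, which holds only for continuous functions. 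Replacing this by the local uniform convergence of Theorem~\ref{thm:conv_v}, with the error localized on $Z_\varepsilon$, is the right repair. Your second-moment tightness is also a genuine improvement: with only vague limits one cannot assert that $p(T,\cdot)$ is a probability measure.

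As written, however, Step~3 is not carried out, and this is a genuine gap. What you need is exactly $p(\{u_{xx}=0\}\cap K)=0$ for the limit measure. Indeed, with the compact set $C_\varepsilon=\{|u_{xx}|\le\varepsilon\}\cap K$, vague convergence gives $\limsup p^{h,\Delta t}(Z_\varepsilon)\le p(C_\varepsilon)\downarrow p(\{u_{xx}=0\}\cap K)$. The same null-set property justifies $\int\!\!\int p^{h,\Delta t}\sigma^2(u_{xx})\partial_{xx}\varphi\to\int\!\!\int p\,\sigma^2(u_{xx})\partial_{xx}\varphi$. Two ingredients are missing.

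\emph{The hypothesis.} Something like $|\{u_{xx}=0\}|=0$ is genuinely necessary, not a convenience; the paper uses it tacitly in Step~1 of the proof of Lemma~\ref{lem:sig_V_to_sig_uxx}. On a nodal set of positive measure every control is optimal, so the limit is decided by discrete tie-breaking rather than by $\sigma^2(u_{xx})$. For example, for $\phi\equiv c$ one has $V\equiv 0$. Rule \eqref{eq:sigma_optimal} then selects $\underline\sigma$, and the limit is the $N(0,\underline\sigma^2 t)$ flow, whereas \eqref{optimal_control_1} gives $\sigma^2(u_{xx})=\overline\sigma^2$.

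\emph{Absolute continuity of $p$ in $(t,x)$.} For this you do not need a Krylov estimate uniform in $h$, which would require discrete ABP-type machinery; one for the limit suffices. The chains are martingales with jumps of size $h\to0$ and conditional increment variances in $[\underline\sigma^2\Delta t,\overline\sigma^2\Delta t]$, so they are C-tight in path space. Along a further subsequence the limit is a continuous martingale $M$ with $\underline\sigma^2\,dt\le d\langle M\rangle_t\le\overline\sigma^2\,dt$, whose time--space occupation measure is $p$. Krylov's classical estimate $E\int_0^T g(t,M_t)\,dt\le C\|g\|_{L^2((0,T)\times\mathbb R)}$ then gives $p$ an $L^2$ density. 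Alternatively, a uniform discrete $L^2$ bound on $p^{h,\Delta t}$ would rescue even the paper's $L^p$ route via H\"older.

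Finally, fix the normalization. With $p_0^0=1$ as in Section~\ref{sec:3.3}, one has $\sum_i p_i^n=1$, so the density is $p_i^n/h$. Your claims of unit slice mass and of an initial term equal to $\varphi(0,0)$ hold only after this rescaling.
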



\begin{proofsketch}
By Proposition~\ref{lem:probability_preserving}, the family $\{p^{h,\Delta t}\}$ is nonnegative and uniformly bounded in total variation. Hence, up to a subsequence, there exist
\[
p\in M([0,T)\times\mathbb R),
\qquad
p(T,\cdot)\in M(\mathbb R),
\]
such that
\[
p^{h,\Delta t}\overset{\ast}{\rightharpoonup} p
\quad\text{in }M([0,T)\times\mathbb R),
\qquad
p^{h,T}\overset{\ast}{\rightharpoonup} p(T,\cdot)
\quad\text{in }M(\mathbb R).
\]
To identify the limit, we multiply the conservative scheme \eqref{eq:p_fdm} by a smooth test function and perform discrete summation-by-parts in both time and space. This yields a discrete weak formulation of the forward scheme. Passing to the limit in this identity uses two ingredients: the weak$^\ast$ convergence of $p^{h,\Delta t}$ and the strong $L^p$ convergence
\[
\sigma^2(V^{h,\Delta t}) \to \sigma^2(u_{xx}),
\]
established in Lemma~\ref{lem:sig_V_to_sig_uxx}. Consequently, every weak$^\ast$ limit point satisfies the weak formulation \eqref{eq:weak_FP} of the Fokker--Planck equation \eqref{eq:p-Fokker-plank}.
Finally, by uniqueness of weak solutions to \eqref{eq:p-Fokker-plank}, the whole sequence converges, and the limit coincides with the responsive distribution of the optimally controlled diffusion.
The full proof is given in Appendix~\ref{app:proof_thm_weakstar_FP}.
\end{proofsketch}

\section{Numerical Experiments}\label{sec:6}
We present numerical experiments to illustrate the theoretical results developed in the previous sections and to assess the performance of the coupled trinomial tree schemes. The experiments have three objectives. First, we compute responsive distributions under several measurement functions to highlight the intrinsically measurement-dependent nature of G-normal distributions. Second, we examine the convergence of the discrete second-order difference quotients and the associated nonlinear flux, which underpins the convergence of the discrete optimal control. Third, we investigate the convergence of the responsive distributions generated by the forward trinomial tree through systematic grid refinement.

Throughout all numerical experiments, we employ a consistent parameter configuration to ensure reproducibility and enable meaningful comparisons across different scenarios. The G-normal parameters are set to $\underline{\sigma}^2 = 0.04$ and $\overline{\sigma}^2 = 1.0$, representing substantial volatility uncertainty with a variance range spanning nearly two orders of magnitude. This parameter choice reflects realistic financial market conditions where volatility estimates exhibit significant uncertainty. The time horizon is fixed at $T = 1$, and the discretization parameters are systematically selected to satisfy the stability constraint $\frac{\overline{\sigma}^2 \Delta t}{h^2} \leq 1$ established in Assumption~\ref{ass:CFL_w}, ensuring numerical stability throughout all computations.

\subsection{Sampling for G-normal Distributions}
The probability distribution of the G-normal is generally uncertain. However, through equation \eqref{primary_value_f_1}, we know that once a specific observation $\phi$ is given, the responsive distribution of the G-normal can be determined. This is like a quantum state, which is uncertain and unknown before
measurement. Given different measurements of $\phi$, we can obtain different information. Here, we implement $\mathbb{E}[\phi(X)]$ using a trinomial tree and obtain the probability density function of G-normal {under a test function}. In this section, we take $\phi = x^2, -x^2, \sin(3x), x^3$ as examples and sample the probability distribution of the G-normal under these test functions. The results are presented in Figures \ref{fig_x^2_-x^2} -- \ref{fig_x^3}.

\begin{figure*}[ht]
	\centering
	\begin{minipage}[t]{0.5\linewidth}
		\centering
		\includegraphics[width=\textwidth]{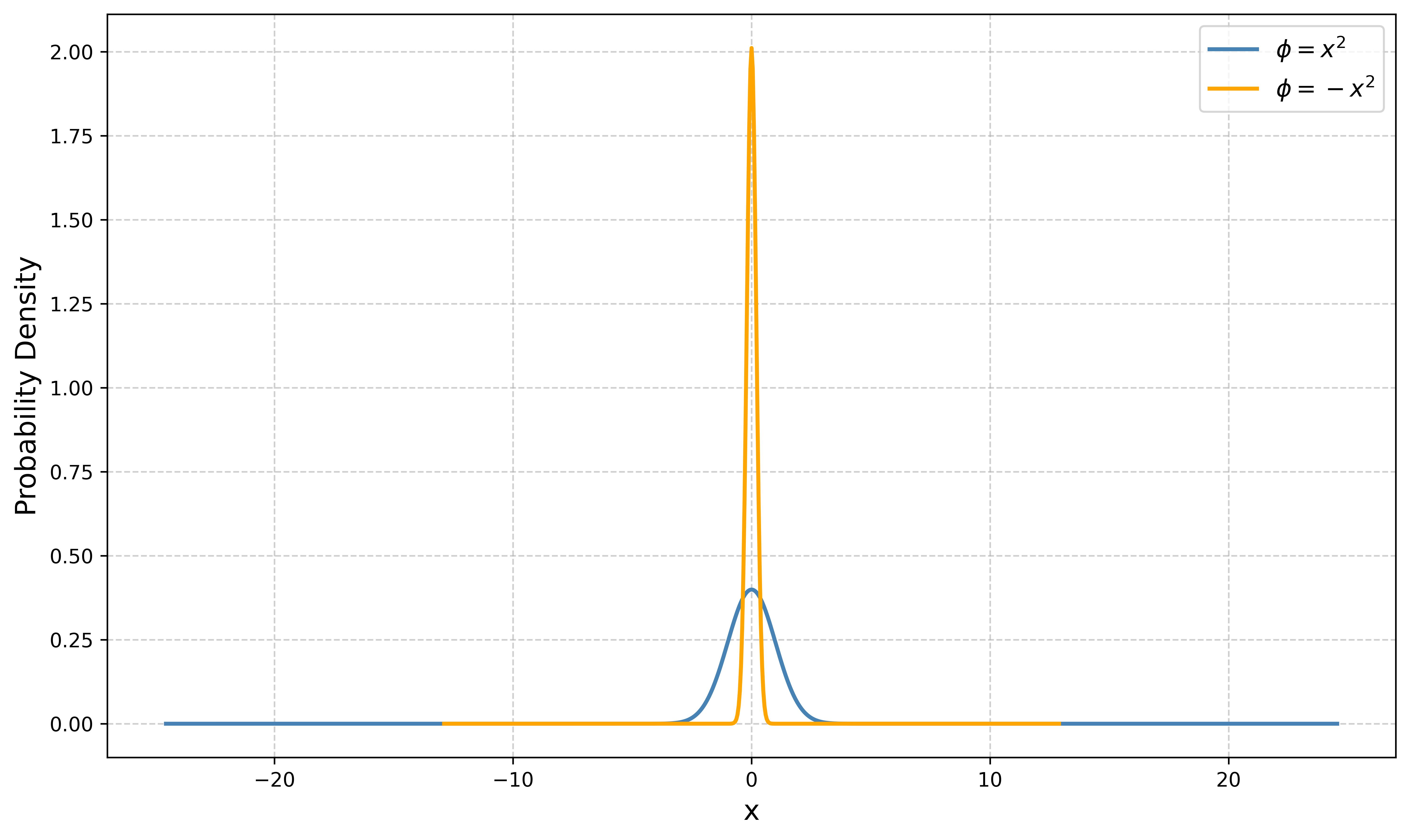}
		\label{fig:x^2}
	\end{minipage}%
	\hfill
	\begin{minipage}[t]{0.5\linewidth}
		\centering
		\includegraphics[width=\textwidth]{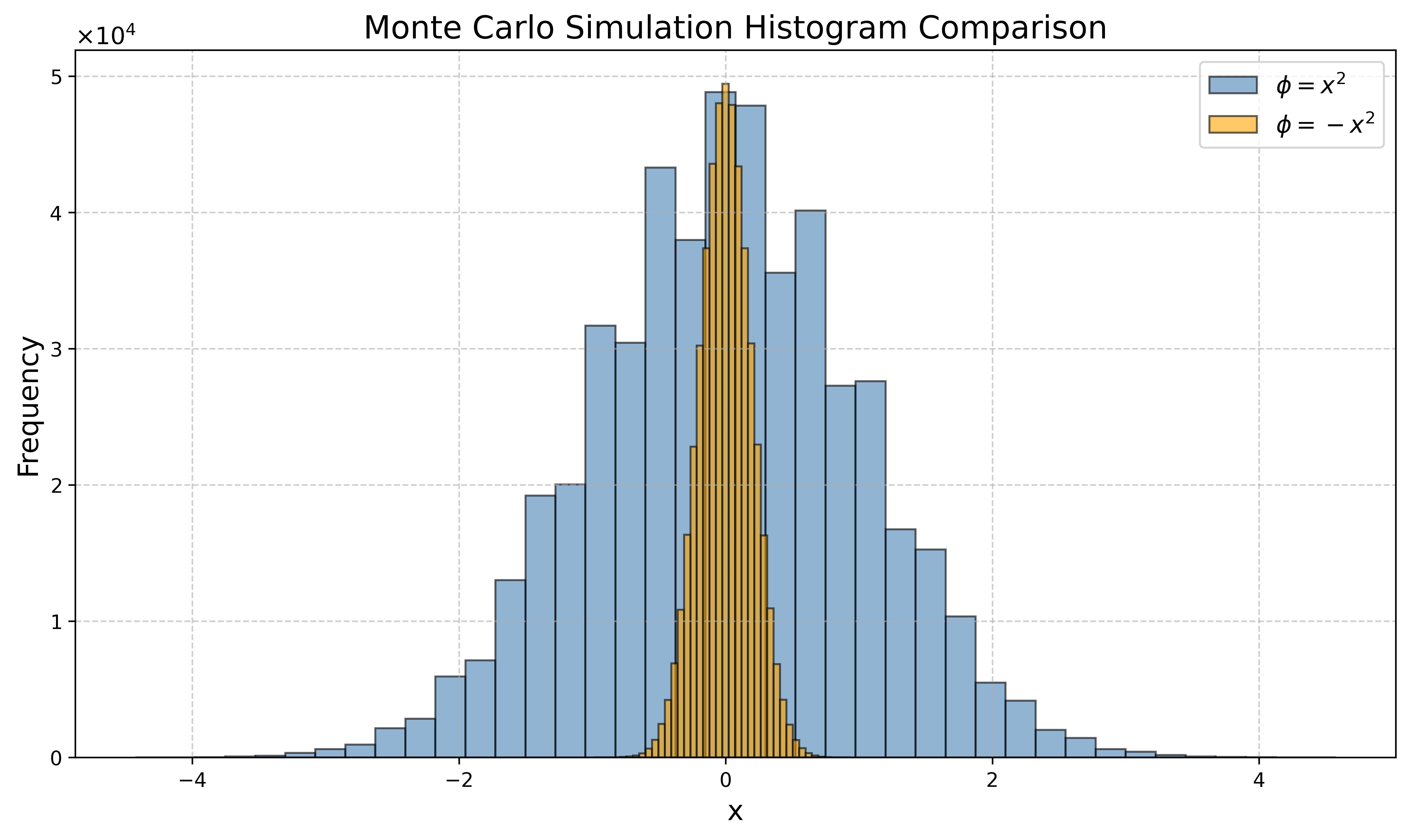}
		\label{fig:-x^2}
	\end{minipage}
	\vspace{-0.7cm}
	\centering
	\caption{The left figure compares the probability density functions of G-normal distributions for measurements using the convex function $x^2$ and the concave function $-x^2$. In this case, the result of convex measurement behaves like a normal distribution $N\left(0, \bar{\sigma}^2\right)$, while the result of concave measurement  behaves like a normal distribution $N\left(0, \underline{\sigma}^2\right)$. The right figure presents the corresponding histogram comparison result when the sample size is 500,000.}
	\label{fig_x^2_-x^2}
\end{figure*}

\begin{figure*}[ht]
	\centering
	\begin{minipage}[t]{0.5\linewidth}
		\centering
		\includegraphics[width=\textwidth]{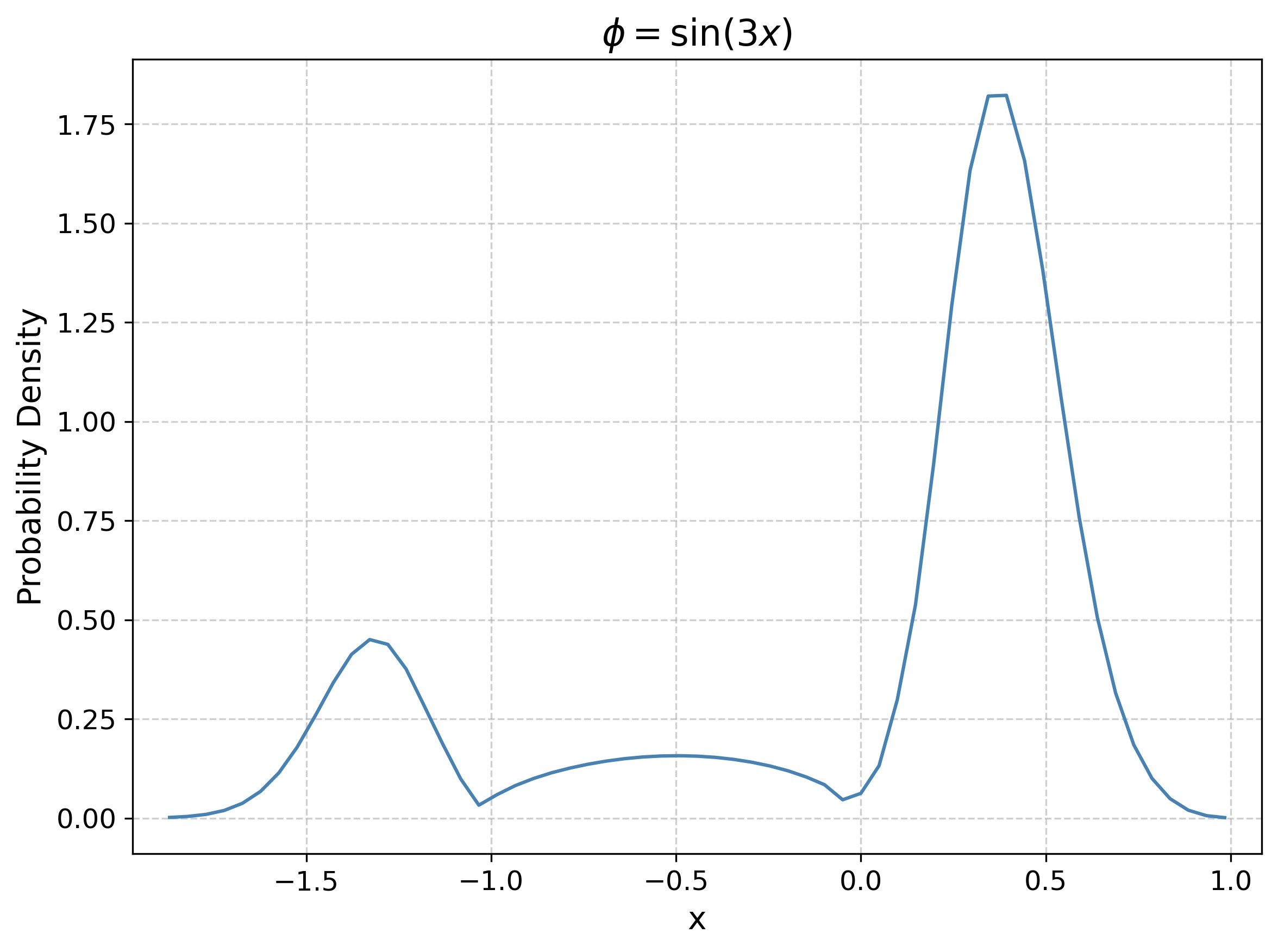}
		\label{fig:x^2}
	\end{minipage}%
	\hfill
	\begin{minipage}[t]{0.5\linewidth}
		\centering
		\includegraphics[width=\textwidth]{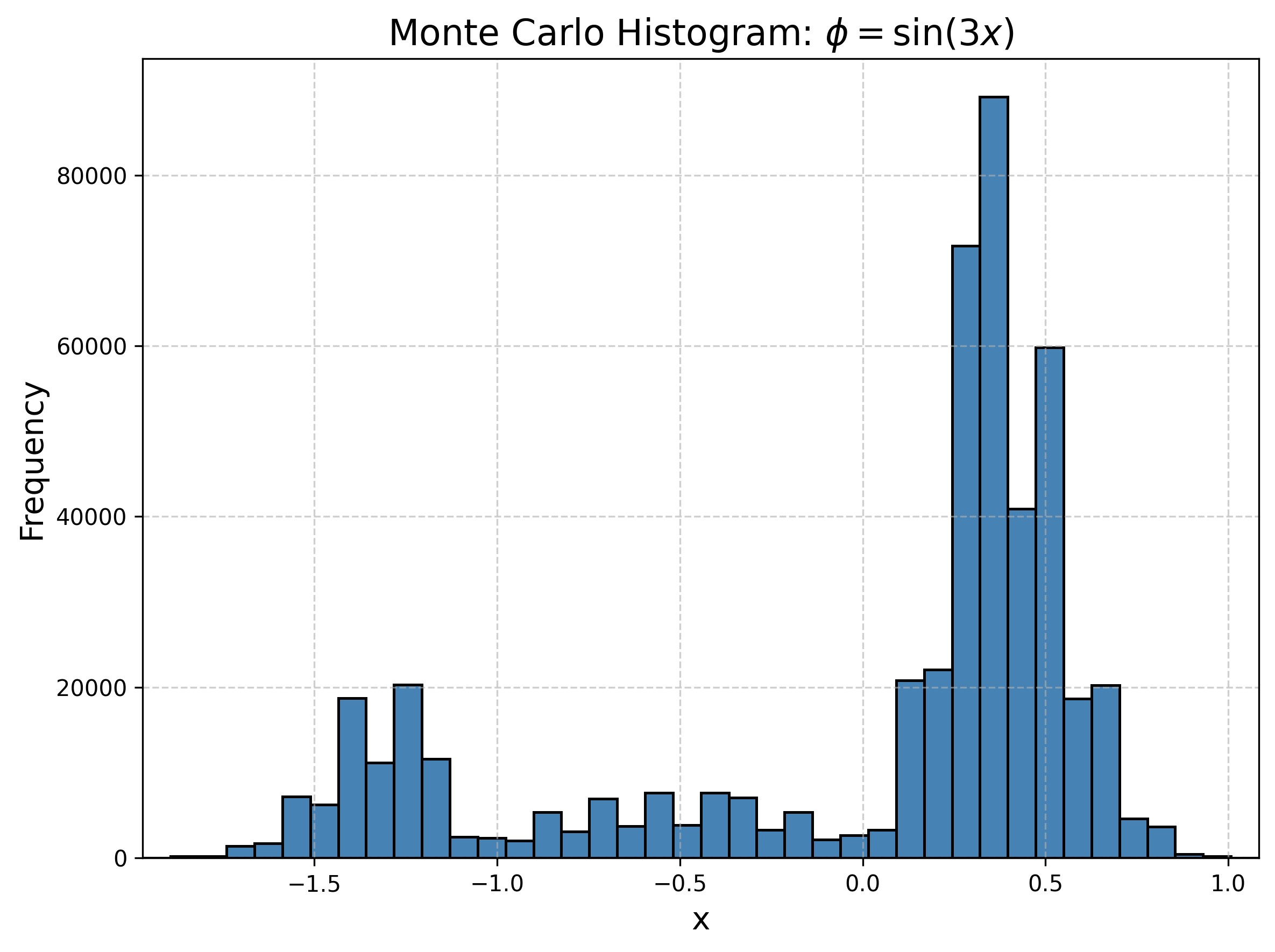}
		\label{fig:-x^2}
	\end{minipage}
	\vspace{-0.7cm}
	\centering
	\caption{The left figure illustrates the probability density function of G-normal distributions when the given measurement is $\sin(3x)$, which exhibits varying convexity and concavity. The right figure presents the corresponding histogram when the number of samples is 500,000.}
	\label{fig_sin(3x)}
\end{figure*}

\begin{figure}
	\centering
	\begin{minipage}[t]{0.5\linewidth}
		\centering
		\includegraphics[width=\textwidth]{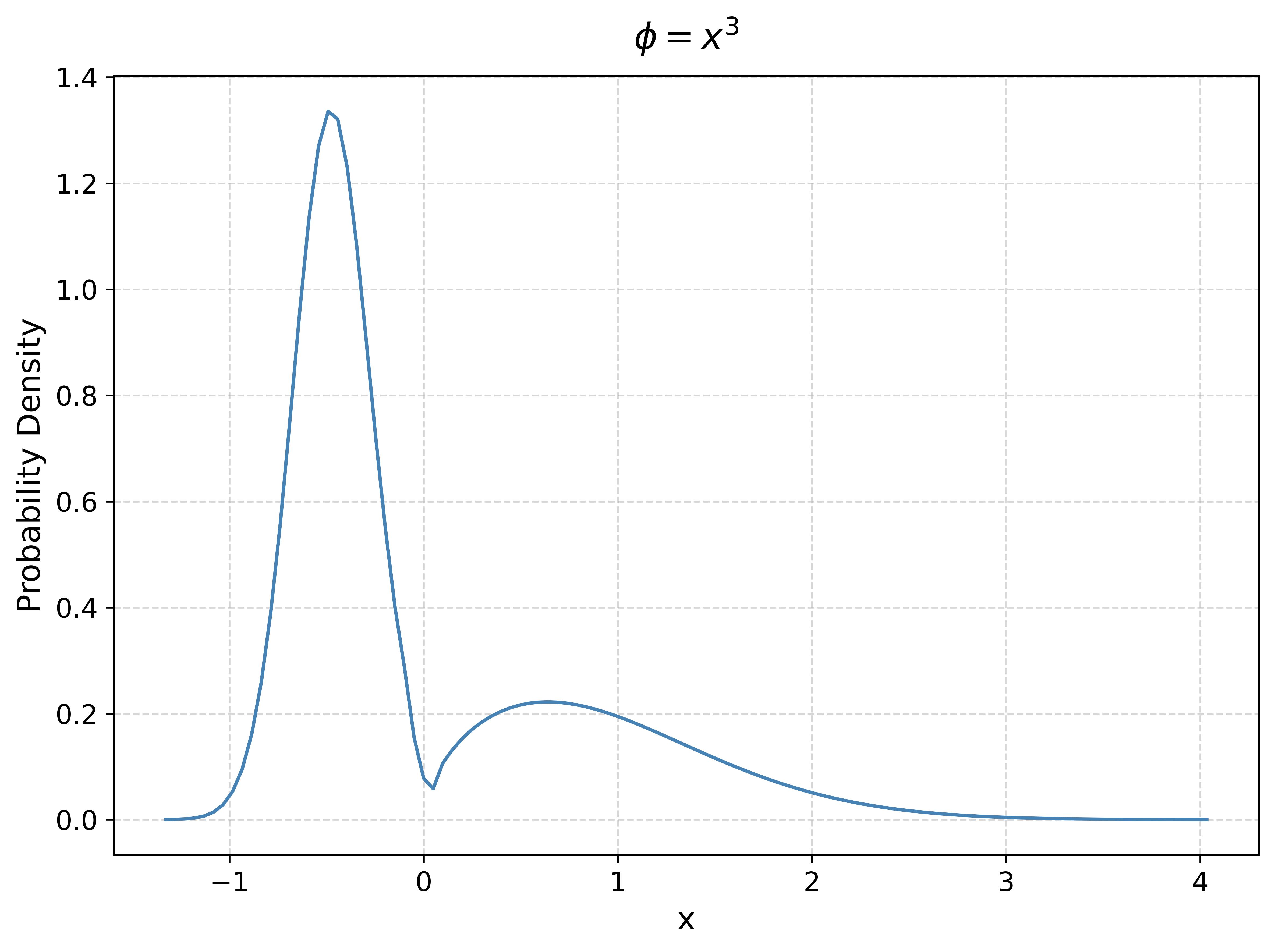}
		\label{fig:x^2}
	\end{minipage}%
	\hfill
	\begin{minipage}[t]{0.5\linewidth}
		\centering
		\includegraphics[width=\textwidth]{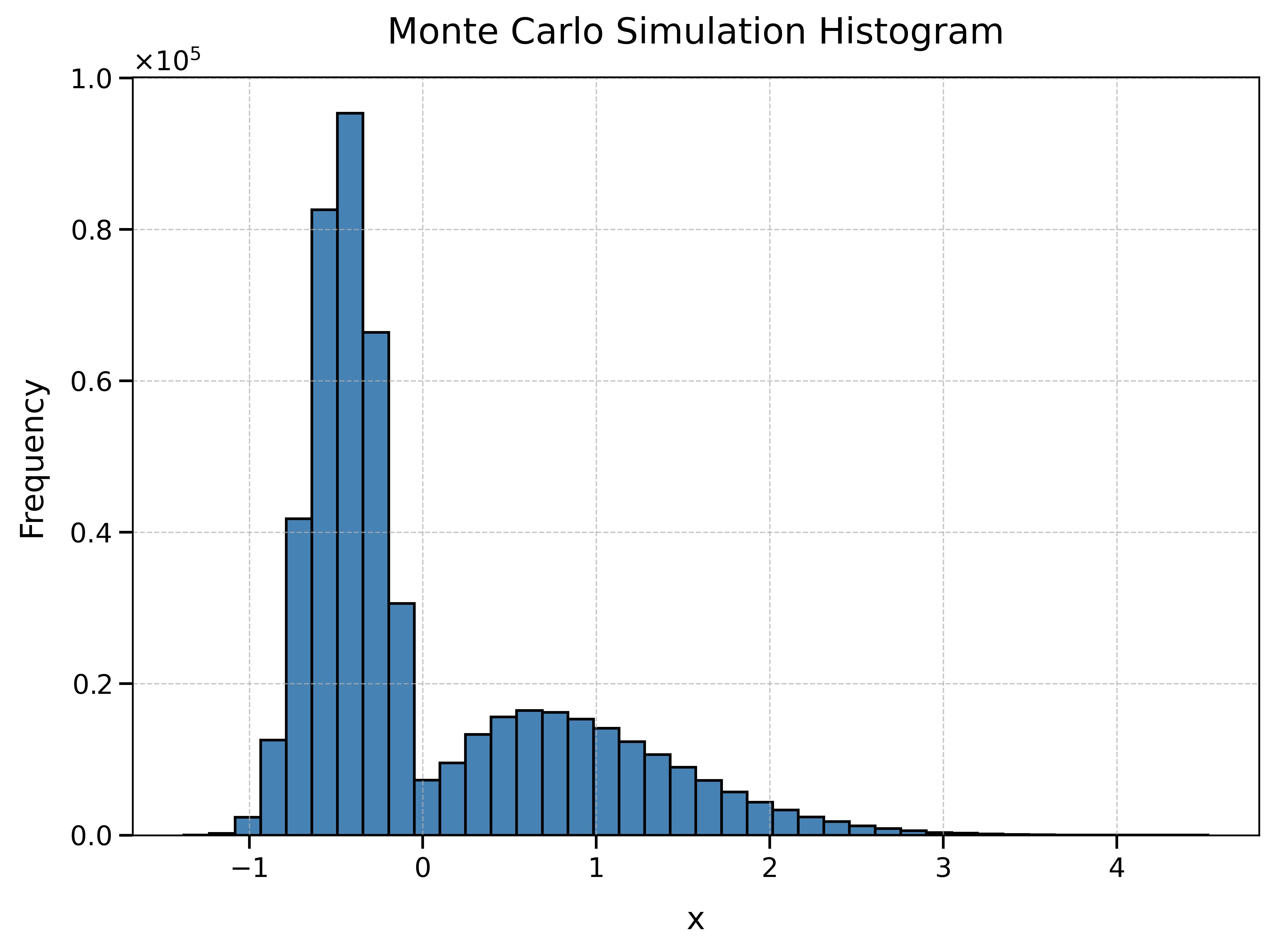}
		\label{fig:-x^2}
	\end{minipage}
	\vspace{-0.7cm}
	\centering
	\caption{The left figure illustrates the probability density function of G-normal distributions when the given measurement is $x^3$, which exhibits varying convexity and concavity. The right figure presents the corresponding histogram when the number of samples is 500,000.}
	\label{fig_x^3}
\end{figure}

\subsection{Convergence of the discrete second-order difference quotient and nonlinear flux}

In this subsection, we provide numerical evidence for the key convergence result established in Section~\ref{sec:5.2}, namely, the convergence of the discrete second-order difference quotients and of the associated nonlinear flux generated by the backward trinomial tree. These quantities play a central role in the analysis, since they determine the discrete optimal feedback control and thereby govern the forward propagation of the responsive distribution.

Recall that, in the notation of Section~\ref{sec:5.2}, the discrete second-order difference quotient is defined by
\[
V_i^n := \delta_h^2 U_i^n
      = \frac{U_{i+1}^n - 2U_i^n + U_{i-1}^n}{h^2},
\]
and the associated nonlinear flux is given by
\[
W_i^n := \sigma^2(V_i^n)\,V_i^n,
\]
where
\[
\sigma^2(z)=
\begin{cases}
\overline{\sigma}^{\,2}, & z\ge 0,\\[2mm]
\underline{\sigma}^{\,2}, & z<0.
\end{cases}
\]
Thus, $V_i^n$ and $W_i^n$ are the discrete counterparts of $u_{xx}$ and
$w=\sigma^2(u_{xx})u_{xx}$, respectively.

For this experiment, we take the terminal condition $\phi(x)=\sin(3x)$ and set $T=1$. The mesh parameters are chosen according to the parabolic scaling
\[
h=\overline{\sigma}\sqrt{\Delta t}\cdot \mathrm{ratio},
\qquad
\mathrm{ratio}=1.1,
\]
which is consistent with the stability regime used throughout the paper. The switching tolerance is fixed at $10^{-6}$.

To assess convergence, we evaluate the numerical errors at the intermediate time $t=0.5$. A reference solution is computed on a sufficiently fine grid with $N_{\mathrm{ref}}=3200$ time steps. For the coarser meshes $N=100,200,400,800$, we measure the $L^\infty$ errors
\[
\|V^{(N)}-V^{(\mathrm{ref})}\|_{L^\infty},
\qquad
\|W^{(N)}-W^{(\mathrm{ref})}\|_{L^\infty},
\]
where the reference solution is interpolated onto the corresponding coarse grid. The results are reported in Table~\ref{tab:curvature_convergence}.

The data show a clear decrease in both errors under mesh refinement, which is consistent with the convergence theory developed in Section~4.2. In particular, the experiment supports the numerical convergence of the discrete second-order difference quotient $V_i^n$ to $u_{xx}$ and of the nonlinear flux $W_i^n$ to $w$. This is especially significant in view of the theoretical role of these quantities: the convergence of $V_i^n$ yields the convergence of the discrete optimal control, while the convergence of $W_i^n$ provides numerical confirmation of the auxiliary monotone scheme introduced for the $w$-equation. Together, these observations furnish numerical support for the analytical mechanism underlying the convergence of the responsive distributions. 

\begin{table}[h]
\centering
\begin{tabular}{c c c c c}
\hline
$N$ & $\|\delta_h^2 U - \delta_h^2 U^{\mathrm{ref}}\|_\infty$
& order &
$\|W - W^{\mathrm{ref}}\|_\infty$
& order \\
\hline
100 & $7.02\times 10^{-1}$ & -- 
    & $5.62\times 10^{-2}$ & -- \\
200 & $6.52\times 10^{-1}$ & 0.21 
    & $4.35\times 10^{-2}$ & 0.74 \\
400 & $3.05\times 10^{-1}$ & 2.20 
    & $2.62\times 10^{-2}$ & 1.47 \\
800 & $1.93\times 10^{-1}$ & 1.31 
    & $1.76\times 10^{-2}$ & 1.15 \\
\hline
\end{tabular}
\caption{$L^\infty$ convergence of the discrete curvature
and nonlinear flux at $t=0.5$.}
\label{tab:curvature_convergence}
\end{table}

\subsection{Numerical Verification of the Convergence for the Responsive Distribution}\label{sec:5.3}

To complement the theoretical convergence results established in Section~\ref{sec:5}, we investigate the numerical convergence behavior of the responsive distributions generated by the forward trinomial tree scheme. The purpose of this experiment is not only to assess numerical accuracy, but also to illustrate how the discrete response–induced distributions approach the weak solution of the associated Fokker–Planck equation as the discretization is refined.


We choose the test function $\phi(x) = \sin(3x)$, which is neither convex nor concave. As discussed in Section~3, such measurements activate the bang–bang structure of the optimal volatility control and therefore induce spatially heterogeneous diffusion. This choice allows us to examine the convergence of the responsive distribution in a setting where the effective diffusion coefficient switches across space.

Let $f_h(x)$ denote the probability density obtained from the forward trinomial tree with spatial step size $h$, and let $f_{\text{ref}}(x)$ be a reference density computed on a sufficiently fine grid ($N = 3200$). We measure the discrepancy between these densities using the $L^2$-norm,
\begin{equation}
    E_{L^2}(h) := \|f_h - f_{\text{ref}}\|_{L^2} = \sqrt{\int_{-\infty}^{\infty} |f_h(x) - f_{\text{ref}}(x)|^2 dx}, \label{eq:L2_error}
\end{equation}
where the integration is restricted to a bounded region capturing the effective support of the distributions. The empirical convergence rate between successive refinements is computed as:
\begin{equation}
    \text{Rate} = \frac{\log(E_{L^2}(h_{\text{coarse}})/E_{L^2}(h_{\text{fine}}))}{\log(h_{\text{coarse}}/h_{\text{fine}})}, \label{eq:conv_rate}
\end{equation}
providing a quantitative measure of the convergence order that can be compared with theoretical predictions for explicit finite difference schemes.

Table~\ref{tab:convergence} reports the convergence behavior under successive grid refinements, with the number of time steps increasing from \(N=100\) to \(N=800\). The \(L^2\)-error decreases monotonically as the grid is refined, indicating the stability and convergence of the forward trinomial recursion for the responsive distribution. This behavior is consistent with the convergence theory developed in Sections~\ref{sec:5}.
Moreover, the observed convergence rates approach second order as the discretization becomes finer.


\begin{table}[htbp]
\centering
\caption{Convergence analysis of probability density functions under the measurement $\phi(x) = \sin(3x)$. The systematic reduction in $L^2$ error demonstrates second-order convergence behavior, validating the theoretical equivalence with explicit finite difference discretization.}
\label{tab:convergence}
\begin{tabular}{cccc}
\toprule
$N$ & $h$ & $E_{L^2}$ & Convergence Rate \\
\midrule
100 & 0.110000 & 5.949$\times 10^{-2}$ & --- \\
200 & 0.077782 & 3.448$\times 10^{-2}$ & 1.574 \\
400 & 0.055000 & 1.739$\times 10^{-2}$ & 1.975 \\
800 & 0.038891 & 8.566$\times 10^{-3}$ & 2.042 \\
\bottomrule
\end{tabular}
\end{table}


Figure~\ref{fig:convergence_plot} provides visual confirmation of the convergence behavior, illustrating the systematic approach of computed probability densities toward the reference solution. The progressive refinement demonstrates that even moderate grid resolutions capture the essential features of the G-normal distribution under complex measurements, while finer discretizations achieve an increasingly precise approximation of subtle distributional characteristics.

\begin{figure}[htbp]
    \centering
    \includegraphics[width=0.75\textwidth]{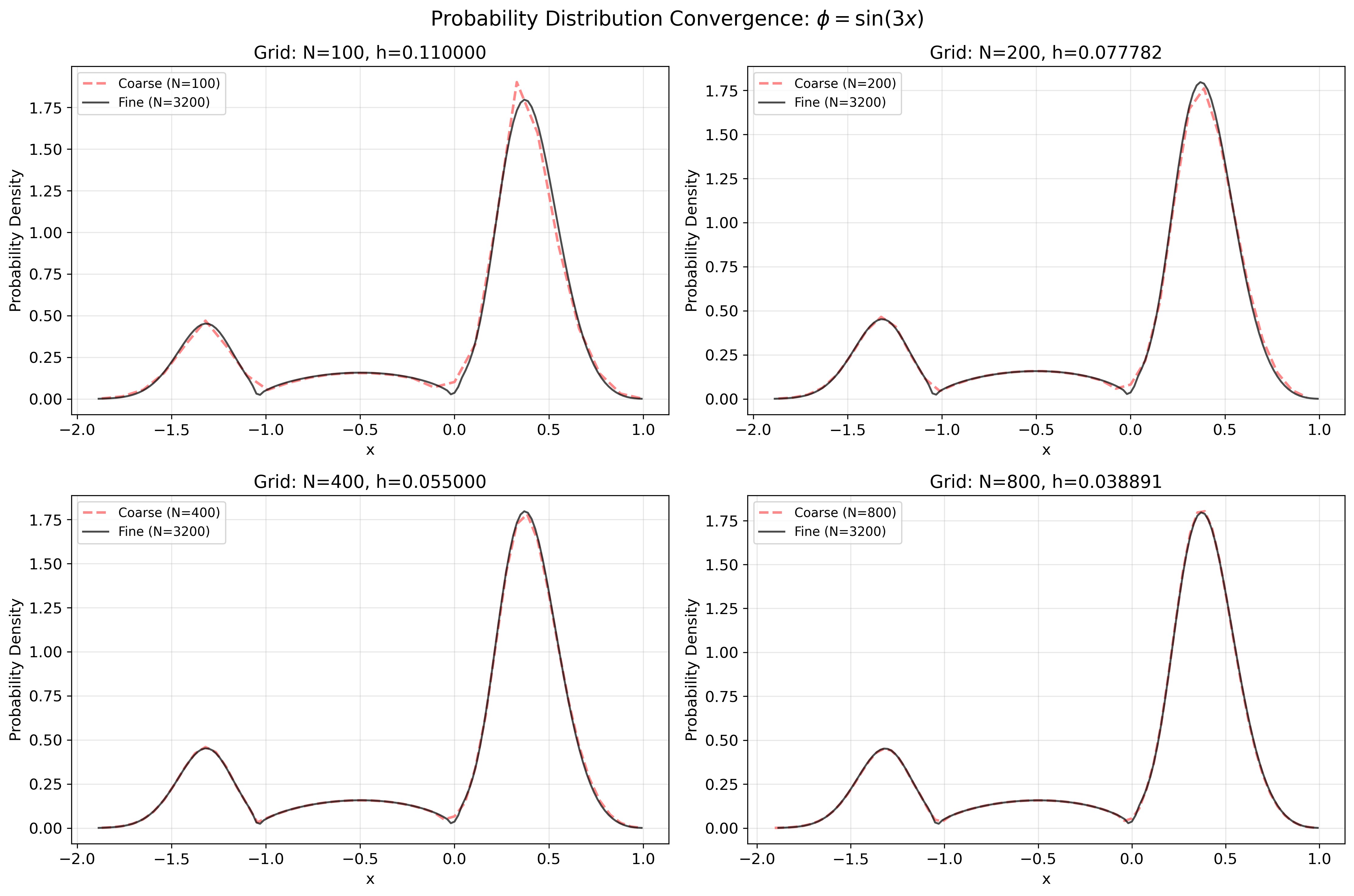}
    \caption{Convergence of probability density functions under systematic grid refinement for $\phi(x) = \sin(3x)$. The curves demonstrate that, as the discretization becomes increasingly refined, both the global structure and local features gradually approach the reference solution.}
    \label{fig:convergence_plot}
\end{figure}

\section{Conclusions}
{\label{sec:7}}

This paper introduces the notion of a responsive distribution for $G$-normal random variables and develops a coupled backward--forward trinomial tree framework for its computation.
Starting from the stochastic control representation of the $G$-expectation, we interpret the nonlinear $G$-expectation as a linear expectation under the terminal law of the optimally controlled diffusion, which naturally gives rise to the concept of a responsive distribution. 
The backward trinomial tree yields approximations of the valuen (i.e., the G-expectation) and the optimal feedback control, while the forward trinomial tree propagates the induced transition probabilities to construct a discrete approximation of the responsive distribution.
On the analytical side, we rigorously prove that the discrete value function generated by the backward trinomial tree converges to the value function of the associated stochastic optimal control problem, and consequently, to the $G$-expectation. More importantly, we establish that the discrete second-order difference quotients generated by the trinomial tree converge locally uniformly to the second spatial derivative of the backward $G$-heat equation, which directly yields the strong convergence of the discrete optimal control.
Building on this result, we further prove that the discrete probability measures generated by the forward recursion converge weakly$^\ast$ to the responsive distribution, equivalently, to a weak solution of the associated nonlinear Fokker--Planck equation. Numerical experiments support the theoretical convergence of the coupled trinomial tree schemes, indicate an approximately second-order convergence rate for the responsive distributions even under non-convex and non-concave measurements, and demonstrate that the proposed framework provides a powerful and practical sampling tool for visualizing complex, measurement-dependent responsive distributions.
 
Overall, the present work provides a unified framework for the computation and convergence analysis of responsive distributions associated with one-dimensional $G$-normal random variables under volatility uncertainty, clarifies their link with stochastic optimal control and nonlinear Fokker--Planck dynamics, and yields a practical sampling methodology relevant to financial mathematics and related problems under model uncertainty.

\bibliography{main}
\bibliographystyle{plain}


\newpage

\appendix
\section{Appendix A：Full Proofs of Convergence}

\subsection{Proof of Lemma \ref{lemma:MSC}}
\label{app_prf_lemma1}
\begin{proof}
We verify consistency, $l_\infty$-stability, and monotonicity 
under the stability condition 
$\overline{\sigma}^2 \frac{\Delta t}{h^2}\le \frac12$.

\medskip
\noindent
\textbf{(i) Consistency.}
A standard Taylor expansion shows that the forward time difference 
approximates $\partial_t$ with error $O(\Delta t)$ and the centered 
second difference approximates $\partial_{xx}$ with error $O(h^2)$. 
Hence the discrete operator converges to the G-heat operator 
as $h,\Delta t\to0$.

\medskip
\noindent
\textbf{(ii) $l_\infty$-stability.} 
Let $i_{0}$ be an index such that $\left\vert U_{i_{0}}^{n}\right\vert
=\left\vert \left\vert U^{n}\right\vert \right\vert _{\infty },$ from the
discrete scheme \eqref{dbp2} and Assumption \ref{ass:CFL_w}, we have
\begin{eqnarray}
\left\vert \left\vert U^{n}\right\vert \right\vert _{\infty } &=&\left\vert
U_{i_{0}}^{n}\right\vert  \notag \\
&\leq &\frac{\sigma _{i}^{2}}{2}\frac{\Delta t}{h^{2}}\left\vert \left\vert
U^{n+1}\right\vert \right\vert _{\infty }+\left( 1-\sigma _{i}^{2}\frac{%
\Delta t}{h^{2}}\right) \left\vert \left\vert U^{n+1}\right\vert \right\vert
_{\infty }+\frac{\sigma _{i}^{2}}{2}\frac{\Delta t}{h^{2}}\left\vert
\left\vert U^{n+1}\right\vert \right\vert _{\infty }  \notag \\
&=&\left\vert \left\vert U^{n+1}\right\vert \right\vert _{\infty }.
\end{eqnarray}%
By backward induction and $U^N=\phi$, we obtain
\[
\max_n \|U^n\|_\infty \le \|\phi\|_\infty.
\]

\medskip
\noindent
\textbf{(iii) Monotonicity.}
We consider the perturbation on $U_{i}^{n}$, it is obvious from
equation \eqref{dbp} that
\begin{equation}
g_{i}\left( U_{i}^{n}+\epsilon _{i}^{n},U_{i}^{n+1},\{U_{k}^{n+1}\}_{k\in
N_{i}}\right) -g_{i}\left( U_{i}^{n},U_{i}^{n+1},\{U_{k}^{n+1}\}_{k\in
N_{i}}\right) \leq 0.  \label{T-diag}
\end{equation}
We now turn to the perturbation on $U_{i}^{n+1}$ and $\{U_{k}^{n+1}\}_{k\in
N_{i}}.$ Denote by $\widetilde{U}_{i}^{n+1}=U_{i}^{n+1}+$ $\epsilon
_{i}^{n+1},$ for $\epsilon _{i}^{n+1}\geq 0.$ We also denote by $\widetilde{U%
}_{k}^{n+1}=U_{k}^{n+1}+\epsilon _{k}^{n+1}$, for $\epsilon _{k}^{n+1}\geq 0$
and $k\in N_{i}$. Then the difference between the two sides of the
inequality \eqref{g-off-diag} is
\begin{eqnarray}
T &:=&
g_{i}\left( U_{i}^{n},U_{i}^{n+1}+\epsilon
_{i}^{n+1},\{U_{k}^{n+1}+\epsilon _{k}^{n+1}\}_{k\in N_{i}}\right)
-g_{i}\left( U_{i}^{n},U_{i}^{n+1},\{U_{k}^{n+1}\}_{k\in N_{i}}\right)
\notag \\
&=&\frac{\epsilon _{i}^{n+1}}{\Delta t}+\frac{\widetilde{\sigma }^{2}}{2}%
\delta _{h}^{2}\widetilde{U}_{i}^{n+1}-\frac{\widehat{\sigma }^{2}}{2}\delta
_{h}^{2}U_{i}^{n+1}  \notag \\
&=&\frac{\epsilon _{i}^{n+1}}{\Delta t}+\left( \frac{\widetilde{\sigma }^{2}%
}{2}\delta _{h}^{2}\widetilde{U}_{i}^{n+1}-\frac{\widehat{\sigma }^{2}}{2}%
\delta _{h}^{2}\widetilde{U}_{i}^{n+1}\right) -\left( \frac{\widehat{\sigma }%
^{2}}{2}\delta _{h}^{2}U_{i}^{n+1}-\frac{\widehat{\sigma }^{2}}{2}\delta
_{h}^{2}\widetilde{U}_{i}^{n+1}\right)   \notag \\
&\geq &\left( \frac{1}{\Delta t}-\frac{\widehat{\sigma }^{2}}{h^{2}}\right)
\epsilon _{i}^{n+1}\label{mon1}
\end{eqnarray}%
since $\frac{\widetilde{\sigma }^{2}}{2}\delta _{h}^{2}\widetilde{U}%
_{i}^{n+1}\geq \frac{\widehat{\sigma }^{2}}{2}\delta _{h}^{2}\widetilde{U}%
_{i}^{n+1},$ where $\widetilde{\sigma }^{2}=\sigma ^{2}\left( \delta _{h}^{2}%
\widetilde{U}_{i}^{n+1}\right) ,$ $\widehat{\sigma }^{2}=\sigma ^{2}\left(
\delta _{h}^{2}U_{i}^{n+1}\right)$. Then, by Assumption \ref{ass:CFL_w}, we obtain $T\geq 0.$ The monotonicity of scheme \eqref{dbp} now follows directly from Definition \ref{def-mono}.

\medskip
This completes the proof.
\end{proof}

\subsection{Proof of monotonicity for the scheme \eqref{eq:scheme_w_def}}
\label{app:proof_mono_w}
\begin{proof}
Let $S_i^{n+1} = W_i^{n
+1} -  \widetilde W_i^{n+
1}$, we have
\[
S_i^{n}=\frac{\sigma^{2}(S_i^{n})}{\sigma^{2}(S_i^{n+1}) }
\left[ \left(1 - \frac{\Delta t\, \sigma^2(S_i^{n+1})}{h^{2}}\right)
S_i^{n+1}
+
\frac{\Delta t\, \sigma^{2}(S_i^{n+1})}{2h^{2}}
\left(
S_{i+1}^{n+1}  + S_{i-1}^{n+1}
\right)
\right].
\]
By the stability condition，$0 \le \frac{\Delta t\, \overline\sigma^{2}}{2h^{2}} \le 1$, then we have
\[
0 \le \frac{\Delta t\, \sigma^{2}(S_i^{n+1})}{2h^{2}} \le 1,
\quad
0 \le 1 - \frac{\Delta t\, \sigma^2(S_i^{n+1})}{h^{2}} \le 1, \quad
\frac{\sigma^{2}(S_i^{n})}{\sigma^{2}(S_i^{n+1})} > 0.
\]
Therefore, for all \(\Delta t\) sufficiently small,
\[
S_i^{n} \ge0, \quad if \;\; S_i^{n+1} \ge 0.
\]
\end{proof}

\subsection{Proof of viscosity consistency \eqref{eq:cons_w}}
\label{app:proof_cons_w}

\begin{proof}
We prove the consistency in the viscosity sense for the discrete operator
$L^{h,\Delta t}$ defined in \eqref{eq:scheme_w_def}. Since the arguments for
$\limsup$ and $\liminf$ are symmetric (with $F^*$ and $F_*$), we present the
$\limsup$ part; the $\liminf$ part follows analogously.

As in the standard viscosity framework for discontinuous nonlinearities, define the upper semicontinuous envelope of $F$,
\[
\limsup_{(\tau,y)\to (t_{0},x_{0}) \atop (\tau,y)\in (0,T)\times\Omega}
\partial_t \psi (\tau,y) + \tfrac12 {\sigma}^{2}(\psi)\partial_{xx}\psi(\tau,y)
=
\begin{cases}
\partial_t \psi (t_0,x_0)+ \tfrac12\bar{\sigma}^{2} \partial_{xx}\psi(t_0,x_0), & \psi>0,\\[4pt]
\partial_t \psi (t_0,x_0) + \tfrac12 \max\{\overline{\sigma}^{2}\partial_{xx}\psi,\; \underline{\sigma}^{2}\partial_{xx}\psi\}(t_0,x_0), & \psi=0,\\[4pt]
\partial_t \psi (t_0,x_0) + \tfrac12 \underline{\sigma}^{2}\partial_{xx}\psi(t_0,x_0), & \psi<0.
\end{cases}
\]

Let $\psi^{h,\Delta t}$ be the piecewise-constant interpolation of $\psi$ on the grid,
the scheme uses the operator:
\begin{equation}\label{eq:scheme_consist}
\small
\begin{aligned}
&\mathcal{L}^{h,\Delta t}(t,x,\psi^{h,\Delta t}(t,x),\psi) \\
= &
\Bigl[
\psi^{h,\Delta t}_{t+\Delta t, x}
-
\frac{\sigma^{2}(\psi^{h,\Delta t}_{t+\Delta t,x})}
     {\sigma^{2}(\psi^{h,\Delta t}_{t,x})}
\psi^{h,\Delta t}_{t,x}
\Bigr]\!/\Delta t + 
\frac{\sigma^{2}(\psi^{h,\Delta t}_{t+\Delta t,x})}{2}
\frac{
\psi^{h,\Delta t}_{t+\Delta t,x+h}
-2\psi^{h,\Delta t}_{t+\Delta t,x}
+ \psi^{h,\Delta t}_{t+\Delta t,x-h}
}{h^{2}}\\
= &\frac{
\psi^{h,\Delta t}_{t+\Delta t,x}
-
\psi^{h,\Delta t}_{t,x}
}{\Delta t}
+
\frac{
\psi^{h,\Delta t}_{t,x}
-
\frac{\sigma^{2}(\psi^{h,\Delta t}_{t+\Delta t,x})}{\sigma^{2}(\psi^{h,\Delta t}_{t,x})}\psi^{h,\Delta t}_{t,x}
}{\Delta t}
+
\frac{
\psi^{h,\Delta t}_{t+\Delta t,x+h}
- 2\psi^{h,\Delta t}_{t+\Delta t,x}
+ \psi^{h,\Delta t}_{t+\Delta t,x-h}
}{h^{2}}
\frac{\sigma^{2}(\psi^{h,\Delta t}_{t,x})}{2}.
\end{aligned}
\end{equation}
and 
\begin{equation}\label{eq:app_.6}
\begin{aligned}
\limsup_{h,\Delta t\to 0,\xi \to0, \atop (\tau,y)\to (t_{0},x_{0})}
L^{h,\Delta t}(\tau,y,\psi^{h,\Delta t}_{\tau,y}+\xi, \psi^{h,\Delta t}+\xi)
= \limsup_{h,\Delta t\to 0,\xi \to0, \atop (\tau,y)\to (t_{0},x_{0})} A^{h,\Delta t}+B^{h,\Delta t}+C^{h,\Delta t},
\end{aligned}
\end{equation}
with
\begin{align*}
A_{h,\Delta t}
&:=\frac{
\psi^{h,\Delta t}_{\tau+\Delta t,y}-\psi^{h,\Delta t}_{\tau,y}}{\Delta t},\\[2pt]
B_{h,\Delta t}
&:=\frac{
\psi^{h,\Delta t}_{\tau,y}+\xi
-
\frac{\sigma^{2}(\psi^{h,\Delta t}_{\tau+\Delta t,y})}
     {\sigma^{2}(\psi^{h,\Delta t}_{\tau,y})}
(\psi^{h,\Delta t}_{\tau,y}+\xi)
}{\Delta t},\\[2pt]
C_{h,\Delta t}
&:=\frac{\sigma^{2}(\psi^{h,\Delta t}_{\tau,y})}{2}
\frac{
\psi^{h,\Delta t}_{\tau+\Delta t, y+h}
-2\psi^{h,\Delta t}_{\tau+\Delta t,y}
+ \psi^{h,\Delta t}_{\tau+\Delta t, y-h}
}{h^{2}}.
\end{align*}

\medskip
\noindent\textbf{Step 1: time term.}
By standard consistency of the forward difference quotient for smooth $\psi$,
\begin{equation}
\begin{aligned}
\limsup_{h,\Delta t\to 0,\xi \to0, \atop (\tau,y)\to (t_{0},x_{0})}
\frac{
\psi^{h,\Delta t}_{\tau+\Delta t,y}
-
\psi^{h,\Delta t}_{\tau,y}
}{\Delta t} = \partial_t \psi(t_0,x_0)
\end{aligned}
\end{equation}

\medskip
\noindent\textbf{Step 2: diffusion term.} 
When $\psi^{h,\Delta t}(t_0,x_0)\neq 0$, we have
\[
\limsup_{h,\Delta t\to 0,\ \xi \to 0,\ (\tau,y)\to (t_{0},x_{0})}
\Bigg(
1-\frac{\sigma^{2}\big(\psi^{h,\Delta t}_{\tau+\Delta t,y}\big)}
           {\sigma^{2}\big(\psi^{h,\Delta t}_{\tau,y}\big)}
\Bigg)=0,
\]
and therefore
\begin{equation}
\begin{aligned}
\limsup_{h,\Delta t\to 0,\xi \to0, \atop (\tau,y)\to (t_{0},x_{0})}
B_{h,\Delta t}
&= \limsup_{h,\Delta t\to 0,\xi \to0, \atop (\tau,y)\to (t_{0},x_{0})} \frac{
1
-
\frac{\sigma^{2}(\psi^{h,\Delta t}_{\tau+\Delta t,y})}
     {\sigma^{2}(\psi^{h,\Delta t}_{\tau,y})}
}{\Delta t} \cdot \limsup_{h,\Delta t\to 0 \atop (\tau,y)\to (t_{0},x_{0})}(\psi^{h,\Delta t}_{\tau,y}+\xi)=0,
\end{aligned}
\end{equation}
\begin{align}
\limsup_{h,\Delta t\to 0,\xi \to0, \atop (\tau,y)\to (t_{0},x_{0})}
C_{h,\Delta t}
&=
\begin{cases}
 \tfrac12\bar{\sigma}^{2} \partial_{xx}\psi(t_0,x_0), & \psi^{h,\Delta t}_{t_0, x_0}>0,\\[4pt]
  \tfrac12 \max\{\overline{\sigma}^{2}\partial_{xx}\psi,\; \underline{\sigma}^{2}\partial_{xx}\psi\}(t_0,x_0), & \psi^{h,\Delta t}_{t_0, x_0}=0,\\[4pt]
 \tfrac12 \underline{\sigma}^{2}\partial_{xx}\psi(t_0,x_0), & \psi^{h,\Delta t}_{t_0, x_0}<0.
\end{cases}
\end{align}
Combining Steps 1--2, equation \eqref{eq:app_.6} can be rewritten as:
\begin{equation}\label{eq:scheme_consist}
\begin{aligned}
&\limsup_{h,\Delta t\to 0,\xi \to0, \atop (\tau,y)\to (t_{0},x_{0})}
L^{h,\Delta t}(\tau,y,\psi^{h,\Delta t}_{\tau,y}+\xi, \psi^{h,\Delta t}+\xi)\\
=&\begin{cases}
 \partial_t \psi (t_0,x_0)+ \tfrac12\bar{\sigma}^{2} \partial_{xx}\psi(t_0,x_0), & \psi^{h,\Delta t}_{t_0, x_0}>0,\\[4pt]
  \partial_t \psi (t_0,x_0)+ \tfrac12 \max\{\overline{\sigma}^{2}\partial_{xx}\psi,\; \underline{\sigma}^{2}\partial_{xx}\psi\}(t_0,x_0), & \psi^{h,\Delta t}_{t_0, x_0}=0,\\[4pt]
 \partial_t \psi (t_0,x_0)+ \tfrac12 \underline{\sigma}^{2}\partial_{xx}\psi(t_0,x_0), & \psi^{h,\Delta t}_{t_0, x_0}<0,
\end{cases}\\
=& \limsup_{(\tau,y) \to (t_0,x_0), \atop(\tau,y)\in[0,T)\times\Omega} \partial_t\psi(\tau,y) + \frac{1}{2}\sigma^2(\psi)\partial_{xx}\left(  \psi\right)(\tau,y),
\end{aligned}
\end{equation}
which is exactly the required viscosity sub-consistency. The super-consistency (the $\liminf$ inequality)
is proved in the same way, using the lower semicontinuous envelope $F_\ast$.

\end{proof}

\subsection{Proof of Theorem~\ref{thm:conv_w}}
\label{app:proof_conv_w}

\begin{proof}

Let \(\overline{w},\,\underline{w} \in B((0,T)\times\bar{\Omega})\) be defined by
\[
\overline{w}
   = \limsup_{h,\Delta t\to 0} w^{h,\Delta t},
\qquad
\underline{w}
   = \liminf_{h,\Delta t\to 0} w^{h,\Delta t}.
\]
By construction, $\overline{w}$ is upper semicontinuous and
$\underline{w}$ is lower semicontinuous, and clearly
$\underline{w}\le\overline{w}$ on $(0,T)\times\overline\Omega$.
We show that $\overline{w}$ is a viscosity subsolution of
\eqref{eq:w-equation}; the proof that $\underline{w}$ is a viscosity
supersolution is entirely analogous.

Let $(t_0,x_0)\in(0,T)\times\Omega$ and let
$\psi\in C^\infty_b((0,T)\times\overline\Omega)$ be such that
$\overline{w}-\psi$ attains a local maximum at $(t_0,x_0)$. Without loss of
generality, we may assume that this maximum is strict and that
\[
\overline{w}(t_0,x_0)=\psi(t_0,x_0).
\]
Moreover, by modifying $\psi$ outside a small neighborhood of
$(t_0,x_0)$, we may assume that
\[
\psi(t,x)\ge 2\sup_{h,\Delta t}\|w^{h,\Delta t}\|_\infty
\quad\text{for }(t,x)\notin B_r(t_0,x_0),
\]
for some $r>0$, and that
\[
\overline{w}(t,x) - \psi(t,x) \le0= \overline{w}(t_{0},x_{0}) - \psi(t_{0},x_{0}) \quad in \; B((t_0,x_0), r).
\]
By the definition of $\overline{w}$, there exist sequences
$\rho_n=(h_n,\Delta t_n)\to0$ and $(t_n,y_n)\to(t_0,x_0)$ such that
\[
w^{\rho_n}(t_{n},y_{n}) \to \overline{w}(t_{0},x_{0}),
\qquad n\to\infty,
\]
where each \((t_{n},y_{n})\) is a global maximum point of  
$w^{\rho_n}-\psi$. Define
\[
\xi_n:=w^{\rho_n}(t_n,y_n)-\psi(t_n,y_n)\to0.
\]
Then
\[
w^{\rho_n}(t,x)\le \psi(t,x)+\xi_n
\qquad\text{for all }(t,x)\in(0,T)\times\overline\Omega.
\]

By the definition of \(w^{\rho} = w^{h,\Delta t}\),  
and by the monotonicity of the scheme \(L^{\rho} = L^{h,\Delta t}\),  
together with
$w^{\rho_n}\le \psi + \xi_n,$
we obtain
\begin{align}
0=&L^{\rho_n}\bigl( \rho_n, \tau_n,y_n,\, w^{\rho_n}(t_n,y_n),\,
              w^{\rho_n} \bigr) \\
\le&L^{\rho_n}\bigl( \rho_n,\tau_n,y_n,\, \psi(t_n,y_n)+\xi_n,\,
              \psi+\xi_n \bigr).     
\end{align}
Using the consistency of the scheme of \eqref{eq:scheme_consist}, above yield
\begin{equation}
\begin{aligned}
0
\;\le\;
&\liminf_{\rho_n\to\infty}
L^{\rho_n}\bigl((\tau_n,y_n),\, \psi(\tau_n,y_n)+\xi_n,\,
\psi + \xi_n \bigr)\\
\;\le\;& \liminf_{\substack{(\tau,y)\to (t_0,x_0),\atop n\to\infty,\xi\to 0}}
L\bigl( (\tau,y),\, \psi(\tau,y)+\xi,\, \psi+\xi \bigr)\\
\;\le\;& \partial_t \psi(t_0，x_0) + F_{*}(D^2 \psi, \psi, t_0, x_0).
\end{aligned}
\end{equation}
Since $\overline{w}(t_0,x_0)=\psi(t_0,x_0)$, we have
\[
\partial_t \psi(t_0，x_0) + F_{*}(D^2 \psi, \overline{w}, t_0, x_0)
\]
As $(t_0,x_0)$ was
arbitrary, $\overline{w}$ is a viscosity subsolution of
\eqref{eq:w-equation},
\[
\partial_t \psi(t_0，x_0) + F_{*}(D^2 \psi, \overline{w}, t, x) \ge 0.
\]
This shows that \(\overline{w} = \limsup_{h,\Delta t\to 0} w^{h,\Delta t}\) is a viscosity subsolution of equation \eqref{eq:w-equation}.
Similarly,  
$\underline{w}
=
\liminf_{h,\Delta t\to 0} w^{h,\Delta t}$
is a viscosity supersolution of equation \eqref{eq:w-equation}.
\end{proof}

\subsection{Proof of Remark~\ref{thm:conv_v}}
\label{app:proof_conv_v}
Define the function \( F(v) = \sigma^2(v)\,v \), where the function \( \sigma \) satisfies the condition specified in Equation~\eqref{eq:sigma_optimal}. Then,
\[
F(v) =
\begin{cases}
\overline{\sigma}^2 v, & v \ge 0, \\
\underline{\sigma}^2 v, & v < 0.
\end{cases}
\]
It is straightforward to verify that, for any \( x, y \in \mathbb{R} \),
\[
|F(x) - F(y)| \ge \underline{\sigma}^2 |x - y|.
\]
Hence, we have the pointwise inequality
\[
|V_i^n - u_{xx}(t_n, x_i)| \le \frac{1}{\underline{\sigma}^2} \, |F(V_i^n) - F(u_{xx}(t_n, x_i))|.
\]
Taking the \( L^p \)-norm on both sides yields
\[
\|V - u_{xx}\|_{L^p} \le \frac{1}{\underline{\sigma}^2} \, \|F(V) - F(u_{xx})\|_{L^p}.
\]
Since the right-hand side \( \|F(V_i^n) - F(u_{xx})\|_{L^p} \) converges uniformly to zero, it follows that
\[
V_i^n \to u_{xx}(t_n, x_i), \quad \text{uniformly.} \quad \square
\]

\subsection{Proof of Lemma~\ref{lem:probability_preserving}}
\label{app:proof_probability_preserving}
\begin{proof}
The trinomial forward step can be written as
\begin{align}\label{eq:fwd}
		p_i^n &= P_{i-1, i}^n (\sigma_{i-1}^{n-1}) \ast p_{i-1}^{n-1} + P_{i,i}^n (\sigma_i^{n-1}) \ast p_i^{n-1} + P_{i+1, i}^n (\sigma_{i+1}^{n-1}) \ast p_{i+1}^{n-1},\ i = -n, \dots, n,
\end{align}
where for any $\sigma\in[\underline{\sigma},\overline{\sigma}]$,
\[
P_{i,i-1}^n(\sigma) = P_{i,i+1}^n(\sigma) = \frac{\sigma^2}{2} \frac{\Delta t}{h^2}, \quad P_{i,i}^n(\sigma) = 1 - P_{i,i-1}^n(\sigma) - P_{i,i+1}^n(\sigma).
\]
Under \eqref{ass:CFL_w}, all transition weights are nonnegative and satisfy
$$P_{i,i-1}^n(\sigma)+P_{i,i}^n(\sigma)+P_{i,i+1}^n(\sigma)=1.$$

(1) Positivity follows by induction from $p_i^0\ge 0$ and the fact that $p_i^n$ is a
nonnegative linear combination of $\{p_{i-1}^{n-1},p_i^{n-1},p_{i+1}^{n-1}\}$.

(2) Mass conservation: summing \eqref{eq:fwd} over $i$ and using the stochasticity of the
weights yields $\sum_i p_i^n = \sum_i p_i^{n-1}$, hence $\sum_i p_i^n  = \sum_i p_i^0  = 1$.
\end{proof}

\subsection{Proof of Lemma~\ref{lem:sig_V_to_sig_uxx}}
\label{app:vitali_sigma}

\begin{proof}
\smallskip
\noindent\textbf{Step 1: Almost everywhere convergence.}
By the Theorem \ref{thm:conv_v}, we have
\[
V_i^n \;\longrightarrow\; u_{xx}(t_n,x_i)
\quad \text{for a.e. } (t,x)\in [0,1]\times \overline \Omega, \qquad \text{as } h\to 0.
\]
Since $\sigma^2(\cdot)$ has only finitely many points of discontinuity, the limit
$u_{xx}(t_n,x_i)$ avoids these discontinuities for almost every $x\in\Omega$.
At such continuity points, composition preserves convergence, and therefore
\[
\sigma^2(V_i^n) \;\longrightarrow\; \sigma^2\!\big(u_{xx}(t_n,x_i)\big)
\quad \text{for a.e. } x\in\Omega .
\]

\smallskip
\noindent\textbf{Step 2: Uniform integrability.}
The function $\sigma^2(\cdot)$ is bounded, i.e.,
\(
0 \le \sigma^2(\cdot) \le \overline{\sigma}^{\,2}.
\)
Hence, for any $\varepsilon>0$, choosing
\(
\delta=\varepsilon/\overline{\sigma}^{\,2},
\)
we have, for any measurable set $A\subset\Omega$ with $\mu(A)<\delta$,
\[
\int_A \big|\sigma^2(V_i^n)\big|\,d\mu
\;\le\;
\overline{\sigma}^{\,2}\,\mu(A)
\;\le\;
\varepsilon .
\]
This shows that the family $\{\sigma^2(V_i^n)\}$ is uniformly integrable.

\smallskip
\noindent\textbf{Step 3: Strong convergence.}
Combining almost everywhere convergence with uniform integrability, Vitali's convergence
theorem yields
\[
\sigma^2(V_i^n)
\;\longrightarrow\;
\sigma^2\!\big(u_{xx}(t_n,x_i)\big)
\quad \text{strongly in } L^p([0,T] \times \Omega),
\qquad \forall\,1\le p<\infty .
\]
In particular,
\[
\int_\Omega
\big|\sigma^2(V_i^n)-\sigma^2\!\big(u_{xx}(t_n,x_i)\big)\big|^{p}\,dx
\;\longrightarrow\; 0,
\qquad \text{as } h\to 0,
\]
which completes the proof.
\end{proof}

\subsection{Proof of Theorem~\ref{thm:weakstar_FP}}
\label{app:proof_thm_weakstar_FP}

\begin{proof}
Fix $T\in(0,1)$ and let $N$ be such that $t_N=N\Delta t=T$.
The proof is divided into three steps.

\smallskip
\noindent\textbf{Step 1. Discrete weak formulation.}
Let $\varphi\in C_c^{\infty}([0,T]\times\mathbb R)$ with $\varphi(\cdot,T)=0$, and denote
$\varphi_i^n:=\varphi(x_i,t_n)$.
Multiply the scheme \eqref{eq:p_fdm} by $\varphi_i^{n+1}h\Delta t$ and sum over
$i\in\mathbb Z$ and $n=0,\dots,N-1$ to obtain
\begin{equation}\label{eq:sum_app}
\sum_{n=0}^{N-1}\sum_{i=-\infty}^{\infty}
h\Delta t\left[
-\,\frac{p_i^{\,n+1}-p_i^{\,n}}{\Delta t}\,\varphi_i^{n+1}
+\frac12\,\varphi_i^{n+1}\,
\frac{ (\sigma_{i+1}^{\,n})^2 p_{i+1}^{\,n}
-2(\sigma_i^{\,n})^2 p_i^{\,n}
+(\sigma_{i-1}^{\,n})^2 p_{i-1}^{\,n}}{h^2}
\right]=0,
\end{equation}
where $\sigma_i^n=\sigma(V_i^n)$.

\medskip
\noindent
\emph{(i) Time summation-by-parts.}
Using the discrete identity
\[
-\sum_{n=0}^{N-1}
(p_i^{n+1}-p_i^n)\varphi_i^{n+1}
=
\sum_{n=0}^{N-1}
p_i^n(\varphi_i^{n+1}-\varphi_i^n)
-p_i^N\varphi_i^N+p_i^0\varphi_i^0,
\]
we rewrite the time contribution in 
\begin{equation}\label{eq:proof_8.1}
\begin{aligned}
&\sum_{n=0}^{N-1} \sum_{i=-\infty}^\infty -\frac{p_i^{n+1} - p_i^n}{\Delta t} \varphi_i^{n+1} h \Delta t\\
=&\sum_{n=0}^{N-1} \sum_{i=-\infty}^\infty  p_i^n\frac{\varphi_i^{n+1} - \varphi_i^n}{\Delta t} h \Delta t 
+ \sum_{i=-\infty}^\infty \left[  p_i^0 \varphi_i^0 - p_i^N \varphi(T,x_i)\right]h.
\end{aligned}
\end{equation}

\medskip
\noindent
\emph{(ii) Spatial summation-by-parts.}
For the diffusion term in \eqref{eq:sum_app}, using the compact support of $\varphi$ in $x$ and a
discrete integration-by-parts identity, we obtain
\begin{equation}\label{eq:space_part_app}
\begin{aligned}
&\sum_{n=0}^{N-1} \sum_{i=-\infty}^\infty \frac{h \Delta t}{2} \frac{(\sigma_{i+1}^n)^2 p_{i+1}^n - 2 (\sigma_i^n)^2 p_i^n + (\sigma_{i-1}^n)^2 p_{i-1}^n}{h^2} \varphi_i^{n+1}\\
=&\sum_{n=0}^{N-1} \sum_{i=-\infty}^\infty \frac{h \Delta t}{2} (\sigma_{i}^n)^2 p_i^n \frac{(\varphi_{i+1}^n - 2 \varphi_{i}^n + \varphi_{i-1}^n)}{h^2}\\
=&\sum_{n=0}^{N-1} \sum_{i=-\infty}^\infty \frac{h \Delta t}{2} (\sigma_{i}^n)^2 p^{h,\Delta t} \frac{(\varphi_{i+1}^n - 2 \varphi_{i}^n + \varphi_{i-1}^n)}{h^2}.
\end{aligned}
\end{equation}
Then the equation \eqref{eq:sum_app} can be rewritten as
\begin{equation}\label{eq_prf8_1}
\sum_{i=-\infty}^\infty \left[  p_i^0 \varphi_i^0 - p_i^N \varphi(T,x_i) \right] h + 
\sum_{n=0}^{N-1} \sum_{i=-\infty}^\infty 
p_i^n \left[ \frac{\varphi_{i}^{n+1} - \varphi_{i}^{n}}{\Delta t} + \frac{\left( \sigma_i^n \right)^2}{2} \frac{(\varphi_{i+1}^n - 2 \varphi_{i}^n + \varphi_{i-1}^n)}{h^2} \right]h \Delta t.
\end{equation}

\medskip
\noindent
\textbf{Step 2. Decomposition into integral form and remainder.} We define $u^{h,\Delta t}(x,y,t)$,
$v_i^{h,\Delta t}(s,t)$,
$w_i^{h,\Delta t}(t)$, $i=1,2,3$, as piecewise constant functions:
\begin{equation}\label{2.12}
\left\{
\begin{aligned}
p^{h,\Delta t}(t,x)
&= P_i^n,
&& (t,x) \in (x_i-\tfrac{h}{2},\,x_i+\tfrac{h}{2}] \times \bigl((n-1)\Delta t,\, n\Delta t\bigr], \\[4pt]
u^{h,\Delta t}(t,x)
&= U_i^n,
&& (t,x) \in (x_i-\tfrac{h}{2},\,x_i+\tfrac{h}{2}] \times \bigl((n-1)\Delta t,\, n\Delta t\bigr], \\[4pt]
v^{h,\Delta t}(t,x)
&= V_i^n,
&& (t,x) \in (x_i-\tfrac{h}{2},\,x_i+\tfrac{h}{2}] \times \bigl((n-1)\Delta t,\, n\Delta t\bigr], \\[4pt]
w^{h,\Delta t}(t,x)
&= W_i^n,
&& (t,x) \in (x_i-\tfrac{h}{2},\,x_i+\tfrac{h}{2}] \times \bigl((n-1)\Delta t,\, n\Delta t\bigr].
\end{aligned}
\right.
\end{equation}
Denote
$B_T := \overline{B} \times [0,T]$.
For a temporal function $\psi \in C^0[0,T]$, we define its piecewise
interpolation $I_T^{\Delta t}\psi$ in a backward way as in
Definition~\eqref{2.12}.
For a spatial function $\psi \in C^0(\overline{\Omega})$, we define its
piecewise interpolations $I_B^{h}\psi$ in a central way as in
Definition~\eqref{2.12} with $B=\Omega$.
For any $\psi \in C^0(\Omega_T)$, if we define its piecewise interpolation
\[
I_{B_T}\psi = I_T^{\Delta t} \circ I_B^{h}\psi,
\quad \text{with } B=\Omega,
\]
then we have
\begin{equation}
\label{2.13}
\|\psi - I_{B_T}\psi\|_{L^\infty(B_T)}
\longrightarrow 0,
\quad \text{as } h,\Delta t \to 0,
\text{ for } B=\Omega, \Gamma_i, \text{ or } P_i.
\end{equation}

\begin{lemma} 
For $\varphi \in C^{\infty}(\Omega_T)$, if we denote by $\square^{h, \Delta t}\varphi_i^n = \frac{\left( \sigma_i^n \right)^2}{2} \frac{(\varphi_{i+1}^n - 2 \varphi_{i}^n + \varphi_{i-1}^n)}{h^2}$, and by $\square \varphi = \partial_t \varphi + \mathcal{L} ^* \varphi = \partial_t \varphi + \frac{\left( \sigma\left( \partial_{xx} u\right) \right)^2}{2} \partial_{xx} \varphi$, we have
 $E=\square^{h, \Delta t}\varphi_i^n - \square \varphi \to 0$, when $h, \Delta t  \to 0$.
\end{lemma}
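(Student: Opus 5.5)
The lemma compares the discrete space–time operator, built from the forward time quotient together with the diffusion quotient with coefficient $(\sigma_i^n)^2$, against $\square\varphi$ evaluated at $(t_n,x_i)$. The comparison is carried out through the piecewise constant interpolations of Step 2. Since $\square^{h,\Delta t}$ as written contains only the diffusion part, I will understand it as the full discrete operator
\[
\frac{\varphi_i^{n+1}-\varphi_i^n}{\Delta t}+\frac{(\sigma_i^n)^2}{2}\,\frac{\varphi_{i+1}^n-2\varphi_i^n+\varphi_{i-1}^n}{h^2},
\]
which is the expression appearing in \eqref{eq_prf8_1}. The convergence $E\to0$ is claimed in the sense needed to pass to the limit against the measures $p^{h,\Delta t}$. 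Because $\sigma^2(V_i^n)\to\sigma^2(u_{xx})$ holds only in $L^p$ (Lemma~\ref{lem:sig_V_to_sig_uxx}) and not uniformly, I aim to show that $E\to0$ in $L^p(B_T)$ for every $1\le p<\infty$ on the compact support of $\varphi$. I also aim to isolate a part that converges in $L^\infty$.

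\textbf{Decomposition.} I split $E=E_1+E_2+E_3$, where
\[
E_1=\frac{\varphi_i^{n+1}-\varphi_i^n}{\Delta t}-\partial_t\varphi,\qquad
E_2=\frac{(\sigma_i^n)^2}{2}\Big(\delta_h^2\varphi_i^n-\partial_{xx}\varphi\Big),\qquad
E_3=\frac{(\sigma_i^n)^2-\sigma^2(\partial_{xx}u)}{2}\,\partial_{xx}\varphi .
\]
The exact quantities in these terms are evaluated on the interpolation cell. For $E_1$ and $E_2$, Taylor's theorem with $\varphi\in C_c^\infty$ gives $|E_1|\le \Delta t\,\|\partial_{tt}\varphi\|_\infty$ and $|E_2|\le \tfrac{\overline\sigma^2}{24}h^2\|\partial_x^4\varphi\|_\infty$, using $0\le\sigma^2\le\overline\sigma^2$. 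Passing from grid values to interpolated values within a cell costs a modulus of continuity of $\partial_t\varphi$ and $\partial_{xx}\varphi$, which is controlled by \eqref{2.13}. Hence $E_1,E_2\to0$ uniformly on $B_T$.

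\textbf{The term $E_3$.} This is the main obstacle. I bound $|E_3|\le\tfrac12\|\partial_{xx}\varphi\|_\infty\,|\sigma^2(v^{h,\Delta t})-\sigma^2(u_{xx})|$, and apply Lemma~\ref{lem:sig_V_to_sig_uxx} to get $E_3\to0$ in $L^p(\mathrm{supp}\,\varphi)$. Before doing so I would check that the lemma is really comparing $\sigma^2(v^{h,\Delta t})$ with $\sigma^2(u_{xx})$ on cells, and not only at grid points. This requires Theorem~\ref{thm:conv_v} to give local uniform convergence of $v^{h,\Delta t}$ to $u_{xx}$. It also requires that $\{u_{xx}=0\}$ has Lebesgue measure zero, so that the jump of $\sigma^2$ at $0$ matters only on a null set. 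If that set could have positive measure, I would fall back on the observation that at such points $\sigma^2$ multiplies a vanishing $u_{xx}$. The weak formulation, however, sees $\sigma^2$ directly, so I would state the null-set condition as an assumption.

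\textbf{Use in the proof.} Plugging $E$ back into \eqref{eq_prf8_1}, the terms $E_1$ and $E_2$ pair with $p^{h,\Delta t}$, which has unit mass by Proposition~\ref{lem:probability_preserving}, so their contribution vanishes. The term $E_3$ pairs an $L^p$-small function with a measure that is only weak$^\ast$ bounded. To control this I would use that $p_i^n$ is uniformly bounded in $L^\infty$ away from $t=0$, which follows from the discrete parabolic smoothing of the explicit scheme under \eqref{eq:CFL_w}. This gives an $L^{p'}$ bound on $p^{h,\Delta t}$ on $[\epsilon,T]\times\mathrm{supp}\,\varphi$, and Hölder's inequality finishes that part. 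The layer $[0,\epsilon]$ is handled by mass conservation and the boundedness of $\sigma^2$, giving a contribution of order $O(\epsilon)$.
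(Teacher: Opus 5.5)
Your decomposition $E=E_1+E_2+E_3$ is the one the paper's proof uses: a Taylor expansion of the forward time quotient, a Taylor expansion of $\delta_h^2\varphi$, and a coefficient-mismatch term. Reading $\square^{h,\Delta t}$ as the full discrete operator is also what that proof actually expands. The difference is in $E_3$, and there you are right and the paper is not.

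In \eqref{eq:lemma.1} the paper attaches a factor $O(h^2)$ to $\tfrac12[(\sigma_i^n)^2-\sigma^2(\partial_{xx}u)]$ and concludes convergence in the maximum norm. The expansion actually produces $\tfrac12[(\sigma_i^n)^2-\sigma^2(\partial_{xx}u)]\,\partial_{xx}\varphi$, which is $O(1)$ times the mismatch. Because $\sigma^2(\cdot)$ jumps at $0$, local uniform convergence of $v^{h,\Delta t}$ to $u_{xx}$ only forces the signs to agree where $|u_{xx}|$ exceeds the approximation error. Near $\{u_{xx}=0\}$ the mismatch can be $\pm(\overline\sigma^2-\underline\sigma^2)$ for every $h$, so $\max|E|\to0$ does not follow.

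Your version ($E_1,E_2\to0$ uniformly, $E_3\to0$ in $L^p$ via Lemma~\ref{lem:sig_V_to_sig_uxx}) is the correct statement. You are also right to make $|\{u_{xx}=0\}|=0$ an explicit hypothesis: the paper uses it silently in Step~1 of that lemma, and it is not automatic (for constant $\phi$, $u_{xx}\equiv0$). Had the paper's uniform claim been valid, pairing with the unit-mass measures $p^{h,\Delta t}$ would be trivial. Your version moves the difficulty into that pairing, which you correctly flag.

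The one step I would not accept as stated is the uniform-in-$h$ $L^\infty$ bound on $p^{h,\Delta t}$ on $[\epsilon,T]\times\mathrm{supp}\,\varphi$ ``by discrete parabolic smoothing''. Monotonicity and the CFL condition give positivity and mass conservation, not sup bounds. The forward equation $\partial_t p=\partial_{xx}(ap)$, with $a=\tfrac12\sigma^2(u_{xx})$ jumping across curves that move in $t$, is outside the reach of standard Nash--Aronson-type estimates. Writing $q=ap$ gives $\partial_t(q/a)=\partial_{xx}q$, and $\partial_t(1/a)$ is a measure.

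This is not needed for the lemma itself, but for the downstream limit you can avoid density bounds altogether:
\begin{itemize}
\item Since $u_{xx}$ is continuous, $Z=\{u_{xx}=0\}$ is closed. For every $\delta>0$ and all small $h$, $\sigma^2(v^{h,\Delta t})=\sigma^2(u_{xx})$ exactly on $\mathrm{supp}\,\varphi$ outside the $\delta$-neighbourhood $Z_\delta$, away from the time endpoints (your thin-layer argument covers those).
\item On the compact set $\overline{Z_\delta}\cap\mathrm{supp}\,\varphi$, weak$^\ast$ convergence gives $\limsup p^{h,\Delta t}(\overline{Z_\delta}\cap\mathrm{supp}\,\varphi)\le p(\overline{Z_\delta}\cap\mathrm{supp}\,\varphi)$. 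This decreases to $p(Z\cap\mathrm{supp}\,\varphi)$ as $\delta\to0$.
\item $p(Z)=0$ then follows from your null-set hypothesis. The limits of the interpolated chains are continuous martingales with quadratic-variation rate in $[\underline\sigma^2,\overline\sigma^2]$, and by Krylov's estimate their space-time occupation measures are absolutely continuous. In one space dimension this density is even in $L^2$, but for the limit only; your H\"older route would really need a discrete analogue of that estimate.
\end{itemize}
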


\begin{proof}
For the time derivative $\phi_t$, a first–order Taylor expansion yields
\[
\varphi(t_{n+1}, x_i)
= \varphi(t_{n}, x_i)
+ \partial_t \varphi(t_n, x_i)\Delta t
+ O(\Delta t^2).
\]
Hence,
\[
\frac{\varphi(t_{n+1}, x_i) - \varphi(t_{n}, x_i)}{\Delta t}
= \partial_t \varphi(t_n, x_i)
+ O(\Delta t).
\]
For the second–order spatial derivative (taking the $x$–direction as an example),
the Taylor expansions read
\[
\varphi(t_{n}, x_{i+1})
= \varphi(t_{n}, x_{i})
+ h \partial_x \varphi(t_{n}, x_{i})
+ \frac{h^2}{2} \partial_x^2 \varphi(t_{n}, x_{i})
+ \frac{h^3}{6} \partial_x^3 \varphi(t_{n}, x_{i})
+ O(h^4),
\]
\[
\varphi(t_{n}, x_{i-1})
= \varphi(t_{n}, x_{i})
- h \partial_x \varphi(t_{n}, x_{i})
+ \frac{h^2}{2} \partial_x^2 \varphi(t_{n}, x_{i})
- \frac{h^3}{6} \partial_x^3 \varphi(t_{n}, x_{i})
+ O(h^4).
\]
Adding them together, we get:
\[
\frac{\varphi(t_{n}, x_{i+1}) - 2 \varphi(t_{n}, x_{i}) + \varphi(t_{n}, x_{i-1})}{h^2} = \partial_x^2 \varphi(t_{n}, x_{i}) + O(h^2).
\]
Thus, 
\begin{equation}\label{eq:lemma.1}
\begin{aligned}
\max_{\substack{x_i\in \Omega_n, t_n \in [0,T]}}
\left| (\square \varphi)_{i}^n - (\square^{h, \Delta t}\varphi)_{i}^n \right| &= O(\Delta t) + \frac{\left( \sigma_i^n \right)^2}{2} O(h^2) + \frac{1}{2} \left[ \left( \sigma_i^n \right)^2 - \sigma^2(\partial_x^2 u) \right] O(h^2)  \to 0.
\end{aligned}
\end{equation}
\end{proof}

For the summation of interior points,
\[
\begin{aligned}
&\sum_{n=0}^{N} \sum_{i=-\infty}^\infty 
p_i^n \left[ \frac{\varphi_{i}^{n+1} - \varphi_{i}^{n}}{\Delta t} + \frac{\left( \sigma_i^n \right)^2}{2} \frac{(\varphi_{i+1}^n - 2 \varphi_{i}^n + \varphi_{i-1}^n)}{h^2} \right]h \Delta t \\
&= \int_0^T \int_\Omega   p^{h,\Delta t}(t,x) I_{\Omega_T}(\square\varphi)(t,x) \, dx dt +  \sum_{n=0}^{N} \sum_{i=-\infty}^\infty   p^{h,\Delta t}(t,x) (\square^{h,\Delta t}\varphi_i^n - \square\varphi_i^n) \, h \Delta t.
\end{aligned}
\]
With all of these notations, \eqref{eq_prf8_1} can be rewritten as, for $\varphi\in C_c^{\infty}(\Omega_T)$ 
\begin{equation}\label{eq.21}
\begin{aligned}
0 = &\int_\Omega \delta(x-0) \varphi(0,x) dx - \int_\Omega p^{h,\Delta t}(T,x) I_{\Omega}^h \varphi(T+\Delta t)dx \\
&+ \int_0^T \int_\Omega   p^{h,\Delta t}(t,x) I_{\Omega_T}(\square\varphi)(t,x) \, dx dt +  \sum_{n=0}^{N} \sum_{i=-\infty}^\infty   p^{h,\Delta t}(t,x) (\square^{h,\Delta t}\varphi_i^n - \square\varphi_i^n) \, h \Delta t.
\end{aligned}
\end{equation}

\medskip
\noindent\textbf{Step 3. Passage to the limit.}
By Lemma~\ref{lem:sig_V_to_sig_uxx} and \ref{lemma:Radon}, we have, for any fixed $T > 0$
and up to a subsequence, that, when $h, \Delta t  \rightarrow 0$,
\[
\begin{aligned}
&p^{h,\Delta t} \to p(t,x) \in M([0,T)\times \mathbb R) \quad weakly^* \; in \; M([0,T)\times \mathbb R),\\
&p^{h,T} \to p(T,x) \in M(\mathbb R) \quad weakly^* \; in \; M(\mathbb R),
\end{aligned}
\]
and
\[
\sigma^2(V_i^n)\ \longrightarrow\ \sigma^2(u_{xx}(t_n,x_i))
\quad\text{strongly in }L^p((0,T)\times {\mathbb R})\ \text{for any }1\le p<\infty.
\]
Taking the limit of \eqref{eq.21} along the above subsequence,
and using \eqref{2.13} together with \eqref{eq:lemma.1},
we obtain that, for any $\varphi \in C_c^{\infty}(\Omega_T)$,
as $h,\Delta t \to 0$,
\begin{equation}\label{eq:limit_identity}
\begin{aligned}
0
=& \int_\Omega \delta(x-0) \varphi(0,x) dx
   - \int_{\Omega} p(T,x)\varphi(T,x)\,dx  \\
&+ \int_0^T \int_{\Omega}
   p(t,x)\Big(
      \partial_t \varphi(t,x)
      + \frac{\sigma^2(u_{xx}(t,x))}{2}
        \partial_{xx}\varphi(t,x)
   \Big)\,dx\,dt .
\end{aligned}
\end{equation}
Here the convergence of the terminal term follows from
\[
p^{h,\Delta t}(T,\cdot)\stackrel{*}{\rightharpoonup} p(T,\cdot)
\quad\text{in } M(\mathbb R),
\]

\medskip

On the other hand, by the definition of weak solution
(see Definition~\ref{eq:weak_form}),
for any $\xi \in C_c^{\infty}(\Omega_T)$ the weak formulation reads
\begin{equation}\label{eq:weak_form_def}
\begin{aligned}
0
=& \langle \delta(x-x_0), \xi(0,x) \rangle
   - \int_{\Omega} p(T,x)\xi(T,x)\,dx \\
&+ \int_0^T \int_{\Omega}
   p(t,x)\Big(
      \partial_t \xi(t,x)
      + \frac{\sigma^2(u_{xx}(t,x))}{2}
        \partial_{xx}\xi(t,x)
   \Big)\,dx\,dt .
\end{aligned}
\end{equation}
Comparing \eqref{eq:limit_identity} and \eqref{eq:weak_form_def},
we conclude that the limit measure $p(t,x)$ satisfies
the weak formulation of the Fokker–Planck equation with initial datum
$\delta_{x_0}$.
Therefore $p$ is a weak solution of problem \eqref{eq:p-Fokker-plank}.

\medskip

Finally, by the {uniqueness} of weak solutions to
\eqref{eq:p-Fokker-plank},
the whole sequence $p^{h,\Delta t}$ converges
weakly-$^*$ towards $p$ in
$M([0,T]\times\mathbb R)$.
Since $T>0$ is arbitrary, the convergence holds on every finite time interval.
This completes the proof of Theorem~\ref{thm:weakstar_FP}.


\end{proof}

\end{document}